\newcommand{\sF}{\mathscr{F}}
\newcommand{\B}{\mathscr{B}}
\newcommand{\U}{\mathscr{U}}
\newcommand{\E}{\mathscr{E}}
\newcommand{\bE}{\bar{\mathscr{E}}}
\newcommand{\bS}{\bar{\mathscr{S}}}
\newcommand{\sS}{{\mathscr{S}}}
\newcommand{\sG}{{\mathscr{G}}}
\newcommand{\eF}{{\EuS{F}}}
\newcommand{\Z}{\mathcal{Z}}
\renewcommand{\I}{\mathcal{I}}
\newcommand{\G}{\mathcal{G}}
\renewcommand{\L}{\mathcal{L}}
\renewcommand{\Q}{\mathcal{Q}}
\renewcommand{\V}{\mathcal{V}}
\newcommand{\rD}{\mathrm{D}}
\newcommand{\rd}{\mathrm{d}}
\newcommand{\e}{{\mathrm{e}\,}}
\newcommand{\snr}{{\textsf{snr}}}
\newcommand{\zf}{{\textsf{zf}}}
\newcommand{\bpsk}{{\textsf{bpsk}}}
\newcommand{\qpsk}{{\textsf{qpsk}}}
\newcommand{\crqpsk}{{\textsf{cr-qpsk}}}
\newcommand{\rs}{{\textsf{rs}}}
\newcommand{\rsb}{{\textsf{rsb1}}}
\newcommand{\sFyz}{\gimel_u(y,z)}
\newcommand{\tsFyz}{\tilde\gimel_u(y,z)}
\newcommand{\sFypz}{\gimel_u(\tilde y,z)}
\newcommand{\tsFyzup}{\tilde\gimel_{\upsilon}(y,z)}
\newcommand{\im}{{\textsf{im}}}
\newcommand{\re}{{\textsf{re}}}
\newcommand{\tot}{{\textsf{tot}}}
\newcommand{\dpc}{{\textsf{dpc}}}
\newcommand{\gthp}{{\textsf{gthp}}}
\DeclareMathOperator{\Unif}{\textsf{Unif}}
\begin{document}


\title{Vector Precoding for Gaussian MIMO Broadcast Channels: Impact of Replica Symmetry Breaking}
\author{Benjamin M. Zaidel   \and Ralf R. M\"{u}ller \and Aris L. Moustakas \and Rodrigo de Miguel
  \thanks{This work was supported in part by the Research Council of Norway under Grant No.\ 171133/V30, and by the European Commission under Grant ``Newcom++'' No. EU-IST-NoE-FP6-2007-216715. \newline
  \indent This work was presented in part at the 46$^{\rm th}$ Annual Allerton Conference on Communication, Control \& Computing, Monticello, IL, U.S.A, Sep.\ 2008 and the 10$^{\rm th}$ International Symposium on Spread Spectrum Techniques and Applications (ISSSTA), Bologna, Italy, Aug.\ 2008.\newline
\indent Benjamin M.\ Zaidel,  Ralf R.\ M\"{u}ller and Rodrigo de Miguel are with the Department of Electronics and Telecommunications, The Norwegian University of Science and Technology (NTNU), Trondheim, Norway. e-mail: zaidel@iet.ntnu.no, rodrigo@iet.ntnu.no, ralf@iet.ntnu.no.  Aris L.\ Moustakas is with the Physics Department, National and Kapodistrian University of Athens, Athens, Greece. e-mail: arislm@phys.uoa.gr}}

\maketitle

\begin{abstract}
The so-called ``replica method" of statistical physics is employed for the large system analysis of vector precoding for the Gaussian multiple-input multiple-output (MIMO) broadcast channel. 
The transmitter is assumed to comprise a linear front-end combined with nonlinear precoding, that minimizes the front-end imposed transmit energy penalty. 
Focusing on discrete complex input alphabets, the energy penalty is minimized by relaxing the input alphabet to a larger alphabet set prior to precoding.
For the common discrete lattice-based relaxation, the problem is found to violate the assumption of {\em replica symmetry} and a {\em replica symmetry breaking} ansatz is taken.
The limiting empirical distribution of the precoder's output, as well as the limiting energy penalty, are derived for one-step replica symmetry breaking. 
For convex relaxations, replica symmetry is found to hold and corresponding results are obtained for comparison. 
Particularizing to a ``zero-forcing" (ZF) linear front-end, and non-cooperative users, a decoupling result is derived according to which the channel observed by each of the individual receivers can be effectively characterized by the Markov chain $u$--$x$--$y$, where $u$, $x$, and $y$ are the channel input, the equivalent precoder output, and the channel output, respectively. 
For discrete lattice-based alphabet relaxation, the impact of replica symmetry breaking is demonstrated for the energy penalty at the transmitter. 
An analysis of spectral efficiency is provided to compare discrete lattice-based relaxations against convex relaxations, as well as linear ZF and Tomlinson-Harashima precoding (THP). Focusing on quaternary phase shift-keying (QPSK), significant performance gains of both lattice and convex relaxations are revealed compared to linear ZF precoding, for medium to high signal-to-noise ratios (SNRs). THP is shown to be outperformed as well. In addition, comparing certain lattice-based relaxations for QPSK against a convex counterpart, the latter is found to be superior for low and high SNRs but slightly inferior for medium SNRs in terms of spectral efficiency.
\end{abstract}

\section{Introduction}
\label{introduction}

The multiple-input multiple-output (MIMO) Gaussian broadcast channel (GBC) is the focus of many research activities, addressing the growing demand for higher throughput wireless systems, and in particular the increasing use of multiple-antenna systems in essentially all modern wireless standards (see, e.g., \cite{Caire-Shamai-Steinberg-Weingarten-2005,Astely-et-l-COMM-MAG-LTE-Review,Li-Lin-Zhang-Roh-COMM-MAG-WiMAX-Review-2009}). The capacity region of the MIMO GBC is the \emph{dirty paper coding}  (DPC) \cite{Costa-83} capacity region \cite{Weingarten-Steinberg-Shamai-2006}, and several attempts have been made in recent years to propose practically oriented approaches for implementing DPC, as e.g., \cite{Zamir-Shamai-Erez-2002,Erez-ten-Brink-2005,Bennatan-Brustein-Caire-Shamai-2006}. DPC still remains, however, a difficult, computationally demanding task, which motivates the search for more practical (suboptimum) precoding alternatives.

Since linear precoding, such as zero-forcing (ZF), leads to reduced performance (especially when the channel is ill-conditioned), much attention has been given to nonlinear precoding schemes.
In particular, lattice-based precoding approaches have often been investigated, as for example the \emph{vector perturbation} approach suggested in \cite{Hochwald-Peel-Swindlehurst-2005} (see also \cite{Fischer-book-2002} for a general framework). The vector perturbation approach  was inspired by the idea of Tomlinson-Harashima precoding (THP) \cite{Tomlinson-1971} \cite{Harashima-Miyakawa-1972}. In this scheme, a scaled \emph{complex integer} vector is added to each data vector, chosen to minimize the energy penalty imposed by a linear zero-forcing (ZF) front-end. A modulo function is employed at the receivers, uniquely determining the transmitted symbols in the absence of noise.  An analogous precoding scheme based on a linear minimum-mean-squared-error (MMSE) front-end was considered in \cite{Schmidt-Joham-Utschick-ETT-2008}.
An approach based on optimizing mutual information was taken in \cite{Payaro-Palomar-ISIT-2009}.
Vector perturbation is however still complex as it involves the solution of an NP-hard integer-lattice least squares problem (commonly implemented using the sphere-decoding algorithm \cite{Agrell-Eriksson-Vardy-Zeger-2002}). Addressing the complexity aspect of the method, related approaches can also be found, e.g., in \cite{Windpassinger-Fischer-Huber-COMM-2004,Taherzadeh-Mobasher-Khandani-2007} (see references therein for additional literature in this framework), where lattice-basis reduction techniques are employed.

The analytical performance analysis of such nonlinear precoding schemes is not at all trivial. It is common to consider, therefore, uncoded symbol error probabilities (via simulations), asymptotic capacity scaling laws and diversity orders (the asymptotic slope of the error probability in the high signal-to-noise ratio (SNR) regime), or to employ Monte-Carlo simulations to obtain information-theoretically achievable rates (see e.g., \cite{Hochwald-Peel-Swindlehurst-2005,Schmidt-Joham-Utschick-ETT-2008,Windpassinger-Fischer-Huber-COMM-2004,Taherzadeh-Mobasher-Khandani-2007,Windpassinger-Fischer-Vencel-Huber-TWC-2004,Stojnic-Vikalo-Hassibi-ICASSP-2006}, and also \cite{Boccardi-Tosato-Caire-IZS-2006} for a semi-tutorial review in this respect). The energy penalty induced by the linear front-end is another commonly addressed performance measure.
 A lower bound on the energy penalty based on lattice theoretic arguments can be found in \cite{Ryan-Collings-Clarkson-Heath-TRANS-COMM-2009}.
The optimum constellation shaping for a ZF front-end (in terms of the energy penalty), allowing for data to be independently decoded by the users, is investigated in  \cite{Mobasher-Khandani-CWIT-2007}, where a selective mapping technique is introduced based on random coding arguments, implementable using nested lattice coding in a trellis precoding framework (see also \cite{Mobasher-Maddah-Ali-Khandani-ISIT-2009} for a more recent study on selective mapping).

The energy penalty minimization was also investigated in \cite{Muller-Guo-Moustakas-JSAC-2008} where another nonlinear precoding approach in this framework was recently proposed. The transmitter comprises a linear front-end combined with nonlinear precoding. The nonlinear part relies on relaxation of the transmitted symbols' alphabets to larger alphabet sets. The idea is to optimize the vector of transmitted symbols over the extended alphabet sets, so as to minimize the energy penalty imposed by the linear front-end, which is essentially the idea behind vector perturbation. However, a notable feature of this precoding scheme is that it can also  be combined with \emph{convex} extended alphabet sets (in contrast to \cite{Hochwald-Peel-Swindlehurst-2005}), lending themselves to \emph{efficient} practical energy minimization algorithms. It can be considered in this sense as a generalization of the vector perturbation scheme (see also \cite{miguel-SP-09} in this respect).

Another interesting contribution of  \cite{Muller-Guo-Moustakas-JSAC-2008} is the harnessing of statistical physics tools for the analysis of the nonlinear precoding scheme, while considering the large system limit in which both the number of users and the number of transmit antennas grow large, while their ratio goes to some finite constant. One of the main objectives of statistical physics is the quantitative description of macroscopic properties of many-body systems while starting from the fundamental interactions between microscopic elements. In this framework, a general tool for the analysis of random (``disordered'') systems, referred to as the ``replica method", was originally invented for the analysis of spin glasses. The latter term describes a spin orientation that has similarity to the type of location of atoms in glasses, which are random in space but frozen in time \cite{Nishimori-Book-2001}. However, the replica method turns out to have a much wider range of applications (see, e.g., \cite{Nishimori-Book-2001,Mezard-Montanari-Book-Final-2009} for recent tutorial manuscripts). In recent years, in particular following Tanaka's pioneering work \cite{Tanaka-2002}, the method  has been successfully applied to various problems in wireless communications. The replica method has also been recognized by now as an important tool for information-theoretic analyses in cases where ``conventional" random matrix theory does not apply. Although the replica method is heuristic in nature, extensive simulations and exact analytical results in the literature suggest that the replica analysis generally yields excellent approximations in many cases of interest (see again \cite{Nishimori-Book-2001,Mezard-Montanari-Book-Final-2009}, and also, e.g., \cite{Tanaka-2002,Muller-Gerstacker-2004,Guo-Verdu-2005,Guo-Tanaka-Book-Chapter-2008} and references therein).

The replica analysis usually employs a number of underlying assumptions regarding the behavior of the quantities in concern in the large-system limit. One such fundamental assumption is the ``self-averaging" property, which relies on the expectation that macroscopic properties of large random systems converge to deterministic values as the system dimensions grow large. 
Self-averaging is a property of most physical systems with large (or infinite) degrees of freedom, and is the result of the high probability of occurrence of typical events or samples. Nevertheless, in the case of glassy systems this property has been not trivial to prove, since the underlying randomness of the interactions makes the system inherently non-ergodic.
The self-averaging property for the conventional so-called Sherrington-Kirkpatrick (SK) spin glass model \cite{Sherrington-Kirkpatrick-1972}  was first proven in \cite{pastur_shcherbina:91}. 
More recently, \cite{Guerra2002_ThermodynamicLimitSG,Guerra2003_infinite_volume} generalized it to a more general class of  spin glass models and, using an ingenious method, showed that the averages over the disorder actually do converge in the large system limit. 
Even more recently, code-division multiple-access (CDMA) systems were shown to be self-averaging \cite{Korada2010}.
Although the particular system we study is not explicitly covered by the above analysis, it can readily be proved to be self-averaging using the same method. For the sake of space, we will not cover the proof here, however, we will refer to self-averaging as a fundamental property of the large system limit rather than an assumption.
Another common assumption in replica analyses 
is that of \emph{replica symmetry} (RS) (see, e.g., \cite{Muller-Guo-Moustakas-JSAC-2008,Tanaka-2002,Guo-Verdu-2005}),
according to which it is assumed that the crosscorrelations between replicated microscopic system configurations are independent of the replica indices. The RS assumption, however, is known to produce incorrect conclusions for certain physical quantities such as, e.g., the minimum energy configuration. This led to the development of the \emph{replica symmetry breaking} (RSB) theory \cite{Nishimori-Book-2001,Talagrand-2006}.
Recently, the full RSB solution of the SK spin glass model, first proposed in \cite{Parisi-1980}, was shown to be an upper bound to the minimum energy configuration \cite{Guerra-2003} and later to be the exact solution of the model \cite{Talagrand-2006}. Apart from its general seminal importance, it is profoundly relevant in the context of vector precoding because the SK-model is a particular case of the more general models discussed in \cite{Muller-Guo-Moustakas-JSAC-2008} and in the sequel.

In this paper we consider a communication system setting in which RSB indeed occurs, and demonstrate the significant impact of the RSB treatment on the validity of the approximations produced by the replica analysis.
We focus here on a wireless MIMO broadcast channel (BC) setting, where the transmitter has $N$ transmit antennas and the $K$ users have single receive antennas. Full channel state information (CSI) is assumed available at the transmitter, while the receivers are cognizant of their own channels only (more on this later). No user-cooperation of any kind is assumed. The received signals are embedded in additive white Gaussian noise (AWGN).
The precoding approach considered in \cite{Muller-Guo-Moustakas-JSAC-2008} is revisited.
Note that the focus in  \cite{Muller-Guo-Moustakas-JSAC-2008} is mainly on presenting the method, and on the derivation of the energy penalty in the asymptotic regime, in which both the number of transmit antennas $N$ and the number of users $K$ go to infinity, while $K/N\rightarrow\alpha < \infty$ (commonly referred to as the \emph{system load}). Furthermore, the analysis in \cite{Muller-Guo-Moustakas-JSAC-2008} is based on the RS assumption. It turns out however that the RS assumption can only produce valid asymptotic approximations in this setting when the extended alphabets are convex sets (see, e.g., supportive simulation results in \cite{miguel-SP-09}). In contrast, for the non-convex alphabets considered in \cite{Muller-Guo-Moustakas-JSAC-2008} these approximations turn out to be rather loose, and produce overoptimistic results, especially as the system load gets close to unity. This behavior can be readily observed by comparing the RS based energy penalty to the asymptotic lower bound of \cite{Ryan-Collings-Clarkson-Heath-TRANS-COMM-2009}.

Here, an alternative analysis is provided based on what is referred to in the statistical physics literature as the \emph{one-step RSB (1RSB) ansatz}, which allows one to search for more general solutions than the RS ansatz, but does not cover the full complexity of solutions of full RSB. 
In addition to an energy penalty analysis, analogous to the one in \cite{Muller-Guo-Moustakas-JSAC-2008}, we complement the results by providing an information-theoretic perspective of the proposed precoding approach. Coded transmissions and achievable throughputs are considered.
The employed performance measure is the normalized spectral efficiency, defined as the total number of bits/sec/Hz \emph{per transmit antenna} that can be transmitted arbitrarily reliably through the broadcast channel.
The limiting marginal conditional distribution of the nonlinear precoder's output which is required for the calculation of spectral efficiency, as well as the limiting energy penalty, are \emph{analytically formulated}. Focusing on a ZF front-end, the spectral efficiency is expressed via the input-output mutual information of the equivalent single-user channel observed by each of the receivers. 
The analysis is applied next to a particular family of discrete extended alphabet sets (following \cite{Muller-Guo-Moustakas-JSAC-2008}), focusing on a QPSK input, demonstrating the RSB phenomenon. To complete the analysis we repeat the derivations while employing the RS ansatz, which is, as said, adequate for convex relaxation schemes, and the results are then applied to a convex alphabet example \cite{Muller-Guo-Moustakas-JSAC-2008}. For both extended alphabets, numerical spectral efficiency results  indicate \emph{significant} performance enhancement over linear ZF preprocessing for medium to high SNRs. Furthermore, performance enhancement is also revealed compared to a generalized THP approach (which is a popular practical nonlinear precoding alternative for such settings).
Comparison of the two types of extended alphabet examples leads to interesting conclusions regarding the performance vs.\ complexity tradeoff of precoding schemes of the kind considered here.

The remainder of this paper is organized as follows. Section \ref{sec: System Model} describes the system model. Section \ref{sec: Replica Analysis} provides an outline of the replica analysis and includes some general results. In particular, it clarifies the concept of RSB which later results are based upon. 
In order to analyze the mutual information and later the trade-off between spectral and power efficiency of various precoding schemes, we need to characterize the limiting conditional distribution of the precoder output. This task is solved in Section \ref{sec: Limiting Characterization of the Precoder Output} providing a set of nonlinear equations whose solutions characterize the desired distributions. Section \ref{sec: Zero-Forcing Front-End} particularizes to the ZF front-end and shows that the channel model can be represented as an equivalent concatenated single-user channel. Then, it derives the spectral efficiency of this equivalent concatenated channel. 
Section \ref{sec: Replica Symmetry Breaking Example} particularizes the results of the previous sections to a discrete lattice-based alphabet relaxation of QPSK. Numerical solutions of the analytical results are provided. Those based on RSB are shown to match simulation results while those based on RS are demonstrated to fail. Section \ref{sec: A Replica Symmetric Example} is the corresponding counterpart to Section \ref{sec: Replica Symmetry Breaking Example} for convex relaxation. Unlike Section \ref{sec: Replica Symmetry Breaking Example}, it finds the RS ansatz to provide accurate approximations. 
Section \ref{sec: Spectral Efficiency Comparison} presents a comparative analysis of the spectral efficiency of the two alphabet relaxation schemes against some other precoding approaches. Finally, Section \ref{sec: Concluding Remarks} ends this paper with some concluding remarks.

\section{System Model}\label{sec: System Model}

Consider the following Gaussian MIMO broadcast channel
\begin{equation}\label{eq: Basic System Model}
\vct{r}=\Mat{H}\vct{t} + \vct{n} 
\end{equation}
where $\vct{r}_{[K\times 1]}$ is the vector of received signals, $\Mat{H} _{[K\times N]} $ is the (random) complex channel transfer matrix, assumed to be of unit expected row norm, $\vct{t}_{[N\times 1]}$  is the vector of transmitted signals, and $\vct{n}_{[K\times 1]}$ is the vector of i.i.d.\ zero mean proper complex AWGNs at the users' receivers. We denote the noises' spectral levels by $\sigma^2$ so that $\vct{n}\sim \N_c(\vct{0}, \sigma^2 \Mat{I})$.

\begin{figure}[!h]
\centering
\includegraphics[scale=0.35]{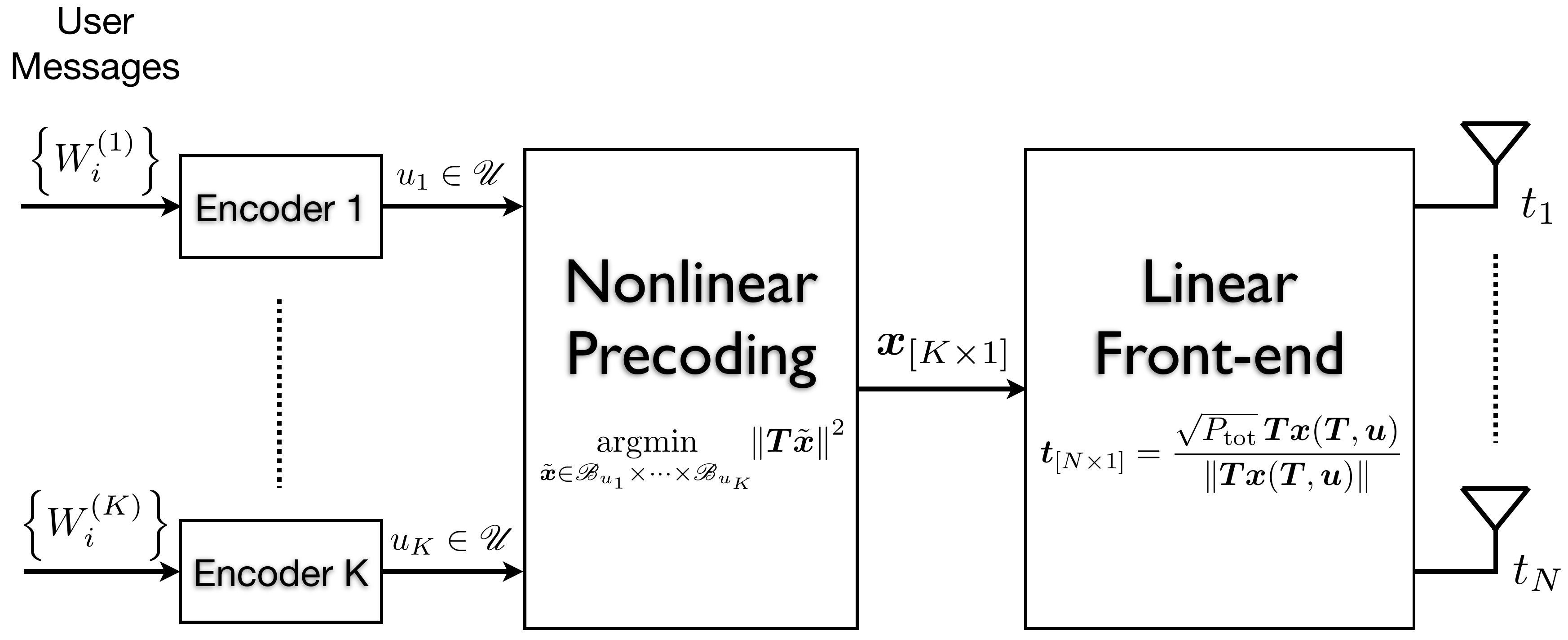}
\caption{Block diagram of the vector precoding scheme.}
\label{fig: System model}
\end{figure}

The precoding process at the transmitter is depicted in Figure \ref{fig: System model}. It is assumed that the users' messages are independently encoded, and that the encoders produce coded symbols $\set{u_k}_{k=1}^K$ taken from some \emph{discrete} alphabet $\U$. These symbols are treated as random variables, independent across users, and subject to the identical underlying discrete probability $P_U(\tilde{u})$, $\tilde{u}\in\U$. We use henceforth for convenience (as shall be made clear in Section \ref{sec: Zero-Forcing Front-End}), the following probability density function (pdf) formulation
\begin{equation}\label{eq: Definition of dP_u}
\rd F_U(u) \triangleq f_U(u) \, \rd u \triangleq \sum_{\tilde{u} \in \U} P_U(\tilde{u}) \delta(u-\tilde{u})\,  \rd u \quad .
\end{equation}
Let $\vct{u}_{[K\times 1]}$ denote the vector of the encoders' outputs, i.e., $\vct{u}=[u_1, \dots,u_K]^T \in \U^K$. The vector $\vct{u}$ is the input to a nonlinear precoding block that minimizes the energy penalty of the precoder through input alphabet relaxation (see below), and outputs a $K \times 1$ vector $\vct{x}$. The vector $\vct{x}$ is then taken as input to the linear front-end block where it is multiplied by the linear front-end matrix $\Mat{T}_{[N \times K]}$, which is, in general, a function of the channel transfer matrix $\Mat{H}$ (note that $\vct{x}$ depends on $\Mat{T}$ and, hence, we can use the functional notation $\vct{x}(\Mat{T},\vct{u})$). The result is then normalized so that the actually transmitted vector $\vct{t}$ satisfies an  instantaneous \emph{total} power (energy per symbol) constraint $P_\tot$, i.e.,
\begin{equation}
\label{eq: Relation of transmitted vector to the vector of coded symbols}
\vct{t}=\sqrt{P_\tot} \, \frac{\Mat{T}\vct{x}(\Mat{T},\vct{u})}{\norm{\Mat{T}\vct{x}(\Mat{T},\vct{u})}} \triangleq \sqrt{\frac{{P_\tot}}{{\mathscr{E}^\tot(\Mat{T},\vct{x})}}} \, \Mat{T}\vct{x}(\Mat{T},\vct{u}) \quad,
\end{equation}
where $\mathscr{E}^\tot(\Mat{T},\vct{x})$ denotes the energy penalty induced by the precoding matrix $\Mat{T}$, and the particular choice of $\vct{x}$, as well as the average symbol energy of the underlying alphabet $\U$ (the explicit dependence on the arguments is omitted henceforth for simplicity). Denoting by $P$ the individual power constraint \emph{per user} (taken as equal for all), so that $P_\tot=K P$, we define  the \emph{transmit} SNR as 
\begin{equation}
\label{eq: Definition of snr}
\snr \triangleq \frac{P_\tot}{K \sigma^2} = \frac{P}{\sigma^2} \quad.
\end{equation}
The energy penalty minimization is performed in the following way. The original alphabet $\U$ is extended (``relaxed") to an alphabet $\B=\bigcup_{\tilde{u}\in\U} \B_{\tilde{u}}$, where the sets $\set{\B_{\tilde{u}}}$ are \emph{disjoint}. The idea here is that every coded symbol $u\in \U$ can be represented \emph{without ambiguity} using any element of $\B_u$  \cite{Muller-Guo-Moustakas-JSAC-2008}\footnote{
For practical purposes one would also like to impose additional properties such as a certain minimum distance, although, in principle, the underlying necessary condition is to avoid ambiguity. Note also that the normalization in \eqref{eq: Relation of transmitted vector to the vector of coded symbols} makes the system insensitive to any scaling of the underlying alphabet $\U$.
}. 
The vector $\vct{x}=[x_1,\dots,x_K]^T$ thus satisfies
\begin{equation}
\label{eq: Definition of x as argmin}
\vct{x} = \underset{\vct{\tilde{x}} \in \B_{u_1}\times \cdots \times \B_{u_K}}{\argmin} \norm{\Mat{T} \vct{\tilde{x}}}^2 \quad .
\end{equation}

We note at this point that as an alternative to the normalization taken in \eqref{eq: Relation of transmitted vector to the vector of coded symbols},  ensuring an \emph{instantaneous} transmit power constraint,  a weaker \emph{average} transmit power constraint can be applied, by simply replacing $\mathscr{E}^\tot$ with $E\set{\mathscr{E}^\tot}$, where $E\set{\cdot}$ denotes expectation. However, since we later concentrate on the energy penalty \emph{per symbol},
\begin{equation}
\label{eq: Definition of energy penalty per symbol}
\bE \triangleq \frac{\mathscr{E}^\tot}{K} \quad ,
\end{equation}
and in view of the self-averaging property of the large system limit (as shall be made clear in the following), the two types of energy constraints yield the same asymptotic results. We thus focus for convenience throughout this paper on the instantaneous power constraint (as implied by \eqref{eq: Relation of transmitted vector to the vector of coded symbols}). Note also that in order to differentiate between the energy penalty induced by the precoding scheme, and the effect of the underlying symbol energy of the input alphabet $\U$, one can alternatively represent the results in terms of what we refer to here as the \emph{precoding efficiency}, defined through
\begin{equation}
\label{eq: Factorization of the energy penalty with sig-u}
\zeta \triangleq \frac{\bE}{\sigma_u^2} \quad ,
\end{equation}
where $\sigma_u^2 = E\{\abs{u}^2\}$ (with the expectation taken with respect to \eqref{eq: Definition of dP_u}). 

\section{Outline of the Replica Analysis}\label{sec: Replica Analysis}

In the following we describe the main ideas behind the replica analysis of the problem in hand, and provide a heuristic outline of the approach taken to derive the main results of this paper. The reader is referred to tutorial manuscripts such as \cite{Nishimori-Book-2001,Mezard-Montanari-Book-Final-2009,Guo-Tanaka-Book-Chapter-2008} for an elaborated background on the replica analysis. The fully detailed proofs are deferred to the appendices.

We start here by focusing on the energy penalty, and note that the task of the nonlinear precoding block at the transmitter (see Figure \ref{fig: System model}) can be described as follows. Its task is equivalent to the minimization of an objective function (called the {\em Hamiltonian} in physics literature) having the quadratic form \begin{equation}\label{eq: General definition of the minimization objective function}
\H(\vct{x}) =   \vct{x}^\dag \Mat{J} \vct{x} \quad,
\end{equation}
with $(\cdot)^\dag$ denoting transpose conjugation and $\Mat{J}$ being a random matrix of dimensions $K \times K$.
Thus, the minimum energy penalty per symbol can be expressed as 
\begin{equation}\label{eq: min_energy_penalty}
\frac{1}{K} \min_{\vct{x}\in \B_{\vct{u}}} \H(\vct{x})  \quad ,
\end{equation}
where we use the shortened notation $\B_{\vct{u}} \triangleq \B_{u_1}\times \cdots \times \B_{u_K}$.
 Note also that to comply with \eqref{eq: Definition of x as argmin} one should take $\Mat{J} = \Mat{T}^\dag\Mat{T}$, however since the results derived in the sequel hold, at least in part, for a more general class of matrices, we retain the formulation as in \eqref{eq: General definition of the minimization objective function}.

To calculate the minimum of the objective function as defined in (\ref{eq: General definition of the minimization objective function}), it is convenient to introduce some notions from statistical physics (see, e.g., \cite{Guo-Tanaka-Book-Chapter-2008}). In particular, we define a discrete probability distribution on the set of state vectors $\set{\vct{x}}$, namely the \emph{Boltzmann distribution},  as
\begin{equation}\label{eq: Boltzmann distribution for the problem in hand}
P_{\mathcal{B}}(\vct{x}) = \frac{1}{\Z} e^{-\beta \H(\vct{x})} 
\end{equation}
where the parameter $\beta>0$ is referred to as the {\em inverse temperature} $\beta=1/T$, while the normalization factor $\Z$ is the so-called {\em partition function}, which is defined as
\begin{equation}\label{eq: definition of Z}
\Z = \sum_{\vct{x}\in \B_{\vct{u}}} e^{-\beta \H(\vct{x})} \ .
\end{equation}
The \emph{energy} of the system is given by
\begin{equation}
\label{eq: Thermodynamics definition of energy1}
\mathcal{E} = \sum_{\vct{x}\in \B_{\vct{u}}} P_{\mathcal{B}}(\vct{x}) \H(\vct{x}) \quad ,
\end{equation}
and the \emph{entropy} (disorder) is defined as
\begin{equation}
\label{eq: Thermodynamics definition of entropy1}
\mathcal{S} = - \sum_{\vct{x}\in \B_{\vct{u}}}P_{\mathcal{B}}(\vct{x}) \log P_{\mathcal{B}}(\vct{x}) \ .
\end{equation}
The definitions above hold for both discrete and continuous alphabets $\B_{\vct u}$. The only difference is that for continuous alphabets the sums over $\vct x\in  \B_{\vct{u}}$ are replaced by integrals.

At thermal equilibrium, the energy of the system is preserved, while the second law of thermodynamics states that the entropy is the maximum possible. This is equivalent to minimizing the \emph{free energy} of the system
\begin{equation}
\label{eq: Thermodynamics definition of the free energy }
\mathcal{F}  \triangleq \mathcal{E} - \frac{\mathcal{S}}\beta \quad,
\end{equation}
where $\beta$, the inverse temperature, is in fact the Lagrange multiplier in the maximization of \eqref{eq: Thermodynamics definition of entropy1}, subject to the mean energy constraint. At equilibrium, the \emph{free energy} can be expressed as
\begin{equation}
\label{eq: Thermodynamics free energy at equilibrium}
\mathcal{F} = - \frac{1}{\beta} \log \Z \quad.
\end{equation}
Note that from Lagrangian duality the Boltzmann distribution \eqref{eq: Boltzmann distribution for the problem in hand} is also the solution to the problem of minimizing the energy for a given entropy.

All mean thermodynamic quantities can now be derived directly from the free energy. In particular, the \emph{energy} of the system is
\begin{eqnarray}
\label{eq: Thermodynamics definition of energy}
\mathcal{E} = \frac{{\rm d}(\beta \mathcal{F}(\beta))}{{\rm d}\beta} \quad ,
\end{eqnarray}
while its thermodynamic \emph{entropy} (disorder) is
\begin{eqnarray}
\label{eq: Thermodynamics definition of entropy}
\mathcal{S} =\beta^2 \frac{\rm d \mathcal{F}(\beta)}{\rm d \beta} \quad .
\end{eqnarray}
In addition to the above quantities we can use the  free energy and the partition function to obtain the empirical joint distribution of the precoder input $u$ and output $x$, which is defined for general $\beta$ as
\begin{eqnarray}\label{eq: joint empirical definition}
 P^{(K)}_{X,U}(\xi,\upsilon)=\frac{1}{K} \sum_{\vct{x}\in \B_{\vct{u}}}P_{\mathcal{B}}(\vct{x}) \sum_{k=1}^K  1\left\{(x_k,u_k)=(\xi,\upsilon)\right\} \quad .
\end{eqnarray}
Eqs.\ \eqref{eq: Boltzmann distribution for the problem in hand} to \eqref{eq: joint empirical definition} will be useful in deriving some of the results presented in the sequel.

The rationale behind the introduction of the Boltzmann distribution is that as $\beta\rightarrow \infty$, the partition function becomes dominated by the terms corresponding to the \emph{minimum} energy. Hence, taking the logarithm and further normalizing with respect to $\beta$, one gets the desired limiting quantity (energy, entropy or empirical distribution) at the minimum energy subspace of $\B_{\vct{u}}$. Note that even if the energy minimizing vector is not unique, or in fact even if the number of such vectors is exponential in $K$, one still gets the desired quantity when taking the limit $\beta \rightarrow \infty$. 

It is crucial to point out that in the above summation over the set of state-vectors $\B_{\vct{u}}$, both the input vector $\vct{u}$ and the matrix $\Mat{J}$ are {\em fixed}. These random variables are called {\em quenched}. Therefore, all the above manipulations still do not alleviate the difficulty of calculating the desired quantities. In particular, the main difficulty comes from the free energy being a random variable itself, which depends on the particular realizations of  $\Mat{J}$ and $\vct u$. 
As discussed in Section~\ref{introduction},  the proofs of the self-averaging property of the SK-model in \cite{pastur_shcherbina:91,Guerra2002_ThermodynamicLimitSG,Guerra2003_infinite_volume,Korada2010} can be generalized to apply to the form of $\H(\vct{x})$ analyzed here. This means that the free energy converges in probability at the asymptotic limit to a non-random quantity, i.e.,
\begin{equation}\label{eq: Convergence in probability of the free energy}
\lim_{K\rightarrow\infty} \Pr\left( \frac{1}{K} \left|\mathcal{F} - E\set{\mathcal{F}}\right| > \epsilon \right) = 0 \quad \forall \epsilon > 0 \quad ,
\end{equation}
where the expectation $E\{\cdot\}$ is over all realizations of $\Mat{J}$ and $\vct u$. As a result, all quantities that can be obtained from the free energy in an analytic manner, e.g., by differentiation of a parameter, are also self-averaging. 
The empirical joint distribution of the precoder input and output converges to a non-random distribution which is expressed by (\ref{eq: joint empirical as a function of partial derivative of log Z-h}). This self-averaging property makes the problem more straightforward to tackle, since we may now hope to get analytic results for the average of the free energy and its derivatives.

With that in mind, the limiting energy penalty (per symbol) can be represented as
\begin{equation}\label{eq: Representation of the energy penalty through F}
\begin{aligned}
\bE &= \lim\limits_{K\to\infty}\frac{1}{K} \min_{\vct{x}\in \B_{\vct{u}}} \vct{x}^\dag \Mat{J} \vct{x} =
  - \lim\limits_{K\to\infty}  \lim_{\beta \rightarrow \infty} \frac{1}{\beta K} E\set{\log \sum_{\vct{x} \in \B_{\vct{u}}} \e^{-\beta \vct{x}^\dag\Mat{J}\vct{x} }} \\ 
  &
  =  \lim\limits_{K\to\infty} \lim_{\beta \rightarrow \infty} E\set{\frac{\mathcal{F}(\beta)}K}\quad .
\end{aligned}
\end{equation}
To obtain the empirical joint distribution $P_{X,U}$, we follow a technique very common in the physics literature \cite{Mezard-Montanari-Book-Final-2009}, and 
introduce a dummy variable $h\in \mathbb R$, as well as the function 
\begin{equation}
\label{defsumV}
V(h,\xi,\upsilon,\vct x,\vct u) = -h \sum_{k=1}^K  1\left\{(x_k,u_k)=(\xi,\upsilon)\right\} \quad.
\end{equation}
If we add this term to $\H(\vct{x})$ in the exponent, the partition function gets modified to
\begin{equation}\label{eq: 1RSB: Definition of script-Z1}
\Z({h})= \sum_{\vct{x} \in \B_{\vct{u}}} \e^{-\beta \left(\H(\vct{x})+V(h)\right)}
\end{equation}
where we have dropped the explicit dependence of $\Z(h)$ and $V(h)$ on $\upsilon$ and $\xi$ (as well as $\vct{u}$ and $\vct{x}$) for the sake of notational compactness.
In the sequel, any dependence on $h$ shall implicitly also indicate a dependence on $\upsilon$ and $\xi$.
Using the above partition function we obtain a modified free energy using (\ref{eq: Thermodynamics free energy at equilibrium}). Upon differentiation with respect to $h$, setting $h= 0$, and letting $\beta\rightarrow\infty$ we get
\begin{align}\label{eq: joint empirical as a function of partial derivative of log Z-h}
P_{X,U}(\xi,\upsilon) &= \lim\limits_{K\to\infty} P^{(K)}_{X,U}(\xi,\upsilon)\\ 
&=  \lim\limits_{K\to\infty}\limits\frac{1}{K} \lim_{\beta\rightarrow\infty} E\set{\frac{\partial \mathcal{F}(\beta,{h})}{\partial h}\bigg|_{h= 0}} \quad
\label{eq317}
\end{align}
where $\mathcal F(\beta, h)$ denotes the free energy for the modified partition function $\mathcal Z(h)$.\footnote{An alternative method for deriving the limiting empirical distribution, which relies on the limiting moments, can be found in \cite{Guo-Verdu-2005}, albeit with more restrictive assumptions on the limiting distribution.}

The next step in the analysis is to invoke some underlying assumptions. The first assumption is that the random matrix $\Mat{J}$ can be decomposed as
\begin{equation}\label{eq: Definition of the decomposability property}
\Mat{J} = \Mat{U}\Mat{D}\Mat{U}^\dag \quad,
\end{equation}
where $\Mat{D}$ is a diagonal matrix with diagonal elements being the eigenvalues of $\Mat{J}$, and $\Mat{U}$ is a unitary Haar distributed matrix \cite{Mehta-Book-1991}. It is further assumed that the empirical distribution of the diagonal elements of $\Mat{D}$ converges to a nonrandom distribution 
uniquely characterized by its $R$-transform\footnote{
For a definition of the $R$-transform, see Appendix \ref{prooflemma}.
} $R(\cdot)$, which is assumed to exist. 
 

Going back to the original communication system model, note that we are in fact interested in the normalized averages of most of the quantities described above, at the limit as $K\rightarrow\infty$. Therefore, to make a distinction, while retaining the relation between the quantities, we shall use henceforth the following notational convention
\begin{eqnarray}
\sF(\beta) &\triangleq& \lim_{K \rightarrow \infty} \frac{1}{K} E\set{\F} \quad , \label{eq: Definition of normalized free energy}\\
\sS(\beta) &\triangleq& \lim_{K \rightarrow \infty} \frac{1}{K} E\set{\mathcal{S}} \quad , \label{eq: Definition of normalized entropy}\\
\E(\beta) &\triangleq& \lim_{K \rightarrow \infty} \frac{1}{K} E\set{\mathcal{E}} \quad . \label{eq: Definition of normalized internal energy}
\end{eqnarray}

Calculating the expectation of a logarithm of a sum of exponents (see \eqref{eq: Representation of the energy penalty through F}) is a formidable task. The standard approach in statistical physics is to invoke the so-called replica ``trick''. The latter is based on the following identity\footnote{An equivalent representation often encountered in the literature is $E\set{\log \Z} = \lim_{n\rightarrow 0^+}\frac{E\set{\Z^n}-1}{n}$.}
\begin{equation}\label{eq: Replica identity}
E\set{\log \Z} = \lim_{n\rightarrow 0^+} \frac{\log E\set{ \Z^n}}{n}
\end{equation}
which holds in general for \emph{real} $n$. The ``trick" here relies on the \emph{assumption} that the right hand side (RHS) of \eqref{eq: Replica identity} can be evaluated for \emph{integer} $n$, and that the desired quantity can be found by analytic continuation in the vicinity of $n=0^+$. Although this ``trick" does not \emph{a priori} have any justified validity, its success in statistical physics, and more recently in communications theory, makes it a reasonable approach. Further assuming that the limits with respect to $K$ and $n$ can be interchanged (which is the common practice in replica analyses),  \eqref{eq: Representation of the energy penalty through F} can be rewritten as
\begin{equation}\label{eq: Representation of the energy penalty in terms of replicas}
\begin{aligned}
\bE &= \lim_{\beta \rightarrow \infty} \E(\beta) \\
&= -  \lim_{\beta \rightarrow \infty} \lim_{n\rightarrow 0^+} \frac{1}{n} \lim_{K \rightarrow \infty} \frac{1}{\beta K} \log E\set{\left(\sum_{\vct{x} \in \B_{\vct{u}}} \e^{-\beta \vct{x}^\dag\Mat{J}\vct{x} } \right)^n} \\
&= -  \lim_{\beta \rightarrow \infty} \frac{1}{\beta}  \lim_{n\rightarrow 0} \frac{1}{n} \lim_{K \rightarrow \infty} \frac{1}{K}     \log E \set{  \sum_{\set{\vct{x}_a}}
 \e^{\sum_{a=1}^n -\beta \vct{x}_a^\dag\Mat{J}\vct{x}_a } } \\
&=-  \lim_{\beta \rightarrow \infty} \frac{1}{\beta}  \lim_{n\rightarrow 0} \frac{1}{n} \lim_{K \rightarrow \infty} \frac{1}{K} \log E \set{     \sum_{\set{\vct{x}_a}}   \e^{- \Tr (\beta \Mat{J} \sum_{a=1}^n \vct{x}_a \vct{x}_a^\dag )}  }  \quad ,
\end{aligned}
\end{equation}
where we use the notation $\sum_{\set{\vct{x}_a}} = \sum_{\vct{x}_1 \in \B_{\vct{u}}} \cdots \sum_{\vct{x}_n \in \B_{\vct{u}}}$, and $\Tr(\cdot)$ denotes the trace operator.

The summation over the replicated precoder output vectors $\set{\vct{x}_a}_{a=1}^n$ in \eqref{eq: Representation of the energy penalty in terms of replicas} is performed by splitting the replicas into subshells,  defined through an $n \times n$ matrix $\Mat{Q}$
\begin{equation}\label{eq: Definition of the subshell S-Q}
S(\Mat{Q}) \triangleq 
\set{\vct{x}_1,\dots,\vct{x}_n \big| \vct{x}_a^\dag \vct{x}_b = KQ_{ab}} .
\end{equation}
The limit  $K\to \infty$ allows us to perform the following derivations by saddle point integration.
This first yields the following general result.
\begin{prop}
\label{prop:general_saddle_point_free_energy}
For any inverse temperature $\beta$, any structure of $\Mat Q$ consistent with \eqref{eq: Definition of the subshell S-Q}, and any $R$-transform $R(\cdot)$ such that $R(\Mat Q)$ is well-defined\footnote{\emph{
Note that if $R(\cdot)$ has a series expansion, $R(\Mat Q)$ is well-defined. Since $R(\cdot)$ is the free cumulant generating function, $R(\Mat Q)$ is well-defined, if all moments of the asymptotic eigenvalue distribution of $\Mat J$ exist.
}}, the energy  is given by
\begin{equation}
\label{eq: General expression for energy}
\E(\beta) =  \lim\limits_{n\to 0}\frac{1}{n} \Tr \left[\Mat{Q} \, R(-\beta \Mat{Q})\right] \quad ,
\end{equation}
where $\Mat Q$ is the solution to the saddle point equation
\begin{equation}
\Mat Q = \int \frac{\sum\limits_{{\bf x}\in \B_{u}^n} {\bf xx^\dagger}{\rm e}^{\,-\beta{\bf x}^\dagger R(-\beta\Mat Q)\bf x}}
{\sum\limits_{{\bf x}\in \B_{u}^n} {\rm e}^{\,-\beta {\bf x}^\dagger R(-\beta\Mat Q)\bf x}}
\,{\rm d} F_U(u) \quad 
\label{prop31b}
\end{equation}
with $\B_{u}^n$ denoting the $n$-fold Cartesian product of $\B_u$.
\end{prop}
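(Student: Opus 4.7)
The plan is to start from the finite-$\beta$ form of the replica expression \eqref{eq: Representation of the energy penalty in terms of replicas}, so the quantity to compute is
\[
\E(\beta) \;=\; -\lim_{n\to 0}\frac{1}{n}\lim_{K\to\infty}\frac{1}{K}\log E\!\left\{\sum_{\{\vct x_a\}}\e^{-\beta\sum_{a=1}^{n}\vct x_a^\dag \Mat J\vct x_a}\right\},
\]
treating $n$ as a positive integer and invoking analytic continuation at the end. Collecting the replicas into $\Mat X:=[\vct x_1,\ldots,\vct x_n]$, the exponent equals $-\beta\Tr[\Mat J\,\Mat X\Mat X^\dag]$, which couples $\Mat J$ and $\Mat X$ only through the $n\times n$ overlap matrix $\Mat Q$ of \eqref{eq: Definition of the subshell S-Q}. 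I would enforce this by inserting an integral representation of a matrix delta function pinning $\Mat Q=\Mat X^\dag\Mat X/K$, together with an $n\times n$ conjugate matrix $\tilde{\Mat Q}$ acting as a Lagrange multiplier; this decouples the average over $\Mat J$ from the state sum over $\Mat X$.

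The central ingredient is the evaluation of the average over $\Mat J=\Mat U\Mat D\Mat U^\dag$. Since $\Mat U$ is Haar, the expectation depends on $\Mat X\Mat X^\dag$ only through its spectrum, hence only through $\Mat Q$. The corresponding Harish-Chandra--Itzykson--Zuber (spherical) integral admits a large-$K$ asymptotic expressible through the free cumulants of $\Mat J$, and for a unitarily invariant ensemble with asymptotic spectral $R$-transform $R(\cdot)$ it produces, order by order in the power series of $R$, the matrix object $R(-\beta\Mat Q)$. This is the step that consumes the decomposability assumption \eqref{eq: Definition of the decomposability property}, and it is valid whenever $R(\Mat Q)$ is well defined, matching the hypothesis of the proposition.

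Once the spectral average is reduced to a function of $\Mat Q$ and $\tilde{\Mat Q}$, the residual state sum over $\Mat X$ factorizes into $K$ identical single-site sums over $\B_u^n$, weighted by $\rd F_U(u)$ and coupled only through $\tilde{\Mat Q}$. Applying the Laplace/saddle-point method as $K\to\infty$, stationarity in $\tilde{\Mat Q}$ identifies $\tilde{\Mat Q}=R(-\beta\Mat Q)$, which, when fed into the single-site Boltzmann weight, turns the stationarity condition in $\Mat Q$ into the fixed-point equation \eqref{prop31b}. Substituting the saddle values back and using $\E(\beta)=\partial(\beta\sF(\beta))/\partial\beta$, an envelope-theorem argument allows one to treat the saddle values as $\beta$-independent under the derivative, so that only the explicit $\beta$-dependence of the spherical-integral contribution survives; division by $n$ and the limit $n\to 0$ then yield \eqref{eq: General expression for energy}.

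The principal obstacle will be the spherical-integral step, both in formulating the $R$-transform identity in a form valid for arbitrary replica-overlap structures $\Mat Q$ (so that the same formulas specialize later to RS, 1RSB and beyond) and in controlling the matrix power series defining $R(\Mat Q)$ under the stated moment hypothesis. Layered on top come the two standard replica liberties --- exchanging the limits $K\to\infty$ and $n\to 0$ and analytically continuing from integer $n$ to a neighbourhood of $n=0$ --- which, in line with the paper's stance on self-averaging, I would apply heuristically and defer the technicalities to the appendix.
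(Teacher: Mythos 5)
Your proposal follows essentially the same route as the paper's proof: replica trick, delta-function introduction of the overlap matrix $\Mat Q$ with conjugate $\tilde{\Mat Q}$, evaluation of the Haar average via the Harish-Chandra--Itzykson--Zuber spherical integral using the $R$-transform (which is where the decomposability of $\Mat J$ is spent), a saddle-point in $(\Mat Q,\tilde{\Mat Q})$ yielding \eqref{prop31b}, and finally the envelope-theorem step that converts $\E(\beta)=\partial_\beta(\beta\sF)$ into the $\beta$-derivative of the spherical-integral term alone, which is $\Tr\!\left[\Mat Q\,R(-\beta\Mat Q)\right]$. The only blemish is the opening display, which labels the normalized $-\log E\{\mathcal Z^n\}$ expression as $\E(\beta)$ when it is actually $\beta\sF(\beta)$; you correct this implicitly a few lines later by taking $\partial_\beta(\beta\sF)$, but the display as written is misleading.
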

\begin{proof}
See Appendix \ref{app: Proof of Proposition on Free Energy}.
\end{proof}

With the help of Proposition \ref{prop:general_saddle_point_free_energy}, the energy can be written as 
\begin{align}
\label{RT1}
\E(\beta) &= \lim\limits_{K\to \infty}\frac{1}{K} \frac{\sum\limits_{\Mat x\in \B_{\Mat u}} \Mat x^\dagger\Mat{Jx} \, {\rm e}^{\,-\beta \Mat x^\dagger\Mat{Jx}}}
  {\sum\limits_{\Mat x\in \B_{\Mat u}}  {\rm e}^{\,-\beta \Mat x^\dagger\Mat{Jx}}}\\
   &= \lim\limits_{n\to 0}\frac{1}{n}  \int \frac{\sum\limits_{{\bf x}\in \B_{u}^n} {\bf x}^\dagger R(-\beta\Mat Q){\bf x}\, {\rm e}^{\,-\beta{\bf x}^\dagger R(-\beta\Mat Q)\bf x}}
{\sum\limits_{{\bf x}\in \B_{u}^n} {\rm e}^{\,-\beta {\bf x}^\dagger R(-\beta\Mat Q)\bf x}}
\,{\rm d} F_U(u) 
\label{RT2}
\end{align}
with $\Mat Q$ given by \eqref{prop31b}.
In \eqref{RT1}, $\Mat x$ is a $K$-dimensional vector and its components represent users. The contributions of the users to the energy arise due to the inner product $\Mat x^\dagger\Mat{Jx}$ and are coupled, unless $\Mat J$ is diagonal. 
In \eqref{RT2}, $\bf x$ is an $n$-dimensional vector and its components represent replicas of the \emph{same} user. The contributions of the users to the energy arise due to integration over the distribution $F_U(u)$, and are decoupled and additive. 
This is just another incarnation of the decoupling principle that, under the assumption of replica symmetry, was addressed in \cite{Guo-Verdu-2005}.
Here, we find that it holds for the energy of general (also replica symmetry breaking) spin glass systems and their equivalents in communication theory.

Another interesting observation is the following. In \cite{Tse-Free-Probability-Allerton-99}, an analogy between the $R$-transform and effective interference in linear MMSE detection was discovered, and the additivity of the effective interference of coupled users was explained based on the additivity of the $R$-transforms of free random variables. 
Relying on the code symbols of different users $\set{u_k}$ being i.i.d., we can rewrite \eqref{RT2} as
\begin{align}
\E(\beta) = & \lim\limits_{K\to \infty}\frac{1}{K} \sum_{k=1}^K  \underbrace{\lim\limits_{n\to 0}\frac{1}{n} \frac{\sum\limits_{{\bf x}\in \B_{u_k}^n} {\bf x}^\dagger R(-\beta\Mat Q){\bf x}\, {\rm e}^{\,-\beta{\bf x}^\dagger R(-\beta\Mat Q)\bf x}}
{\sum\limits_{{\bf x}\in \B_{u_k}^n} {\rm e}^{\,-\beta {\bf x}^\dagger R(-\beta\Mat Q)\bf x}}}_{\E_k(\beta)}
\label{RT3}
\end{align}
and interpret $\E_k(\beta)$ as the effective energy of user $k$. 
Like the effective interference in \cite{Tse-Free-Probability-Allerton-99}, it depends only on the signal constellation of user $k$ and the $R$-transform, and it is additive among users.
In contrast to \cite{Tse-Free-Probability-Allerton-99}, \eqref{RT3} is more general and neither constrained to linear detectors nor to Gaussian symbol alphabets. 

To produce explicit results, note that the limit $n\to0$ of the $n\times n$ matrix $\Mat Q$ can only be defined imposing a certain structure onto the crosscorrelation matrix $\Mat{Q}$ at the saddle point, unless the summations in \eqref{prop31b} can be evaluated explicitly, e.g., for $\B_u=\mathbb{C}$. The simplest structure is that of replica symmetry, which in the current setting boils down to
\begin{equation}\label{eq: RS: Definition of the RS structure of Q}
\Mat{Q} = q_0 \Mat{1}_{n \times n} + \frac{\chi_0}{\beta} \Mat{I}_{n \times n} \quad ,
\end{equation}
for some constants $\set{q_0, \chi_0}$.
The 1RSB assumption leads to a more involved structure, formulated as
\begin{equation}
\label{eq: 1RSB: Definition of the 1RSB structure of Q}
\Mat{Q} = q_1 \Mat{1}_{n \times n} + p_1 \Mat{I}_{\frac{n\beta}{\mu_1} \times \frac{n\beta}{\mu_1}}\otimes  \Mat{1}_{\frac{\mu_1}{\beta} \times \frac{\mu_1}{\beta}} + \frac{\chi_1}{\beta} \Mat{I}_{n \times n} \quad ,
\end{equation}
using the constants $\set{q_1,p_1,\chi_1,\mu_1}$.
The above constants (i.e., $\set{q_0,\chi_0}$ for RS, and $\set{q_1,p_1,\chi_1,\mu_1}$ for 1RSB) are referred to as \emph{macroscopic} parameters, and obtained from the corresponding saddle point equations. The limiting energy penalty can then be expressed in terms of these macroscopic parameters, as shown in the following sections. An analogous procedure can be employed to obtain the limiting empirical joint distribution of the precoder input and output using \eqref{eq: joint empirical as a function of partial derivative of log Z-h}.
%


Replica symmetry breaking is not limited to one step, and in fact in order to \emph{exactly} characterize the limiting energy penalty and precoder output statistics, we would eventually need to consider full RSB, as discussed in Section \ref{introduction}. However, we will only present here precoding results up to the accuracy of 1RSB for purposes of analytical tractability. For the interested reader and the sake of completeness, we include general results on multiple-step RSB in Appendix~\ref{rstepRSB}.

\section{Limiting Characterization of the Precoder Output}
\label{sec: Limiting Characterization of the Precoder Output}


We restrict ourselves in the following to 1RSB analysis of the limiting characteristics of the precoder output.
As demonstrated in the sequel, when compared to simulation results at finite numbers of antennas, 1RSB gives quite accurate approximations for the quantities of interest, while the RS ansatz does so only in special cases.

\subsection{The 1RSB Solution}\label{subsec: The 1RSB Solution}
Applying the 1RSB ansatz, as outlined in Section \ref{sec: Replica Analysis} (see in particular \eqref{eq: 1RSB: Definition of the 1RSB structure of Q}), the limiting properties of the precoder output are characterized by means of four macroscopic parameters $q_1, p_1, \chi_1, \mu_1 \in (0,\infty) $, which are determined as specified below. Let $\Mat{J}$ be  a $K \times K$ random matrix satisfying the decomposability property \eqref{eq: Definition of the decomposability property}, and let $R(\cdot)$ denote the $R$-transform of its limiting eigenvalue distribution. Consider now the following function of complex arguments
\begin{equation}
\begin{aligned}
\sFyz &\triangleq   \e^{- \mu_1\underset{x \in \B_{u}}{\min} \varepsilon_1  \abs{x}^2 -2  \Re\set{x (f_1 z^* + g_1y^*)}   } \quad, \quad (y, z) \in \mathbb{C}^2 \quad ,
\end{aligned}
\label{eq: 1RSB: Definition of the function sF}
\end{equation}
where $\Re\set{\cdot}$  takes the real part of the argument, and the parameters $\varepsilon_1$, $g_1$ and $f_1$ are defined as
\begin{eqnarray}
\varepsilon_1 & = &   R(-\chi_1) \label{eq: 1RSB: Expression for epsilion after taking derivative wrt q p and chi} \ ,\\
g_1 & = &
\sqrt{\frac{R(-\chi_1) - R(-\chi_1 - \mu_1 p_1)}{\mu_1}}
\label{eq: 1RSB: Expression for g after taking derivative wrt q p and chi} \ ,\\
f_1 & = & \sqrt{q_1 R'(-\chi_1-\mu_1 p_1)} \ .
\label{eq: 1RSB: Expression for f after taking derivative wrt q p and chi}
\end{eqnarray}
Furthermore, denote its normalized version by
\begin{equation}
\tsFyz = \frac{\sFyz}{\int_\mathbb{C} \sFypz {\rm d}\tilde y}
\end{equation}
to compact notation.
Then, using the shortened notations (for $z_\re,z_\im \in \mathbb{R}$)
\begin{equation}\label{eq: Shortened notation for the Dz integral}
\int_\mathbb{C} (\cdot) \, \rD z \triangleq \int\limits_{-\infty}^\infty \int\limits_{-\infty}^\infty (\cdot) \, \frac{\e^{-\abs{z}^2}}{\pi} \, \rd z_\re \, \rd z_\im \quad , \quad  z\triangleq z_\re+j z_\im \in \mathbb{C} \quad ,
\end{equation}
and
\begin{equation}\label{eq: Shortened notation for the DyDz integral}
\int_{\mathbb{C}^2} (\cdot) \, \rD y \, \rD z \triangleq \int_\mathbb{C}\int_\mathbb{C} (\cdot) \, \rD y \, \rD z ,
\end{equation}
the parameters $\set{q_1, p_1, \chi_1, \mu_1}$ are given by the solutions to the four coupled equations\footnote{In general these coupled equations have multiple solutions and one needs to choose the solution that minimizes the energy penalty.}
\begin{align}
\chi_1  +  p_1\mu_1 &=
 \frac{1}{f_1} \iint_{\mathbb{C}^2}  \Re\set{z^*\underset{x \in \B_{u}}{\argmin} \abs{f_1 z + g_1 y - \varepsilon_1 x}}  
  \tsFyz  \rD y
 \, \rD z \, \rd F_U(u) \ , \label{eq: 1RSB: Equation after partial derivative wrt f - beta inf h zero}\\
 \chi_1 + (q_1 + p_1)\mu_1 &=
\frac1{g_1}  \iint_{\mathbb{C}^2}  \Re\set{y^*\underset{x \in \B_{u}}{\argmin} \abs{f_1 z + g_1 y - \varepsilon_1 x}}   \tsFyz   \rD y
 \, \rD z \, \rd F_U(u) \ , \label{eq: 1RSB: Equation after partial derivative wrt g  - beta inf h zero}
\end{align}
\begin{equation}
q_1+p_1 =
\iint_{\mathbb{C}^2} \abs{\underset{x \in \B_{u}}{\argmin} \abs{f_1 z + g_1 y - \varepsilon_1 x}}^2
\tsFyz  \rD y
 \, \rD z \, \rd F_U(u) \ , \label{eq: 1RSB: Equation after partial derivative wrt epsilon - beta inf h zero}
\end{equation}
and
\begin{multline}
 \int\limits_{\chi_1}^{\chi_1+\mu_1 p_1}
 R(-w)\, \rd w =
  \iint_{\mathbb{C}} \log\left( \int_{\mathbb{C}}  \sFyz  \, \rD y \right)\, \rD z \, \rd F_U(u)  \\ -2\chi_1R(-\chi_1) +(\mu_1 q_1  +2 \chi_1 +2\mu_1 p_1) R(-\chi_1 -\mu_1p_1) \\  - 2\mu_1 q_1(\chi_1 + \mu_1 p_1)R'(-\chi_1 - \mu_1p_1) \ .\label{eq: 1RSB: Equation after partial derivative wrt mu - beta inf and h zero}
\end{multline}

The limiting properties of the precoder outputs can now be summarized by means of the following two propositions. The detailed proofs are provided in Appendices \ref{proof41} and \ref{proof42}, respectively.
\begin{prop}
\label{prop: 1RSB: Limiting Energy Penalty}
Suppose the random matrix $\Mat{J}$ satisfies the decomposability property \eqref{eq: Definition of the decomposability property}. Then under some technical assumptions, including in particular \emph{one-step replica symmetry breaking}, 
the effective energy penalty per symbol $ \E^\tot / K$ converges in probability as $K,N \rightarrow\infty$, $K/N\rightarrow\alpha<\infty$, to
\begin{equation}
\bE_\rsb \triangleq
  \left(q_1+p_1+\frac{\chi_1}{\mu_1}\right)R(-\chi_1-\mu_1p_1) - \frac{\chi_1}{\mu_1}R(-\chi_1)  - q_1(\chi_1 + \mu_1 p_1) R'(-\chi_1 - \mu_1 p_1) \ .
 \label{eq: 1RSB: Limit of the effective energy penalty}
\end{equation}
\end{prop}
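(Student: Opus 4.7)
The plan is to specialize the general trace representation of the energy from Proposition~\ref{prop:general_saddle_point_free_energy} to the 1RSB structure \eqref{eq: 1RSB: Definition of the 1RSB structure of Q} of $\Mat Q$, carry out the $n\to 0$ limit, and then take the zero-temperature limit $\beta\to\infty$. By \eqref{eq: General expression for energy} one has $\E(\beta)=\lim_{n\to 0}n^{-1}\Tr[\Mat Q\, R(-\beta\Mat Q)]$, so the whole argument reduces to evaluating this trace on the 1RSB ansatz and then peeling off the two limits in the correct order.

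First I would diagonalize the 1RSB $\Mat Q$. As a linear combination of the $n\times n$ all-ones matrix, a block-diagonal matrix of all-ones blocks of size $\mu_1/\beta$, and the identity, $\Mat Q$ admits a common eigenbasis and has exactly three distinct eigenvalues with explicit multiplicities: $\lambda_1 = n q_1 + (\mu_1 p_1+\chi_1)/\beta$ with multiplicity $1$ (the global constant), $\lambda_2 = (\mu_1 p_1+\chi_1)/\beta$ with multiplicity $n\beta/\mu_1-1$ (block-constant, orthogonal to the global constant), and $\lambda_3 = \chi_1/\beta$ with multiplicity $n-n\beta/\mu_1$ (orthogonal to block-constant directions). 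Since $R(\cdot)$ admits a power series by hypothesis, $R(-\beta\Mat Q)$ is defined by functional calculus and $\Tr[\Mat Q\, R(-\beta\Mat Q)]=\sum_i m_i\lambda_i R(-\beta\lambda_i)$. Only $\lambda_1$ depends on $n$, so I would Taylor-expand $R(-\beta\lambda_1)=R(-\chi_1-\mu_1 p_1)-n\beta q_1 R'(-\chi_1-\mu_1 p_1)+O(n^2)$; after collecting, the $1/\beta$ contribution of the multiplicity-$1$ term cancels against the $-1$ inside the multiplicity $n\beta/\mu_1-1$, leaving a finite $O(n)$ expression. Dividing by $n$ and letting $n\to 0$ yields
\[
\E(\beta)=\Bigl(q_1+p_1+\frac{\chi_1}{\mu_1}\Bigr)R(-\chi_1-\mu_1 p_1)-\frac{\chi_1}{\mu_1}R(-\chi_1)-q_1(\chi_1+\mu_1 p_1)R'(-\chi_1-\mu_1 p_1)+\frac{\chi_1 R(-\chi_1)}{\beta},
\]
and taking $\beta\to\infty$ kills the last term, producing exactly \eqref{eq: 1RSB: Limit of the effective energy penalty}. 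In parallel one must substitute the 1RSB ansatz into the fixed-point equation \eqref{prop31b}: after a Hubbard--Stratonovich linearization of $\Mat x^\dagger R(-\beta\Mat Q)\Mat x$ that decouples the $n$ replicas via two complex Gaussian auxiliary variables $y,z$ (one for each non-trivial 1RSB eigenvalue gap), the block-wise matching of the $q$, $p$, $\chi$ sectors in the $\beta\to\infty$ limit produces the four coupled saddle-point conditions \eqref{eq: 1RSB: Equation after partial derivative wrt f - beta inf h zero}--\eqref{eq: 1RSB: Equation after partial derivative wrt mu - beta inf and h zero} and identifies $\varepsilon_1,g_1,f_1$ as in \eqref{eq: 1RSB: Expression for epsilion after taking derivative wrt q p and chi}--\eqref{eq: 1RSB: Expression for f after taking derivative wrt q p and chi}, pinning down the parameters $(q_1,p_1,\chi_1,\mu_1)$ that are substituted into the closed form above.

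The main obstacle is the zero-temperature analysis: one must track which combinations of the macroscopic parameters scale as $1/\beta$ and which remain $O(1)$ so that the limits $n\to 0$ and $\beta\to\infty$ can be interchanged consistently, which is exactly why the ansatz \eqref{eq: 1RSB: Definition of the 1RSB structure of Q} bakes in a $\chi_1/\beta$ on the diagonal and a $\mu_1/\beta$ block size. A secondary subtlety is the analytic-continuation character of the $n\to 0$ step: the multiplicities $n\beta/\mu_1-1$ and $n-n\beta/\mu_1$ are in general non-integer, and even negative for small $n$, so they must be treated as values of a real analytic function in $n$, in line with the standard replica prescription. The cancellation of the $1/\beta$ divergences between the multiplicity-$1$ and multiplicity-$(n\beta/\mu_1-1)$ contributions in the trace sum, and the vanishing of the spurious $\chi_1 R(-\chi_1)/\beta$ term as $\beta\to\infty$, provide internal consistency checks that the 1RSB ansatz is compatible with the ground-state limit; the detailed justifications of these manipulations are what one would expect to fill out in Appendix~\ref{proof41}.
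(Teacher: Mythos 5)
Your proposal is correct and follows essentially the same route as the paper's own argument. The paper's Appendix on higher RSB orders (Proposition~\ref{prop: r-RSB solution for energy entropy and free energy}) computes exactly the eigenvalue decomposition you describe — three eigenvalues $\beta n q_1 + \mu_1 p_1 + \chi_1$, $\mu_1 p_1 + \chi_1$, $\chi_1$ with the multiplicities $1$, $n\beta/\mu_1-1$, $n-n\beta/\mu_1$ — plugs them into the trace formula \eqref{eq: General expression for energy}, Taylor-expands around $n=0$, and the proof of Proposition~\ref{prop: 1RSB: Limiting Energy Penalty} ends by citing that result after $\beta\to\infty$. Your identification of the cancellation (the $n$-independent $\frac{\mu_1 p_1+\chi_1}{\beta}R(-\chi_1-\mu_1 p_1)$ term from the multiplicity-$1$ eigenvalue annihilating the $-1$ piece of the multiplicity $n\beta/\mu_1-1$) and of the residual $\frac{\chi_1}{\beta}R(-\chi_1)$ that dies as $\beta\to\infty$ matches the algebra precisely, though describing the cancelling pieces as ``$1/\beta$ divergences'' is slightly off — they are $O(n^0)$ contributions whose survival would make the $n^{-1}$ limit blow up, not temperature singularities. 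Your sketch of the fixed-point equations via Hubbard--Stratonovich decoupling with two auxiliary Gaussians $y,z$ is also in line with the paper's Appendix~\ref{proof41}, which carries out the detailed partial-derivative bookkeeping of $\G+\Tr(\tilde{\Mat Q}\Mat Q)$ and $\log M - \Tr(\tilde{\Mat Q}\Mat Q)$ to identify $\varepsilon_1,g_1,f_1$ and the four saddle-point conditions. The only presentational difference is that the paper's main proof re-derives the free energy $\Xi_n$ from scratch (Guionnet--Maida HCIZ asymptotics, inverse-Laplace $\tilde{\Mat Q}$ representation) and then notes the trace formula gives a shortcut, whereas you lean modularly on Proposition~\ref{prop:general_saddle_point_free_energy} as a black box — a cleaner reorganization of the same argument.
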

The conditional limiting empirical distribution of the precoder's outputs is specified next.
\begin{prop}\label{prop: 1RSB: Marginal limiting conditional distribution of x given u}
With the same underlying assumptions as in Proposition \ref{prop: 1RSB: Limiting Energy Penalty}, the limiting conditional empirical distribution of the nonlinear precoder's outputs given an input symbol $u$ satisfies
\begin{equation}
P_{X|U}(\xi|\upsilon) = \int_{\mathbb{C}^2} 1{\set{\xi = \underset{{x} \in \B_{\upsilon}}{\argmin} \abs{f_1 z + g_1 y - \varepsilon_1{x}}}}
\tsFyzup  \rD y \, \rD z  \ , \label{eq: 1RSB: Equation for the limiting conditional distribution - Infinite beta}
\end{equation}
where $1\! \set{\cdot}$ denotes the indicator function.
\end{prop}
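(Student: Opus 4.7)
The plan is to apply the identity (\ref{eq317}) to the modified partition function $\mathcal{Z}(h)$ of (\ref{eq: 1RSB: Definition of script-Z1}), and to specialize the saddle-point machinery of Proposition \ref{prop:general_saddle_point_free_energy} to the 1RSB ansatz (\ref{eq: 1RSB: Definition of the 1RSB structure of Q}), carrying the external field $V(h)$ through the entire replica computation. First I would repeat the derivation of Proposition \ref{prop:general_saddle_point_free_energy} with the added term $V(h) = -h \sum_k 1\{(x_k,u_k)=(\xi,\upsilon)\}$ in the Hamiltonian. Because $V(h)$ is additive over user indices and carries no cross-replica coupling, it passes unchanged through the Hubbard--Stratonovich decoupling induced by the decomposability assumption (\ref{eq: Definition of the decomposability property}), merely multiplying the single-user, single-replica Boltzmann weight by the factor $\exp(\beta h \cdot 1\{(x,u)=(\xi,\upsilon)\})$.

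Next I would plug in the 1RSB structure (\ref{eq: 1RSB: Definition of the 1RSB structure of Q}). Since the $R$-transform is analytic, $R(-\beta\Mat Q)$ inherits the same three-term block structure as $\Mat Q$; its diagonal coefficient is identified as $\varepsilon_1 = R(-\chi_1)$, the block coefficient produces $g_1$ of (\ref{eq: 1RSB: Expression for g after taking derivative wrt q p and chi}), and the all-ones coefficient produces $f_1$ of (\ref{eq: 1RSB: Expression for f after taking derivative wrt q p and chi}) once the analytic $n\to 0$ continuation has been performed. Applying Hubbard--Stratonovich to the three pieces introduces, respectively, a single complex Gaussian $z$ shared among all $n$ replicas, an independent complex Gaussian $y_g$ shared within each block of $\mu_1/\beta$ replicas, and the $\varepsilon_1|x|^2$ diagonal term. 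Performing the intra-block replica sum and then sending $\beta\to\infty$, the quantity $\bigl(\sum_{x \in \B_u} \exp(-\beta[\varepsilon_1|x|^2 - 2\Re\{x(f_1 z^* + g_1 y^*)\}])\bigr)^{\mu_1/\beta}$ collapses to $\exp(-\mu_1 \min_{x\in\B_u}[\cdots])$, which is precisely $\gimel_u(y,z)$ of (\ref{eq: 1RSB: Definition of the function sF}); the remaining inter-block product together with the $n\to 0$ continuation produces the logarithm $\log \int_\mathbb{C} \gimel_u(y,z)\,\rD y$ appearing in the saddle-point free energy.

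Now I would differentiate the resulting $\mathcal{F}(\beta,h)$ with respect to $h$ at $h = 0$. By saddle-point stationarity the derivative acts only through the explicit $h$-dependence of the single-user Boltzmann weight, not through the saddle-point values of $(q_1, p_1, \chi_1, \mu_1)$. After the $\beta\to\infty$ collapse, the $h$-perturbation of the inner minimization $\min_{x\in\B_u}[\varepsilon_1|x|^2 - 2\Re\{x(f_1 z^* + g_1 y^*)\} - h\cdot 1\{(x,u)=(\xi,\upsilon)\}]$ picks up, upon completing the square as $\varepsilon_1^{-1}|f_1 z + g_1 y - \varepsilon_1 x|^2 + \mathrm{const}$, the factor $1\{x^\star(y,z;u) = \xi\}\cdot 1\{u = \upsilon\}$ with $x^\star(y,z;u) \triangleq \argmin_{x\in\B_u}|f_1 z + g_1 y - \varepsilon_1 x|$. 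Normalizing by $\int_\mathbb{C} \gimel_u(y,z)\,\rD y$ produces the weight $\tilde\gimel_u(y,z)$; the outer $\rd F_U(u)$ integration against $1\{u=\upsilon\}$ selects $u = \upsilon$ and yields the joint $P_{X,U}(\xi,\upsilon)$; dividing out the marginal $f_U(\upsilon)$ then gives (\ref{eq: 1RSB: Equation for the limiting conditional distribution - Infinite beta}).

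The main obstacle will be carefully justifying the order in which the various limits are taken. In particular, the intra-block $\beta\to\infty$ collapse, the $n\to 0$ replica continuation, and the $h$-differentiation must commute, and one must argue that the minimizer $x^\star(y,z;u)$ is generically unique under the Gaussian measure on $(y,z)$ so that the indicator in (\ref{eq: 1RSB: Equation for the limiting conditional distribution - Infinite beta}) is well-defined almost everywhere. A secondary subtlety is that the saddle-point parameters in principle acquire an $O(h)$ shift; however, the stationarity of the free energy at the unperturbed saddle point guarantees that this shift does not contribute at first order in $h$, a standard envelope-theorem argument that is nonetheless worth verifying explicitly in the 1RSB context.
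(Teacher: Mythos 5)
Your proposal is correct and follows essentially the same route as the paper's Appendix~\ref{proof42}: carry the external-field term $V(h)$ through the 1RSB replica computation of Proposition~\ref{prop: 1RSB: Limiting Energy Penalty}, then extract $P_{X|U}$ from $\partial_h \sF(\beta,h)|_{h=0}$ via \eqref{eq317}. The only inessential difference is that the paper takes the $h$-derivative at finite $\beta$ first (yielding \eqref{eq: 1RSB: Equation for the limiting conditional distribution - finite beta}) and only then sends $\beta\to\infty$, thereby avoiding the Danskin-type argument for differentiating the $\min$ that you invoke; both orderings agree under the almost-everywhere uniqueness of the minimizer that you rightly flag.
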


\subsection{A Replica Symmetric Reduction}\label{subsec: A Replica Symmetric Reduction}

Although the 1RSB solution of the replica analysis leads in principle to a more accurate description of the large system limit, corresponding results can also be derived using the simplifying assumption that the system exhibits a replica symmetric behavior (see \eqref{eq: RS: Definition of the RS structure of Q}). These results shall be used in the sequel to demonstrate the impact of replica symmetry breaking. However they can also be extremely useful for more conveniently analyzing settings that do exhibit replica symmetric properties, such as the case of convex extended alphabet sets addressed in \cite{miguel-SP-09}. A convex alphabet example is discussed in Section~\ref{sec: A Replica Symmetric Example}.

The limiting energy penalty under the RS assumption was in fact already derived in \cite{Muller-Guo-Moustakas-JSAC-2008}, and the result is recalled in the following proposition. The result is given in terms of the two macroscopic parameters $q_0,\chi_0 \in (0,\infty)$, which are obtained through the solution of the two coupled equations
\begin{equation}\label{eq: Fixed-point equation for q-0}
q_0 = \int  \int_{\mathbb{C}} \abs{\underset{x\in \B_u}{\argmin} \abs{z - \frac{R(-\chi_0)\, x }{\sqrt{q_0 R'(-\chi_0)}}}}^2 \, \rD z \, \rd F_U(u) \ ,
\end{equation}
and
\begin{equation} \label{eq: Fixed-point equation for chi-0}
\chi_0 = \frac{\int  \, \Re \bigl\{\int_{\mathbb{C}} \underset{x\in \B_u}{\argmin} \abs {z - \frac{R(-\chi_0)\, x }{\sqrt{q_0 R'(-\chi_0)}}} z^* \, \rD z \bigr\} \, \rd F_U(u) }{\sqrt{q_0 R'(-\chi_0)}} \ .
\end{equation}

\begin{prop}[\cite{Muller-Guo-Moustakas-JSAC-2008}, Proposition 1]
\label{prop: RS: MGM Limiting Energy Penalty}
Suppose the random matrix $\Mat{J}$ satisfies the decomposability property \eqref{eq: Definition of the decomposability property}. Then under some technical assumptions, including in particular \emph{replica symmetry}, 
the effective energy penalty per symbol $ \E^\tot / K$ converges in probability as $K,N \rightarrow\infty$, $K/N\rightarrow\alpha<\infty$, to\footnote{\em In \cite{Muller-Guo-Moustakas-JSAC-2008}, the self-averaging property was stated as an assumption, since the authors were not aware of \cite{Guerra2002_ThermodynamicLimitSG}.}
\begin{equation}
\bE_\rs \triangleq q_0 [R(-\chi_0) - \chi_0 R'(-\chi_0)]
   \ .
 \label{eq: RS: Limit of the effective energy penalty}
\end{equation}
\end{prop}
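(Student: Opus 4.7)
The most economical plan is to read the statement off from Proposition~\ref{prop: 1RSB: Limiting Energy Penalty} by noting that the RS ansatz \eqref{eq: RS: Definition of the RS structure of Q} is exactly the specialization of the 1RSB ansatz \eqref{eq: 1RSB: Definition of the 1RSB structure of Q} at $p_1=0$. Substituting $p_1=0$ collapses the auxiliary constants \eqref{eq: 1RSB: Expression for epsilion after taking derivative wrt q p and chi}--\eqref{eq: 1RSB: Expression for f after taking derivative wrt q p and chi} to $\varepsilon_1=R(-\chi_1)$, $g_1=0$, and $f_1=\sqrt{q_1R'(-\chi_1)}$, so that $\sFyz$ loses its $y$-dependence and the Gaussian integration against $\rD y$ trivializes. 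Identifying $(q_1,\chi_1)$ with $(q_0,\chi_0)$ and using the scaling invariance $\argmin_x|f_1 z-\varepsilon_1 x|=\argmin_x|z-\varepsilon_1 x/f_1|$, I would verify that \eqref{eq: 1RSB: Equation after partial derivative wrt epsilon - beta inf h zero} reduces to \eqref{eq: Fixed-point equation for q-0} and \eqref{eq: 1RSB: Equation after partial derivative wrt f - beta inf h zero} to \eqref{eq: Fixed-point equation for chi-0}, while the $\mu_1$-equation \eqref{eq: 1RSB: Equation after partial derivative wrt mu - beta inf and h zero} becomes automatic on $(q_0,\chi_0)$, consistent with $\mu_1$ being undetermined once the middle block of \eqref{eq: 1RSB: Definition of the 1RSB structure of Q} is absent. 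Plugging $p_1=0$ directly into \eqref{eq: 1RSB: Limit of the effective energy penalty} then yields $q_1[R(-\chi_1)-\chi_1 R'(-\chi_1)]$, which is exactly \eqref{eq: RS: Limit of the effective energy penalty}.

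For a self-contained derivation independent of the 1RSB appendix, I would rerun the saddle-point argument of Proposition~\ref{prop:general_saddle_point_free_energy} directly under the RS ansatz. The matrix $\Mat{Q}=q_0\Mat{1}_{n\times n}+(\chi_0/\beta)\Mat{I}_{n\times n}$ has only two distinct eigenvalues, $nq_0+\chi_0/\beta$ (simple, eigenvector $\mathbf{1}/\sqrt{n}$) and $\chi_0/\beta$ (multiplicity $n-1$), so spectral calculus gives $R(-\beta\Mat{Q})=R(-\chi_0)\Mat{I}+c_n\Mat{1}_{n\times n}$ with $c_n=n^{-1}[R(-n\beta q_0-\chi_0)-R(-\chi_0)]\to -\beta q_0 R'(-\chi_0)$ as $n\to 0$. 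Evaluating the trace in \eqref{eq: General expression for energy} eigenvalue by eigenvalue and Taylor-expanding $R$ around $-\chi_0$ immediately yields $\E(\beta)=q_0[R(-\chi_0)-\chi_0 R'(-\chi_0)]$, independent of $\beta$, so the zero-temperature limit is trivial. To treat the matrix equation \eqref{prop31b}, I would decouple the cross-replica term $|\sum_a x_a|^2$ by a complex Hubbard--Stratonovich transformation against an auxiliary Gaussian $z\in\mathbb{C}$, factorize the replicated partition function into $n$ identical single-replica factors, apply $\lim_{n\to 0}n^{-1}\log\int\rD z\,(\cdot)^n=\int\rD z\,\log(\cdot)$, and finally take $\beta\to\infty$ by Laplace's method. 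The resulting Boltzmann sum over $\B_u$ collapses onto $\argmin_{x\in\B_u}|z-R(-\chi_0)x/\sqrt{q_0R'(-\chi_0)}|$, and the two independent scalar components of the matrix saddle-point condition (equivalently, the stationarity of the resulting RS free energy in $q_0$ and $\chi_0$) produce \eqref{eq: Fixed-point equation for q-0} and \eqref{eq: Fixed-point equation for chi-0}.

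The hard part is not the computation itself --- once Proposition~\ref{prop:general_saddle_point_free_energy} is in hand, everything above is algebra --- but the heuristic ingredients inherent to the replica method: the analytic continuation in $n$ through $n=0$, the interchange of the limits in $K$, $n$, and $\beta$, and the selection of the correct saddle point among the (in general multiple) solutions of \eqref{eq: Fixed-point equation for q-0}--\eqref{eq: Fixed-point equation for chi-0}. Because the statement is explicitly qualified by ``technical assumptions, including in particular replica symmetry,'' I would not attempt to justify the RS ansatz itself here; I would however flag that its stability is exactly what fails for the non-convex alphabets treated in Section~\ref{sec: Replica Symmetry Breaking Example}, which is the reason the paper needs the 1RSB variant.
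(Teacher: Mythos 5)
Your first route is essentially the paper's own, so the proposal is correct. The paper does not give a self-contained proof of Proposition \ref{prop: RS: MGM Limiting Energy Penalty} (it is cited from \cite{Muller-Guo-Moustakas-JSAC-2008}), but in Appendix \ref{proof43} the authors exhibit the reduction from 1RSB to RS, parametrized by $\mu_1/\beta=1$ and $\chi_0=\chi_1+\mu_1 p_1$ rather than your $p_1=0$; both choices make the middle block of \eqref{eq: 1RSB: Definition of the 1RSB structure of Q} degenerate to recover \eqref{eq: RS: Definition of the RS structure of Q}, and your $p_1=0$ route plugged into \eqref{eq: 1RSB: Limit of the effective energy penalty} indeed gives $q_1[R(-\chi_1)-\chi_1 R'(-\chi_1)]$ directly, which is arguably cleaner than the paper's reduction (which still requires $\beta\to\infty$ to kill the $p_1=(\chi_0-\chi_1)/\beta$ terms). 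Two small points of imprecision. First, besides the $\mu_1$-equation you check, the fourth saddle-point equation \eqref{eq: 1RSB: Equation after partial derivative wrt g  - beta inf h zero} becomes $0/0$ at $g_1=0$ --- consistent with the disappearance of that degree of freedom, but worth stating. Second, in your stand-alone RS computation the trace of $\Mat{Q}R(-\beta\Mat{Q})$ with $\Mat{Q}=q_0\Mat{1}_{n\times n}+(\chi_0/\beta)\Mat{I}_{n\times n}$ gives, after $n\to 0$, an extra term $\chi_0 R(-\chi_0)/\beta$; hence $\E(\beta)$ is not literally $\beta$-independent and equals $q_0[R(-\chi_0)-\chi_0R'(-\chi_0)]$ only in the $\beta\to\infty$ limit. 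Your decision not to relitigate the replica ``trick'' or the stability of the RS ansatz is appropriate here, since the proposition explicitly carries those technical assumptions.
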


The limiting conditional distribution of the precoder outputs can also be characterized under the RS assumption, in an analogous manner to Proposition \ref{prop: 1RSB: Marginal limiting conditional distribution of x given u}.
\begin{prop}\label{prop: RS Marginal limiting conditional distribution of x given u}
With the same underlying assumptions as in Proposition \ref{prop: RS: MGM Limiting Energy Penalty}, the limiting conditional empirical distribution of the nonlinear precoder's outputs given an input symbol $u$ satisfies
\begin{equation} \label{eq: RS: Equation for the limiting conditional distribution - Infinite beta}
P_{X|U}(\xi|\upsilon) = \int_{\mathbb{C}} 1{\set{\xi = \underset{{x}\in\B_\upsilon}{\argmin} \abs{z - \frac{R(-\chi_0) {x}}{\sqrt{q_0 R'(-\chi_0)}} }}}  \, \rD z \ .
\end{equation}
This is the measure of the corresponding Voronoi region in the scaled \emph{conditional} signal constellation $\B_\upsilon$, with respect to the (complex) Gaussian probability measure.
\end{prop}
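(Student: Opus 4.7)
The plan is to mirror the argument behind Proposition \ref{prop: 1RSB: Marginal limiting conditional distribution of x given u}, but replace the 1RSB structure \eqref{eq: 1RSB: Definition of the 1RSB structure of Q} by the simpler replica-symmetric ansatz \eqref{eq: RS: Definition of the RS structure of Q}. Concretely, I would start from the perturbed partition function $\Z(h)$ of \eqref{eq: 1RSB: Definition of script-Z1} and its modified free energy $\F(\beta,h)$, so that by \eqref{eq: joint empirical as a function of partial derivative of log Z-h} the task reduces to computing
\[
P_{X|U}(\xi|\upsilon)\,f_U(\upsilon) = \lim_{K\to\infty}\lim_{\beta\to\infty}\frac{1}{K}E\!\left\{\frac{\partial \F(\beta,h)}{\partial h}\bigg|_{h=0}\right\}
\]
under the RS ansatz, and then dividing out by $f_U(\upsilon)$.

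Next I would apply the replica identity \eqref{eq: Replica identity}, expand $E\{\Z(h)^n\}$ into an $n$-fold replicated sum, and use the decomposability \eqref{eq: Definition of the decomposability property} together with the Haar invariance of $\Mat{U}$ to express the disorder average in terms of the $R$-transform, via the subshell decomposition \eqref{eq: Definition of the subshell S-Q}. Imposing $\Mat{Q}=q_0\Mat{1}_{n\times n}+(\chi_0/\beta)\Mat{I}_{n\times n}$ and performing the Hubbard--Stratonovich decoupling introduces a single complex Gaussian auxiliary variable $z$ per user, reducing \eqref{prop31b} to the scalar fixed-point equations \eqref{eq: Fixed-point equation for q-0}--\eqref{eq: Fixed-point equation for chi-0}. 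This is exactly the RS reduction already carried out in \cite{Muller-Guo-Moustakas-JSAC-2008} to prove Proposition \ref{prop: RS: MGM Limiting Energy Penalty}, so I can reuse that machinery essentially verbatim.

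The step specific to this proposition is to differentiate the resulting RS free-energy expression in $h$ at $h=0$ and let $\beta\to\infty$. The perturbation $V(h,\xi,\upsilon)=-h\sum_k 1\{(x_k,u_k)=(\xi,\upsilon)\}$ enters the decoupled single-replica weight for a user with code symbol $\upsilon$ as a shift of the single-letter Hamiltonian on configurations $x=\xi$. Differentiating brings down the indicator $1\{x=\xi\}$ weighted by the Boltzmann measure for the scalar minimization of $|z - R(-\chi_0)x/\sqrt{q_0 R'(-\chi_0)}|^2$ over $x\in\B_\upsilon$. Taking $\beta\to\infty$ concentrates this measure on the (a.s.\ in $z$ unique) minimizer, so the derivative reduces to the event that $\xi$ is the $\argmin$, integrated against the complex Gaussian measure $\rD z$, which is precisely \eqref{eq: RS: Equation for the limiting conditional distribution - Infinite beta}. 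The Voronoi interpretation then follows because $\argmin_{x\in\B_\upsilon}|z-\mu x|^2$ partitions $\mathbb{C}$ into the Voronoi cells of the scaled constellation $\mu\B_\upsilon$ with $\mu=R(-\chi_0)/\sqrt{q_0 R'(-\chi_0)}$.

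The main obstacles are the standard ones in replica analyses: (i) the non-rigorous analytic continuation $n\to 0$ after imposing the RS matrix structure, and (ii) justifying the interchange of the $\beta\to\infty$ limit with the $h$-derivative, which requires that the scalar zero-temperature minimizer be a.s.\ unique under the Gaussian measure on $z$ (true generically for the discrete/closed $\B_\upsilon$'s considered here, since boundary ties form a Lebesgue-null set). A minor bookkeeping point is to verify that the $h$-derivative does not feed back into the saddle-point equations at $h=0$ because the saddle-point conditions are stationary in the macroscopic parameters $(q_0,\chi_0)$; this is the same reason the limiting energy \eqref{eq: RS: Limit of the effective energy penalty} is obtained by a clean single-letter calculation.
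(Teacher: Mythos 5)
Your proposal is correct and matches the paper's own (one-line) proof, which simply says to repeat the argument for Proposition~\ref{prop: 1RSB: Marginal limiting conditional distribution of x given u} with the RS ansatz \eqref{eq: RS: Definition of the RS structure of Q} in place of the 1RSB ansatz \eqref{eq: 1RSB: Definition of the 1RSB structure of Q}; you have spelled this out in full detail, including the $h$-differentiation, the $\beta\to\infty$ concentration on the scalar argmin, and the Voronoi-cell interpretation. The only minor slip is that the normalization factor should be the probability mass $P_U(\upsilon)$ rather than the generalized density $f_U(\upsilon)$, consistent with \eqref{eq: Definition of dP_u} and with the corresponding step \eqref{eq: 1RSB: Equation for the limiting conditional distribution - finite beta} in the 1RSB proof.
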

\begin{proof} The proof follows the same steps as in the proof of Proposition \ref{prop: 1RSB: Marginal limiting conditional distribution of x given u}, while replacing \eqref{eq: 1RSB: Definition of the 1RSB structure of Q} with \eqref{eq: RS: Definition of the RS structure of Q}.
\end{proof}

\subsection{Zero-Temperature Entropy}

One way to demonstrate the degree of consistency of the RS and 1RSB solutions is to look at their limiting (thermodynamic) zero-temperature entropy defined as $\bS=\lim_{\beta \to \infty}\sS(\beta)$.
It can also be obtained in a manner similar to Propositions~\ref{prop: 1RSB: Limiting Energy Penalty} and \ref{prop: RS: MGM Limiting Energy Penalty}.
In Appendix~\ref{proof43}, we show:
\begin{prop}\label{prop: 1RSB entropy result}\label{Cor: RS entropy result}
With the same underlying assumptions as in Propositions \ref{prop: 1RSB: Limiting Energy Penalty} or \ref{prop: RS: MGM Limiting Energy Penalty}, the limiting entropy per symbol converges to
\begin{equation}
\label{eq: 1RSB Limiting Entropy per symbol}
\bS = \chi R(-\chi) - \int_0^{\chi} R(-w) \, \rd w \quad
\end{equation}
with $\chi$ denoting $\chi_1$ and $\chi_0$ for 1RSB and RS, respectively.
\end{prop}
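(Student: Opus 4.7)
The plan is to compute $\bS$ from the thermodynamic identity $\mathcal S = \beta(\mathcal E - \mathcal F)$ which follows from \eqref{eq: Thermodynamics definition of the free energy }. After passing to the normalized averages \eqref{eq: Definition of normalized free energy}--\eqref{eq: Definition of normalized internal energy} and taking $\beta\to\infty$,
\begin{equation*}
\bS = \lim_{\beta\to\infty}\beta\bigl(\E(\beta) - \sF(\beta)\bigr),
\end{equation*}
so the task reduces to identifying the $O(1/\beta)$ correction of $\sF(\beta)$ that survives when combined with $\E(\beta)$ at zero temperature.

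First I would extract the full saddle-point expression for the normalized free energy $\sF(\beta)$ under the 1RSB ansatz, by retracing the proof of Proposition~\ref{prop:general_saddle_point_free_energy} but retaining the log-partition piece rather than passing immediately to the energy \eqref{eq: General expression for energy}. Under the structure \eqref{eq: 1RSB: Definition of the 1RSB structure of Q}, $-\beta\sF(\beta)$ decomposes into a \emph{spectral} contribution (a trace of $R$-transform integrals on $\Mat Q$) and an \emph{auxiliary} Gaussian-averaged log-partition contribution of the form $\iint_{\mathbb{C}}\log\bigl(\int_{\mathbb{C}}\sFyz\,\rD y\bigr)\,\rD z\,\rd F_U(u)$, with $f_1,g_1,\varepsilon_1$ given by \eqref{eq: 1RSB: Expression for epsilion after taking derivative wrt q p and chi}--\eqref{eq: 1RSB: Expression for f after taking derivative wrt q p and chi}.

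The key step is then to eliminate the auxiliary piece using the $\mu_1$-stationarity condition \eqref{eq: 1RSB: Equation after partial derivative wrt mu - beta inf and h zero}, which was derived precisely by setting $\partial\sF/\partial\mu_1 = 0$ and therefore re-expresses this log-partition integral as an explicit algebraic expression in $\chi_1,p_1,q_1,\mu_1,R(-\chi_1),R(-\chi_1-\mu_1p_1),R'(-\chi_1-\mu_1p_1)$, together with the single definite integral $\int_{\chi_1}^{\chi_1+\mu_1p_1}R(-w)\,\rd w$. Substituting this back and combining with $\E_\rsb$ from \eqref{eq: 1RSB: Limit of the effective energy penalty}, the polynomial-in-macroscopic-parameters contributions to $\beta(\E-\sF)$ should telescope and cancel, leaving exactly $\chi_1 R(-\chi_1) - \int_0^{\chi_1} R(-w)\,\rd w$ after rewriting $\int_{\chi_1}^{\chi_1+\mu_1p_1}R(-w)\,\rd w$ as the difference of antiderivatives anchored at zero and absorbing the resulting $(\chi_1+\mu_1p_1)R(-\chi_1-\mu_1p_1)$ term against the trace contribution. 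The RS statement then follows either by specializing to $p_1 = 0$ in 1RSB notation or, equivalently, by repeating the argument verbatim with the RS ansatz \eqref{eq: RS: Definition of the RS structure of Q} and the saddle-point equations \eqref{eq: Fixed-point equation for q-0}, \eqref{eq: Fixed-point equation for chi-0}, yielding the identical formula with $\chi_0$ in place of $\chi_1$.

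The main obstacle is the bookkeeping of the divergent $O(\beta)$ pieces in both $\beta\E(\beta)$ and $\beta\sF(\beta)$: several terms individually scale as $\beta\chi_1$, $\beta\mu_1p_1$, or $\beta q_1$, and verifying that only the stated finite residue survives requires the simultaneous use of \emph{all} four 1RSB saddle-point equations \eqref{eq: 1RSB: Equation after partial derivative wrt f - beta inf h zero}--\eqref{eq: 1RSB: Equation after partial derivative wrt mu - beta inf and h zero}, together with a careful justification of the interchange of $\lim_{\beta\to\infty}$ with the saddle-point evaluation.
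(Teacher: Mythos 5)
Your plan is correct in its overall architecture and matches the route taken in the paper's Appendix~\ref{proof43}: the paper likewise starts from $\sS(\beta)=\beta\,\partial_\beta(\beta\sF)-\beta\sF$ (equivalently $\beta(\E-\sF)$, since $\E=\partial_\beta(\beta\sF)$ at the saddle point), exploits the saddle-point stationarity to replace total $\beta$-derivatives by partial ones, invokes the $\mu_1$-stationarity equation to trade the log-partition integral for algebraic expressions in the macroscopic parameters, and then observes that the surviving pieces telescope to $\chi\,R(-\chi)-\int_0^\chi R(-w)\,\rd w$. The RS reduction in the paper is effected via $\mu_1/\beta=1$ and $\chi_0=\chi_1+\mu_1 p_1$, whereas you suggest specializing $p_1=0$; either parametrization of the RS limit is serviceable and arrives at the same formula.

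The one substantive thing your proposal leaves implicit, and which the paper's proof actually turns on, is the control of the subleading-in-$\beta$ behavior of $\sF(\beta)$. Since both $\E(\beta)$ and $\sF(\beta)$ converge to $\bE$, the quantity $\beta(\E-\sF)$ is a $\infty\cdot 0$ form, and one must establish the $O(\beta^{-1})$ expansion of the log-partition contribution in \eqref{eq: 1RSB: Expression for the derivative of Xi-n at n zero with f g and varepsilon}. The paper does this by noting that, for a discrete alphabet, the corrections to the zero-temperature saddle-point value of the inner Gaussian-averaged log-partition are exponentially small in $\beta$ away from the Voronoi boundaries and contribute only from an $O(\beta^{-1})$-thin region near nearest-neighbor bisectors, so that this term has the form $\beta A+O(\beta^{-1})$; consequently, to the needed order it equals $\beta$ times its own partial $\beta$-derivative, which is then evaluated via the fixed-point equations. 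Without this specific observation one cannot isolate the finite residue from the divergent $\beta\chi_1$, $\beta\mu_1 p_1$, $\beta q_1$ pieces that you (correctly) flag as the main bookkeeping obstacle. Your phrase ``careful justification of the interchange of $\lim_{\beta\to\infty}$ with the saddle-point evaluation'' gestures at the issue but does not supply the argument, so this step should be made explicit before the plan constitutes a complete proof.
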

Proposition \ref{prop: 1RSB entropy result}  gives rise to the conjecture that the entropy for general $r$-step RSB is given by $\chi_r R(-\chi_r) - \int_0^{\chi_r} R(-w) \, \rd w$ (see \eqref{eq: rRSB: Definition of the rRSB structure of Q} for the definition of the macroscopic parameters in the general setting).

In any stable thermodynamic system the entropy is non-negative for all temperatures. However, one of the main pitfalls of the RS solution of the original SK-model is that its zero-temperature entropy is negative, indicating an instability \cite{Nishimori-Book-2001}. 
For all $R$-transforms that are strictly increasing functions of negative real arguments, Proposition \ref{Cor: RS entropy result} clearly implies that the entropy is always negative, becoming zero only when the zero temperature value of $\chi_1$, respectively $\chi_0$, approaches zero.
While the full RSB solution has been shown to have vanishing entropy at zero temperature and corresponds to the correct solution, the following lemma proven in Appendix \ref{prooflemma}, indicates that negative entropy is a rather common effect for finite RSB steps.

\begin{lem}
\label{incRtrafo}
The R-transform, wherever its derivative with respect to a real argument exists, is an increasing function.
If the probability distribution  is different from a single mass point, the R-transform is strictly increasing.
\end{lem}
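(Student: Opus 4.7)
The plan is to work with the Cauchy/Stieltjes transform representation of the $R$-transform. Let $\mu$ denote the underlying probability measure and put $G(z) = \int (z-t)^{-1}\,d\mu(t)$ for $z$ in the complement of $\mathrm{supp}(\mu)$. Using Voiculescu's defining relation one has $R(w) = G^{-1}(w) - 1/w$, and the hypothesis that $R'$ exists at a real $w$ amounts to saying that the functional inverse $G^{-1}$ is differentiable there; equivalently, $w$ lies in the image under $G$ of a real neighborhood of some point $z \in \mathbb{R} \setminus \mathrm{supp}(\mu)$.

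The first step is to fix such a $w$ and set $z = G^{-1}(w)$, so that $w = G(z)$. Differentiating under the integral gives $G'(z) = -\int (z-t)^{-2}\,d\mu(t)$, which is strictly negative on the real complement of $\mathrm{supp}(\mu)$. The inverse function theorem then yields
\begin{equation*}
R'(w) \;=\; \frac{1}{G'(z)} + \frac{1}{w^{2}}.
\end{equation*}
Since the first summand is negative and the second positive, the sign of $R'(w)$ is not immediate and must be extracted from a size comparison.

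The key step is a single Cauchy--Schwarz inequality in $L^{2}(\mu)$ applied to $f(t) = (z-t)^{-1}$ and $g(t) \equiv 1$, using that $\mu$ is a probability measure:
\begin{equation*}
w^{2} \;=\; \Bigl( \int (z-t)^{-1}\,d\mu(t)\Bigr)^{\!2} \;\leq\; \int (z-t)^{-2}\,d\mu(t) \;=\; -G'(z) \;=\; |G'(z)|.
\end{equation*}
Rearranging, $1/|G'(z)| \leq 1/w^{2}$, and since $G'(z) < 0$ this is the same as $1/G'(z) \geq -1/w^{2}$. Substituting into the formula above gives $R'(w) \geq 0$, proving monotonicity.

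For the strict statement, the Cauchy--Schwarz inequality is tight if and only if $f$ and $g$ are linearly dependent $\mu$-almost everywhere, i.e.\ $(z-t)^{-1}$ is $\mu$-a.s.\ constant; since $z$ is fixed, this forces $t$ to be $\mu$-a.s.\ constant, meaning $\mu$ is a Dirac mass. Hence whenever $\mu$ is not a point mass the inequality is strict and $R'(w) > 0$. The whole argument is essentially a one-liner once the parameterization is in place; the only delicate point is the sign bookkeeping, namely that $G'(z) < 0$ on the real complement of $\mathrm{supp}(\mu)$, which is what allows the Cauchy--Schwarz bound on $w^{2}$ to translate cleanly into the desired sign of $R'(w)$. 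I would therefore expect Appendix~\ref{prooflemma} to contain little more than a careful writeup of this chain of identities.
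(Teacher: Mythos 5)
Your proof is correct and is essentially the paper's own argument in Appendix~\ref{prooflemma}: both parameterize via the Stieltjes/Cauchy transform, express $R'(w)$ as $1/G'(z)+1/w^2$ (the paper writes $-1/m'(s)+1/m(s)^2$, which is the same modulo the sign convention $m=-G$), and conclude by the variance inequality $\int(z-t)^{-2}\,d\mu \ge \bigl(\int(z-t)^{-1}\,d\mu\bigr)^2$ with equality iff $\mu$ is a point mass. The paper attributes that inequality to Jensen where you invoke Cauchy--Schwarz, but these are the same nonnegativity-of-variance statement.
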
 

Note that the above argument for the entropy holds only for discrete state variables. In the case of continuous alphabets, the (then differential) entropy of a system can in fact be negative. Therefore, a negative zero-temperature entropy is not an alarm bell \emph{per se}.
For discrete state variables, the zero-temperature entropy serves as a measure of accuracy: the closer it is to zero, the better the approximation.

\section{Zero-Forcing Front-End}\label{sec: Zero-Forcing Front-End}

To gain more insight into the impact  on system performance of the nonlinear precoding scheme under investigation, we now particularize to a specific linear front-end, namely the ZF front-end.
The precoding matrix $\Mat{T}$ in this case is given by the pseudo-inverse of the channel transfer matrix, which we write here as
\begin{equation}
\label{eq: Definition of the zero-forcing precoder}
\Mat{T} = \Mat{H}^+ = \lim\limits_{\epsilon\to0}\Mat{H}^\dag\left(\Mat{H}\Mat{H}^\dag+\epsilon\mbox{\bf I}\right)^{-1} \quad .
\end{equation}

The underlying assumptions are that $N \ge K$ and that the matrix $\Mat{H}\Mat{H}^\dag$ is
almost surely (a.s.)  positive definite\footnote{In Section~\ref{speceff}, we will also allow for $N<K$ following the treatment in \cite{miguel-TWC-10}.}.
Focusing on the asymptotic regime for $K/N \rightarrow \alpha \le 1$,
then using \eqref{eq: Basic System Model}, \eqref{eq: Relation of transmitted vector to the vector of coded symbols},
and Proposition \ref{prop: 1RSB: Limiting Energy Penalty}, the equivalent single-user channel observed by user $i$ is
\begin{equation}
\label{eq: Normalized Received signal of the i-th user with zero-forcing - large system limit}
\check{r}_i \approx  x_i + \check{n}_i , \quad K \gg 1 \ ,
\end{equation}
where $\check{n}_i$ is a zero mean circularly symmetric complex Gaussian noise with variance $\frac{1}{\rho}$,
\begin{equation}\label{eq: Definition of rho}
\rho \triangleq \frac{\snr}{\bE_\rsb}
\end{equation}
denotes the effective received SNR, and $\bE_\rsb$ is given by \eqref{eq: 1RSB: Limit of the effective energy penalty}.

\begin{prop}
\label{prop: Decoupling result for Zero-Forcing}
Employing the same underlying assumptions  as in Proposition \ref{prop: 1RSB: Limiting Energy Penalty}, then with a ZF front-end the channel observed by a randomly chosen user is equivalent in the large system limit to a concatenated single-user channel, with input $u \in \U$, intermediate output $x \in \B_u$, and final output $y \in \mathbb{C}$, specified by the Markov chain $u$--$x$--$y$ as shown in Figure \ref{fig: ZF equivalent SU channel model}. This Markov chain is defined by the following joint probability density function
\begin{equation}
\label{eq: Joint probability in decoupling proposition}
f_{UXY}(u,x,y) = f_U(u) f_{X|U}(x|u) f_{Y|X}(y|x) \quad,
\end{equation}
where 
\begin{eqnarray}\label{eq: Definition of conditional probability in decoupling proposition}
f_{X|U}(x|u) &=& \sum_{\tilde{x} \in \B_u} P_{X|U}(\tilde{x}|u) \delta(x-\tilde{x}) \quad ,
\end{eqnarray}
with $P_{X|U}(\mathsf{x}|u)$ given by \eqref{eq: 1RSB: Equation for the limiting conditional distribution - Infinite beta}, and
\begin{eqnarray}
f_{Y|X}(y|x) = \frac{\rho}{\pi} \e^{-\abs{y-x}^2\rho}
\end{eqnarray}
is the (complex) Gaussian density with mean $x$ and variance $1/\rho$.
\end{prop}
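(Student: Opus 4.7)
The plan hinges on two ingredients already in place: (i) the ZF front-end decouples the vector channel into parallel AWGN links per user, and (ii) Propositions~\ref{prop: 1RSB: Limiting Energy Penalty} and~\ref{prop: 1RSB: Marginal limiting conditional distribution of x given u} already supply the large-system behavior of the energy penalty and of the conditional law of $x_i$ given $u_i$. The task is to paste them together and verify that all residual finite-$K$ couplings vanish.

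First, since $N\geq K$ and $\Mat{H}\Mat{H}^\dag$ is a.s.\ non-singular, $\Mat{H}\Mat{T}=\Mat{I}_K$. Substituting \eqref{eq: Relation of transmitted vector to the vector of coded symbols} into \eqref{eq: Basic System Model} and rescaling the components by $\sqrt{\E^\tot/P_\tot}$ gives
\begin{equation*}
\check{r}_i \triangleq \sqrt{\E^\tot/P_\tot}\, r_i \;=\; x_i + \check{n}_i,\qquad \check{n}_i\,\sim\,\N_c\!\left(0,\; \sigma^2\E^\tot/P_\tot\right),
\end{equation*}
where the $\{\check{n}_i\}$ are conditionally i.i.d.\ given $\E^\tot$. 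Thus, conditionally on $\E^\tot$, the channel is already in the claimed parallel single-user form and only the scalar noise variance is still random.

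Second, I would invoke self-averaging of the energy penalty as established in Proposition~\ref{prop: 1RSB: Limiting Energy Penalty}: $\E^\tot/K\to\bE_\rsb$ in probability, so the noise variance of $\check{n}_i$ converges to $\bE_\rsb/\snr=1/\rho$, and the conditional law of $\check{n}_i$ converges to the Gaussian density $f_{Y|X}(\,\cdot\mid x_i)$ stated in the proposition. This eliminates the asymptotic randomness carried by $\E^\tot$ and, crucially, breaks the finite-$K$ statistical link between $\check{n}_i$ and $x_i$ that passed through the normalization.

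Third, for the joint marginal of $(u_i,x_i)$ I would invoke exchangeability of the users: the law of $(u_i,x_i)$ for a uniformly random index coincides with the empirical joint distribution $P^{(K)}_{X,U}$ of \eqref{eq: joint empirical definition}, which Proposition~\ref{prop: 1RSB: Marginal limiting conditional distribution of x given u} combined with \eqref{eq: Definition of dP_u} identifies in the limit with $f_U(u)\,f_{X|U}(x|u)$ as in \eqref{eq: Definition of conditional probability in decoupling proposition}. The Markov structure $u$--$x$--$y$ is then automatic: the noise $\check{n}_i$ is drawn independently of $\vct{u}$ and, by the previous step, asymptotically independent of $x_i$, so conditioning on $x_i$ renders $y_i$ independent of $u_i$, yielding the factorization \eqref{eq: Joint probability in decoupling proposition}.

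The main obstacle I anticipate is precisely the finite-$K$ dependence between $\check{n}_i$ and $x_i$ through the shared normalization factor $\sqrt{\E^\tot/P_\tot}$, since $\E^\tot$ is a function of the \emph{entire} precoder output vector and hence of $x_i$ itself. Self-averaging is the tool that neutralizes this coupling, replacing $\E^\tot/K$ by its deterministic limit $\bE_\rsb$ and allowing the product form of \eqref{eq: Joint probability in decoupling proposition} to be read off; without this step the Markov chain $u$--$x$--$y$ would only hold conditionally on $\E^\tot$.
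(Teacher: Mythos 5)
Your proposal is correct and takes essentially the same approach as the paper, which simply cites the ZF identity $\Mat{H}\Mat{T}=\Mat{I}$, the normalization in \eqref{eq: Relation of transmitted vector to the vector of coded symbols}, the self-averaging of the energy penalty from Proposition~\ref{prop: 1RSB: Limiting Energy Penalty}, and the limiting conditional law from Proposition~\ref{prop: 1RSB: Marginal limiting conditional distribution of x given u}. Your proposal usefully makes explicit the one subtle point the paper leaves implicit, namely that the finite-$K$ coupling between $\check{n}_i$ and $x_i$ through the shared normalization factor $\sqrt{\E^\tot/P_\tot}$ is what self-averaging neutralizes, so that the product factorization and hence the Markov chain hold asymptotically rather than only conditionally on $\E^\tot$.
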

\begin{proof}
The Proposition follows straightforwardly from Proposition \ref{prop: 1RSB: Marginal limiting conditional distribution of x given u} and \eqref{eq: Normalized Received signal of the i-th user with zero-forcing - large system limit}.
\end{proof}
Note that the RS reduction of the above result is readily obtained from Propositions \ref{prop: RS: MGM Limiting Energy Penalty} and \ref{prop: RS Marginal limiting conditional distribution of x given u}, by replacing $\bE_\rsb$ in \eqref{eq: Definition of rho} with $\bE_\rs$ of \eqref{eq: RS: Limit of the effective energy penalty}, and taking \eqref{eq: RS: Equation for the limiting conditional distribution - Infinite beta} for $P_{X|U}(\tilde{x}|u)$ in \eqref{eq: Definition of conditional probability in decoupling proposition}.

\begin{figure}[!t]
\centering
\includegraphics[scale=0.25]{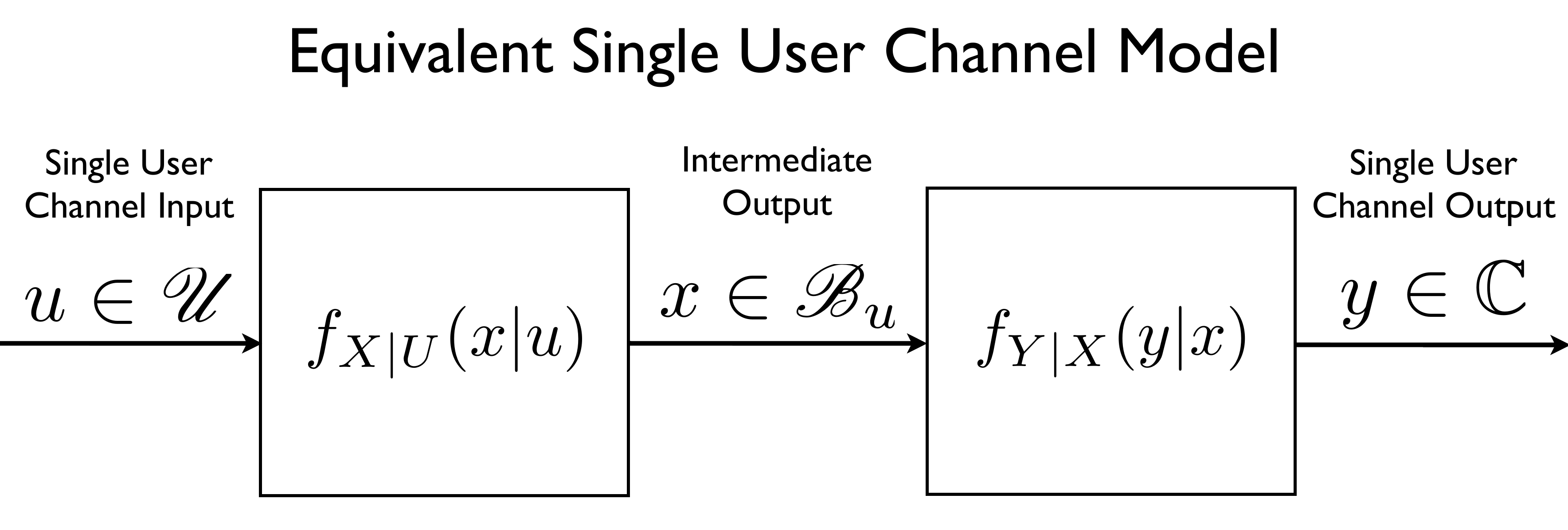}
\caption{Schematic description of the equivalent single user channel model for a ZF front-end.}
\label{fig: ZF equivalent SU channel model}
\end{figure}

The achievable throughput of the nonlinear precoding  scheme can be derived from the equivalent single-user channel model using Proposition \ref{prop: Decoupling result for Zero-Forcing}.
Accordingly, the achievable rate of a randomly chosen user is given by the mutual information\footnote{Note that in the large-system limit the receivers only need information about the state of their own channel, but not about the states of the other channels due to the self-averaging property which makes the impact of the other users' channels and data deterministic.}
between the input $u$ and received signal $y$, i.e.,
\begin{equation}
\label{eq: ZF Achievable rate in terms of mutual information}
\begin{aligned}
R &= {\rm I}(u\,;y) = {\rm h}(y) - {\rm h}(y|u) \quad ,
\end{aligned}
\end{equation}
where ${\rm h}(\cdot)$ and ${\rm h}(\cdot|\cdot)$ denote differential entropy and conditional differential entropy, respectively (which can be readily calculated using Proposition \ref{prop: Decoupling result for Zero-Forcing}).
The \emph{normalized} spectral efficiency is then given  by
\begin{equation}
\label{eq: Definition of the spectral efficiency}
C\approx \frac{K}{N} R \xrightarrow[K\rightarrow\infty]{} \alpha R \quad ,
\end{equation}
and it is functionally dependent on the system average $\ebno$ through the relation \cite{Shamai-Verdu-2001-fading}
\begin{equation}\label{eq: Relation between SNR and EbNo}
\snr = \frac{1}{\alpha} C \febno \quad .
\end{equation}

To get a better insight into the impact of the nonlinear precoding scheme, it is useful to compare the results to the spectral efficiency of DPC with Gaussian input (specifying the ultimate performance), as well as to the spectral efficiency of \emph{linear} ZF (for both Gaussian and discrete alphabet input). Another interesting comparison is to the spectral efficiency of generalized THP (GTHP), which is a popular practical nonlinear precoding alternative to the scheme considered here (see, e.g., \cite{Boccardi-Tosato-Caire-IZS-2006}).
For the sake of comparison we further particularize henceforth to the case in which the entries of the channel transfer matrix $\Mat{H}$ are i.i.d.\ zero-mean circularly symmetric complex Gaussian random variables, with variance $1/N$ (``a Gaussian $\Mat{H}$"). Note that in this case the $R$-transform of the limiting eigenvalue distribution of the random matrix $\Mat{J}=(\Mat{H}\Mat{H}^\dag)^{-1}$, and its derivative, simplifiy to \cite{Muller-Guo-Moustakas-JSAC-2008}
\begin{eqnarray}
R(w) &=& \frac{1-\alpha - \sqrt{(1-\alpha)^2-4\alpha w}}{2\alpha w} \quad , \label{eq: R-transform for Gaussian H} \\
R'(w) &=& \frac{\left( 1 - \alpha -  \sqrt{(1-\alpha)^2-4\alpha w}    \right)^2}{4\alpha w^2 \sqrt{(1-\alpha)^2-4\alpha w}} \quad . \label{eq: Derivative of R-transform for Gaussian H}
\end{eqnarray}

Starting with DPC, the limiting spectral efficiency in this setting coincides with the corresponding spectral efficiency of the \emph{dual uplink} channel with uniform power distribution \cite{Shamai-Verdu-2001-fading}. This follows from the limiting conclusion in \cite{Viswanath-Tse-Anatharam-IT-2001-Waterfilling}, and by observing that the
optimization problem over diagonal input covariance matrices, that specifies the maximum achievable sum-rate (see \cite{Caire-Shamai-Steinberg-Weingarten-2005} and references therein), is solved by a uniform power distribution \cite{Zaidel-Shamai-Verdu-JSAC-2001-short}. The spectral efficiency of DPC is hence given by \cite{Shamai-Verdu-2001-fading}
\begin{equation}
\label{eq: Spectral efficiency of DPC - Gaussian H}
C^{\dpc}(\snr) = \alpha \log_2\left(1+\snr - \frac{1}{4} \eF(\snr,\alpha)\right)   + \log_2 \left(1 + \alpha \, \snr - \frac{1}{4} \eF(\snr,\alpha)\right) - \frac{\log_2 \e}{4\snr}\eF(\snr,\alpha) \ ,
\end{equation}
where $\eF(\snr,\alpha)$ is defined as \cite{Verdu-Shamai-paper-3-99}
\begin{equation}
\label{eq: Definition of F-x-z}
\eF(\snr,\alpha)\triangleq \left( \sqrt{\snr \, (1+\sqrt{\alpha})^2+1} - \sqrt{\snr \, (1-\sqrt{\alpha})^2+1}  \right)^2 \quad .
\end{equation}


Regarding linear ZF, we restrict the discussion to the case in which the active user population can only be controlled through the system load $\alpha$, as is in fact assumed for the nonlinear precoding scheme (see also the discussion in Section \ref{sec: Spectral Efficiency Comparison}). In this setting, as shown, e.g., in \cite{Verdu-Book-98,Muller-IT-2001}, the induced precoding efficiency \eqref{eq: Factorization of the energy penalty with sig-u} (equivalent here to the inverse multiuser efficiency) converges in the large system limit to
\begin{equation}
\label{eq:ralf-1997}
\zeta_\zf = \frac{1}{1-\alpha} \quad ,
\end{equation}
and again for Gaussian input the spectral efficiency coincides with the corresponding result in \cite{Muller-IT-2001} (see also \cite{Caire-Shamai-2003,Verdu-Shamai-paper-3-99,Shamai-Verdu-2001-fading})
\begin{equation}
\label{eq: ZF Spectral Efficiency with Gaussian input}
C^\zf(\snr) =\alpha \log_2 \left(1+(1-\alpha)\, \snr \right) \quad .
\end{equation}
The corresponding spectral efficiency with discrete input alphabet can be derived, e.g., following the guidelines in \cite{Verdu-paper-low-snr-regime-02}. Considering the particular case of binary phase shift keying (BPSK) input, one obtains
\begin{equation}
\label{eq: ZF BPSK Spectral Efficiency}
C^{\zf,\bpsk}(\snr) =
\alpha\left(1 - \int_{-\infty}^\infty \sqrt{\frac{(1-\alpha)\, \snr}{\pi}} \e^{-(1-\alpha)\, \snr \,(s-1)^2} \log_2\left( 1 + \e^{-4 (1-\alpha)\, \snr \, s}\right) \, \rd s \right) \quad .
\end{equation}
The spectral efficiency of linear ZF precoding combined with QPSK input is obtained via the relation \cite{Verdu-paper-low-snr-regime-02}
\begin{equation}
\label{eq: ZF QPSK Spectral Efficiency}
C^{\zf,\qpsk}(\snr) = 2 C^{\zf,\bpsk}\left(\frac{\snr}{2}\right) \quad,
\end{equation}
yielding (through \eqref{eq: Relation between SNR and EbNo}) $C^{\zf,\qpsk}(\febno) = 2 C^{\zf,\bpsk}\left(\febno\right)$. The spectral efficiency of GTHP for the corresponding setting is derived in Appendix \ref{App: Spectral Efficiency of Generalized Tomlinson-Harashima Precoding}.


\section{Lattice Precoding: An RSB Example}\label{sec: Replica Symmetry Breaking Example}

Adhering to \cite{Muller-Guo-Moustakas-JSAC-2008}, we consider in the following a particular example of a discrete relaxed alphabet set for QPSK signaling, which exhibits replica symmetry breaking. The original QPSK constellation alphabet is represented by the set
\begin{equation}
\U=\set{1+j,-1+j,-1-j,1-j} \quad ,
\end{equation}
and quadrature symmetric transmissions are assumed (note that the above definition induces $\sigma_u^2=2$). The relaxed alphabets in this particular example can be represented as points from the extended lattice
\begin{equation}
\B_u = \frac{u}{1+j} ((4\mathbb{Z}+1)\times(4\mathbb{Z}+1)) \quad , \quad \forall u \in \U \quad .
\end{equation}

\begin{figure}[!t]\begin{center}
\includegraphics[scale=0.5]{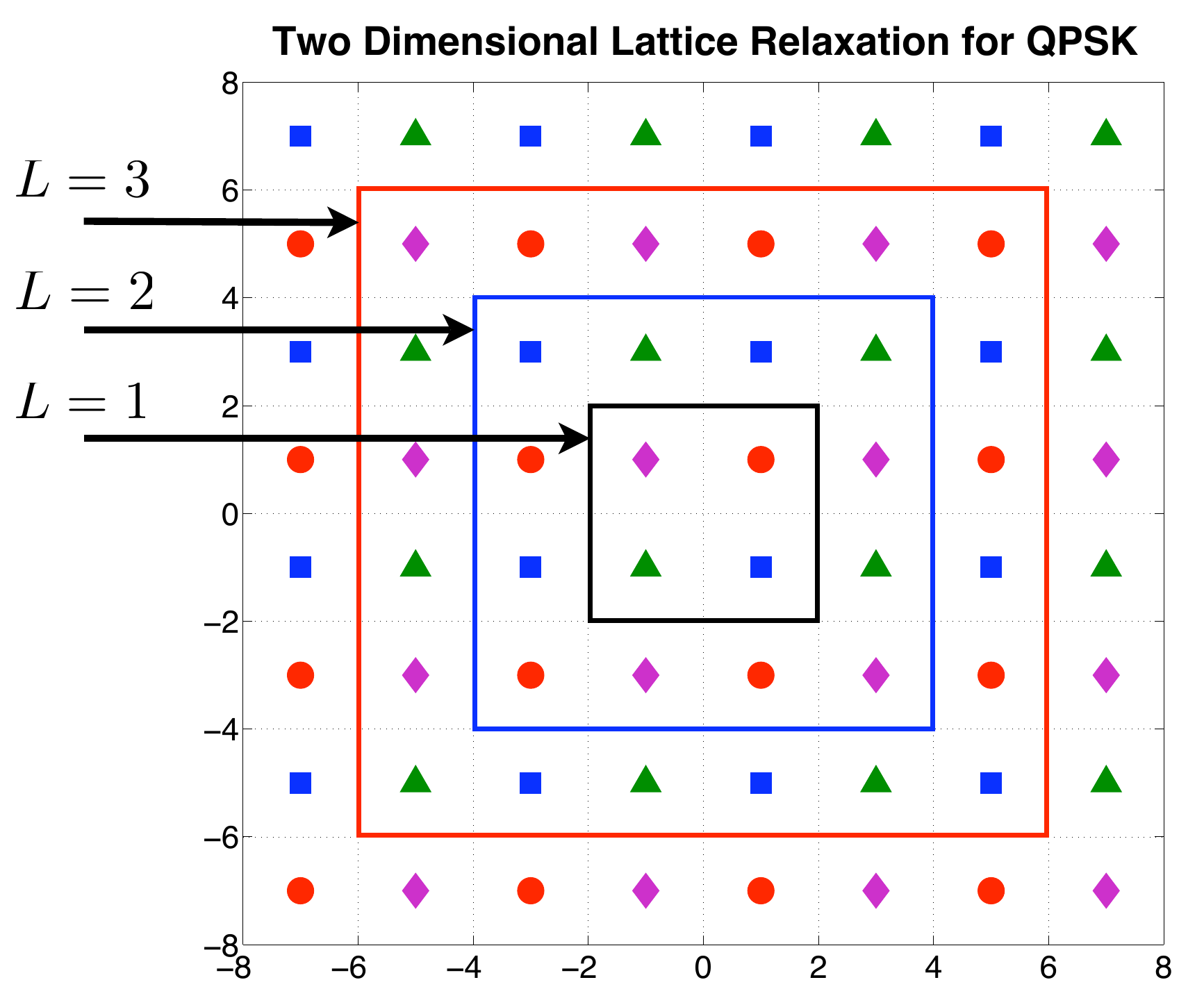}
\end{center}
\caption{Two dimensional lattice based relaxation for QPSK input.} \label{fig: Two dimensional lattice based relaxation for QPSK input}
\end{figure}

More specifically,  we take
\begin{equation}
\B_{\pm1 \pm j}=\pm\set{c_1,c_2,\dots,c_L} \pm j\set{c_1,c_2,\dots,c_L} \quad ,
\end{equation}
where it is assumed that $-\infty =c_0 < c_1 < \cdots < c_L <  c_{L+1} = \infty$. The parameter $L$ thus specifies the number of lattice points used in the extended alphabet in each dimension, and we particularize here to the set $\set{+1,-3,+5,-7,+9,\dots}$. The alphabet relaxation scheme is depicted in Figure \ref{fig: Two dimensional lattice based relaxation for QPSK input}.
Due to the complete quadrature symmetry of this setting, all QPSK constellation points and their corresponding relaxed alphabet subsets are completely equivalent, and we focus in the following, for notational convenience, on the QPSK constellation point represented by $u= 1+j$, and $\B_{1 + j}$.

The first step in the analysis is to rewrite \eqref{eq: 1RSB: Equation after partial derivative wrt f - beta inf h zero}--\eqref{eq: 1RSB: Equation after partial derivative wrt mu - beta inf and h zero} and obtain the four macroscopic parameters $\set{q_1, p_1,\chi_1,\mu_1}$ for the current example.
Denoting the real and imaginary parts of an arbitrary point $s\in \mathbb{C}$ as  $s_\re \triangleq \Re \set{s}$ and $s_\im \triangleq \Im\set{s}$, the Voronoi region of the lattice point $x=c_m + j c_n$ is the region in the complex plane for which 
\begin{equation}
s_\re  \in  (v_m,v_{m+1}) \quad , \quad
s_\im \in  (v_n,v_{n+1}) \quad ,
\end{equation}
where the boundaries of the Voronoi regions are $\set{v_i= \frac{c_i+c_{i-1}}{2}}$.
Now considering \eqref{eq: 1RSB: Definition of the function sF}, recall that for any given $y,z \in \mathbb{C}$ the lattice point that maximizes the exponent therein is given by $\underset{x \in \B_{u}}{\argmin} \abs{f_1 z + g_1y - \varepsilon_1  x}$. This implies that a lattice point $x=c_m + j c_n$ is the solution to the above minimization problem whenever
\begin{equation}\label{eq: 1RSB QPSK: Definition of the integration region for the 2D discrete lattice}
y_\re \in \left(\psi_m(z_\re) ,\psi_{m+1}(z_\re)  \right) \ , \quad
y_\im \in  \left(\psi_n(z_\im) ,\psi_{n+1}(z_\im)  \right) \ ,
\end{equation}
where we introduced the real argument function
\begin{equation}\label{eq: 1RSB QPSK: Definition of psi-xi}
\psi_k(\xi) \triangleq \frac{\varepsilon_1 v_k - f_1 \xi}{g_1} \quad .
\end{equation}

Applying this observation to \eqref{eq: 1RSB: Equation after partial derivative wrt f - beta inf h zero}--\eqref{eq: 1RSB: Equation after partial derivative wrt mu - beta inf and h zero}, and exploiting the quadrature symmetry property, the derivation simplifies considerably by noticing that the inner integrals therein can be represented as sums of separate integrals over the regions specified by \eqref{eq: 1RSB QPSK: Definition of the integration region for the 2D discrete lattice}. Accordingly, consider the two real argument functions
\begin{equation}
\Theta_k(\xi)
\triangleq \e^{\mu_1 c_k \left[ (\mu_1 g_1^2 - \varepsilon_1) c_k+ 2  f_1 \xi\right]} \biggl[ Q\left( \sqrt{2} (\psi_k(\xi) - \mu_1  g_1 c_k) \right)  - Q\left( \sqrt{2} (\psi_{k+1}(\xi) - \mu_1  g_1 c_k) \right)\biggr]  \  , \label{eq: 1RSB QPSK: Definition of Theta-xi}
\end{equation}
\begin{equation}
\Psi_k(\xi) \triangleq \frac{1}{2 \sqrt{\pi}} \e^{\mu_1 c_k \left[ (\mu_1 g_1^2 - \varepsilon_1) c_k+ 2  f_1 \xi\right]}
\biggl[   \e^{-( \psi_k(\xi) - \mu_1  g_1 c_k)  )^2}  - \e^{-( \psi_{k+1}(\xi) - \mu_1  g_1 c_k)  )^2}  \biggr]  \label{eq: 1RSB QPSK: Definition of Psi-xi} \ .
\end{equation}
Then, following some tedious algebra, it can be shown from \eqref{eq: 1RSB: Equation after partial derivative wrt f - beta inf h zero}--\eqref{eq: 1RSB: Equation after partial derivative wrt mu - beta inf and h zero} that the parameters $\set{q_1, p_1, \chi_1, \mu_1}$ are the solutions to the coupled equations
\begin{eqnarray}
q_1 &=& 2   \int_{-\infty}^\infty \frac{ \sum_{m=1}^L c_m^2  \Theta_m(\xi) }{\sum_{k=1}^L \Theta_k(\xi) }  \e^{-\xi^2} \, \frac{\rd \xi}{\sqrt{\pi}} - p_1 \quad , \label{eq: 1RSB QPSK: Fixed point equation for q} \\
 p_1 &=& \frac{2}{f_1\mu_1}    \int_{-\infty}^\infty \frac{\sum_{m=1}^L c_m \Theta_m(\xi)}{\sum_{k=1}^L \Theta_k(\xi)}
\xi \e^{-\xi^2} \, \frac{\rd \xi}{\sqrt{\pi}} - \frac{\chi_1}{\mu_1}  \ , \label{eq: 1RSB QPSK: Fixed point equation for p} \\
\chi_1  &=&
\frac{2}{g_1}  \int_{-\infty}^\infty \frac{\sum_{m=1}^L c_m \Psi_m(\xi)  }{\sum_{k=1}^L \Theta_k(\xi)}   \e^{-\xi^2} \, \frac{\rd \xi }{\sqrt{\pi}} \ , \label{eq: 1RSB QPSK: Fixed point equation for chi}
\end{eqnarray}
and
\begin{multline}
\mu_1   =  \left[2 q_1(\chi_1 + \mu_1 p_1)R'(-\chi_1 - \mu_1p_1)\right]^{-1}  \cdot \biggl[2 \int_{-\infty}^\infty \log \left(\sum_{m=1}^L \Theta_m(\xi) \right) \e^{-\xi^2} \, \frac{\rd \xi}{\sqrt{\pi}} \\
-\int_{\chi_1}^{\chi_1+\mu_1 p_1} R(-w)\, \rd w - 2\mu_1 \chi_1 g_1^2  + \mu_1 ( q_1  +2 p_1) R(-\chi_1 -\mu_1p_1)    \biggr]     \ . \label{eq: 1RSB QPSK: Fixed point equation for mu}
\end{multline}

The corresponding energy penalty is obtained by plugging the four solutions into \eqref{eq: 1RSB: Limit of the effective energy penalty}. Applying the same approach to Proposition \ref{prop: 1RSB: Marginal limiting conditional distribution of x given u}, the limiting conditional probability of the precoder output being $x=c_m+j c_n \in \B_{1+j}$ is given by
\begin{equation}\label{eq: 1RSB QPSK: Conditional distribution for c-m j c-n}
\begin{aligned}
\Pr\set{x= c_m + j c_n \in \B_u | u=1+j}  &= \int_{-\infty}^\infty \frac{\Theta_m(\xi)}{ \sum_{k=1}^L \Theta_k(\xi)} \, \e^{-\xi^2} \, \frac{\rd \xi}{\sqrt{\pi}}  \cdot
\int_{-\infty}^\infty \frac{\Theta_n(\zeta)}{\sum_{\ell=1}^L \Theta_\ell(\zeta) } \, \e^{-\zeta^2} \, \frac{\rd \zeta}{\sqrt{\pi}} \\
&= \Pr\set{\Re\set{x}= c_m | u=1+j} \cdot \Pr\set{\Im\set{x} = c_n | u=1+j} \ .\end{aligned}
\end{equation}
The limiting conditional probabilities that correspond to the rest of the QPSK constellation points are readily obtained from \eqref{eq: 1RSB QPSK: Conditional distribution for c-m j c-n} by symmetry considerations.
Note also that \eqref{eq: 1RSB QPSK: Conditional distribution for c-m j c-n} implies that the real part and the imaginary part of the precoder output $x$ behave as independent random variables.

Numerical results for the limiting energy penalty of the discrete lattice relaxation scheme are plotted in Figure \ref{fig: Energy penalty comparison}. 
\begin{figure}[tb]
\begin{center}
\includegraphics[scale=0.45]{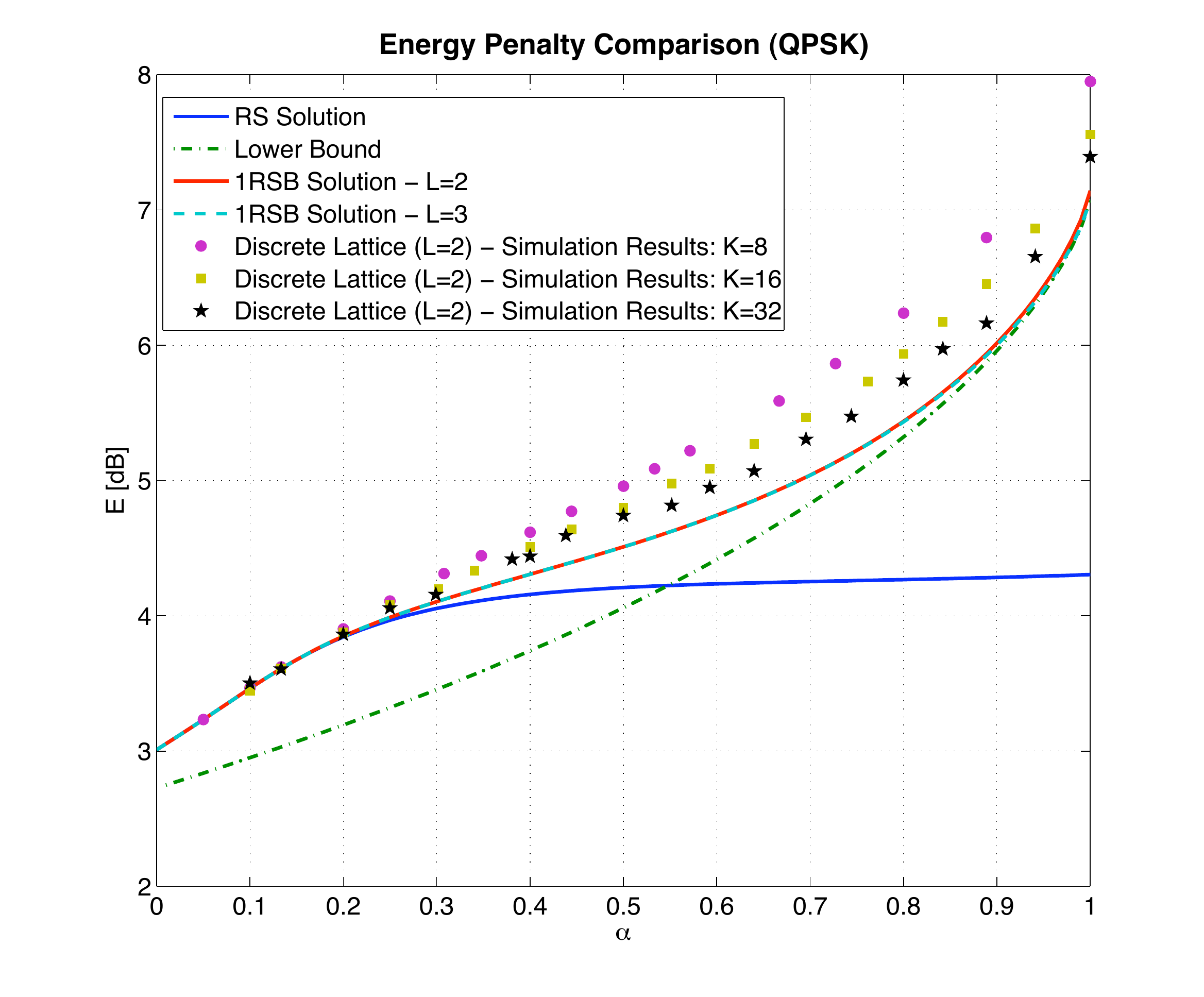}
\caption{\small The energy penalty per symbol, as a function of the system load $\alpha$, for the two dimensional discrete alphabet relaxation scheme for QPSK input.
\label{fig: Energy penalty comparison}}
\end{center}
\end{figure}
The figure shows the limiting energy penalty (in dB) as a function of the system load $\alpha$, for the particular case of a Gaussian $\Mat{H}$ and a ZF front-end.
Since $\sigma_u^2 = 2$, the corresponding precoding efficiency \eqref{eq: Factorization of the energy penalty with sig-u} can be immediately obtained by subtracting $3\rm dB$ from the energy penalty shown in the figure. The results in Figure \ref{fig: Energy penalty comparison} correspond to alphabet relaxations with $L=2$ and $L=3$. Note that the two curves are essentially indistinguishable and the energy penalty with $L=2$ becomes only negligibly larger as $\alpha$ gets close to unity. 
This implies that increasing $L$ beyond $2$ in this setting provides diminishing returns. 
Empirical energy penalties obtained through Monte Carlo simulations are  also included in the figure. 
The results are for systems in which the \emph{number of users} is fixed to $K=8$, $K=16$, and $K=32$ (averaged over $10^4$, $10^3$, and $10^2$ channel realizations, respectively).
 The energy penalty is shown to decrease with the system size, and the simulation results exhibit a good match to the limiting energy penalty predicted by the 1RSB replica analysis. The lower bound for the limiting energy penalty obtained in \cite{Ryan-Collings-Clarkson-Heath-TRANS-COMM-2009} is also plotted in this figure which, with appropriate scaling to match the current setting, is given by
\begin{equation}\label{eq: Ryan et al lower bound for the limiting energy}
\bE_{LB}  = \frac{16}{\pi} (1-\alpha)^{\frac{1}{\alpha}-1} \quad .
\end{equation}
Figure \ref{fig: Energy penalty comparison} shows that the 1RSB prediction approaches the lower bound as the load approaches unity. Note however that the 1RSB result stays strictly higher than the lower bound. In fact, a careful numerical examination of the limiting 1RSB energy penalty at $\alpha=1$ shows that it hits the value of $7.0744$ dB for $L \ge4$, while the lower bound in this case is $\frac{16}{\pi}\approx 7.0697$ dB. The numerical analysis of the limiting energy penalty is considerably simplified in this region of $\alpha $ by the (numerical) observation that the macroscopic parameter $\chi_1$ approaches $0$ as $\alpha \rightarrow 1$ (although it stays strictly positive). The small $\chi_1$ approximation of the equations employed to calculate the limiting 1RSB energy penalty is shortly discussed in Appendix \ref {app: chi-zero approximation of the 1RSB equations in the vicinity of unit load}.
The RSB phenomena is demonstrated by considering the limiting energy penalty obtained via the RS approximation, as stated by Proposition \ref{prop: RS: MGM Limiting Energy Penalty} (the explicit expression for the current example is given in \cite[Eq.\ (26)]{Muller-Guo-Moustakas-JSAC-2008}). As shown in Figure \ref{fig: Energy penalty comparison}, the RS approximation fails to predict the limiting energy penalty for $\alpha > 0.3$, and in fact it even violates the lower bound \eqref{eq: Ryan et al lower bound for the limiting energy} for $\alpha > 0.55$.

The better accuracy of 1RSB is also visible looking at the zero-temperature entropy.
We can analytically evaluate Proposition \ref{Cor: RS entropy result} in the case of a Gaussian $\Mat{H}$, which becomes 
\begin{equation}\label{eq: Explicit entropy expression - Gaussian H}
{\bS}= \frac{1-\alpha - \sqrt{(1-\alpha)^2 + 4 \alpha \chi}}{2 \alpha} +
   \frac{1-\alpha}{\alpha}  \log \left(\frac{1-\alpha +\sqrt{(1-\alpha)^2 + 4 \alpha \chi}}{2(1-\alpha)}\right) \quad .
\end{equation}
The entropy for both the RS and 1RSB approximations for a relaxation level $L=2$ are shown in Figure \ref{fig: Zero temperature entropy}.
Although the 1RSB solution of the above model also has negative zero-temperature entropy, it is much closer to zero, corresponding to a much weaker instability, and approaches zero as $\alpha \rightarrow 1$. In contrast, the RS entropy drifts away from zero as $\alpha\rightarrow 1$.

\begin{figure}[t]
\begin{center}
\centerline{\includegraphics[scale=0.5]{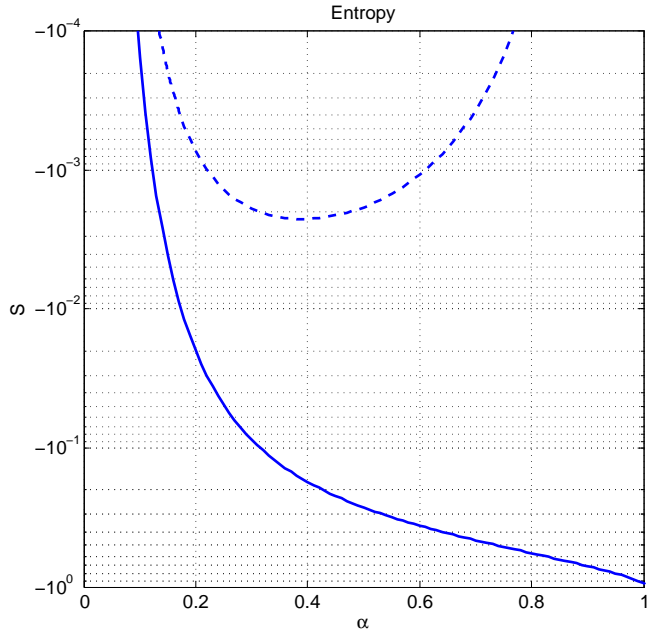}}
\caption{Zero temperature entropy of the RS solution (solid), and 1RSB solution (dashed), as a function of the system load $\alpha$ (corresponding to \eqref{eq: Explicit entropy expression - Gaussian H}).
\label{fig: Zero temperature entropy}}
\end{center}
\end{figure}

The limiting conditional probabilities of \eqref{eq: 1RSB QPSK: Conditional distribution for c-m j c-n} are plotted in Figure  \ref{fig: Discrete Lattice - Conditional probabilities},
\begin{figure}[tb]
\begin{center}
\includegraphics[scale=0.45]{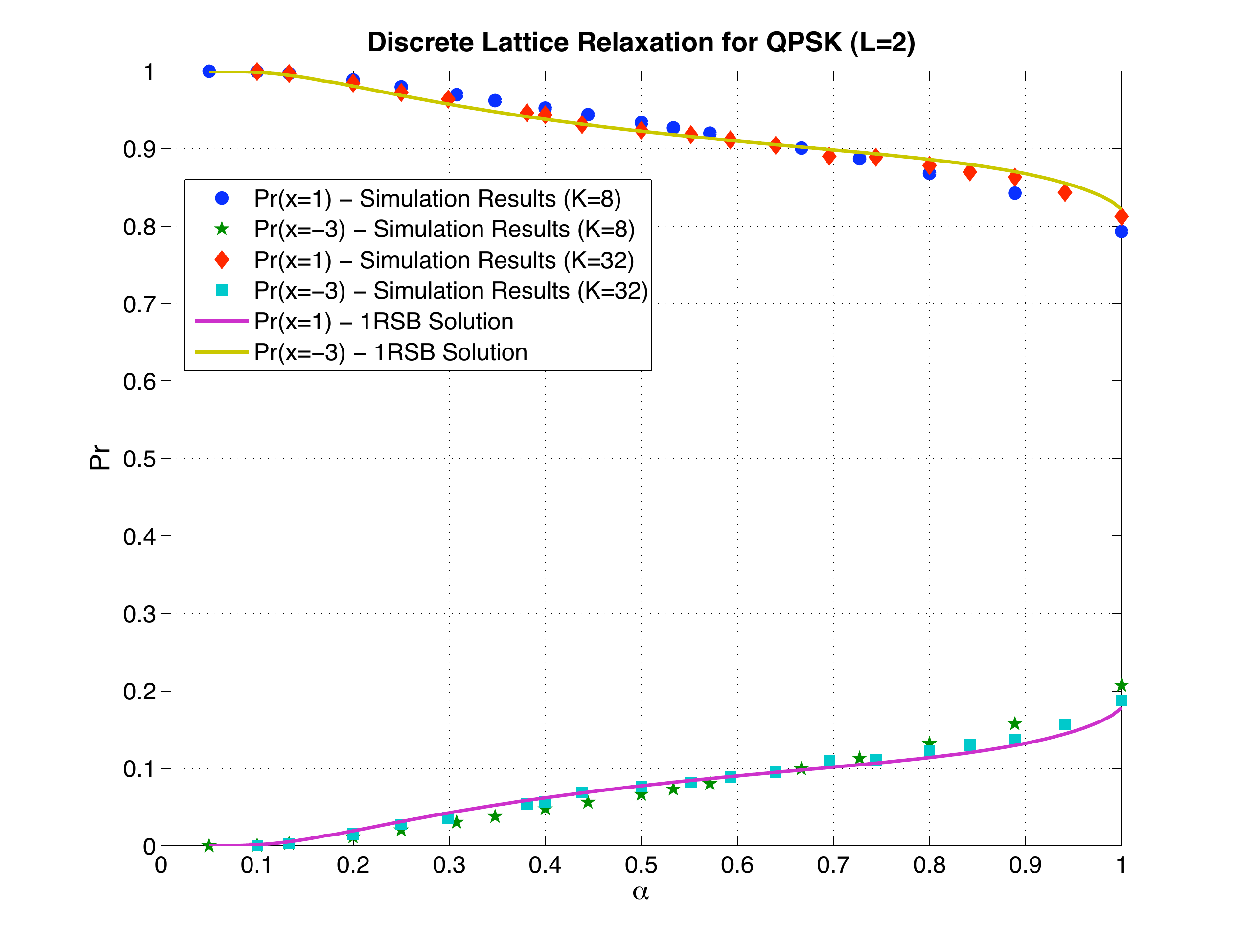}
\end{center}
\caption{Conditional probabilities of the real part of the precoder output for the two-dimensional discrete lattice relaxation scheme, given that $\Re\set{u}=1$. \label{fig: Discrete Lattice - Conditional probabilities}}
\end{figure}
as well as the empirical conditional probabilities, based on the Monte Carlo simulations employed to produce the energy penalties of Figure  \ref{fig: Energy penalty comparison}. The results correspond to a relaxation level of $L=2$, and focus on the \emph{real part} of the extended alphabet points, given that the real part of the original QPSK constellation point satisfies $\Re\set{u}=1$ (recall the decoupling of the real and imaginary parts implied by \eqref{eq: 1RSB QPSK: Conditional distribution for c-m j c-n}). The simulation results exhibit again a good match to the limiting analytical 1RSB prediction. It is also clearly demonstrated that, when the system load $\alpha$ is low, hardly any relaxation is required, while the probability of using symbols from the extended alphabet set increases as $\alpha$ approaches unity.

\section{Convex Precoding: An RS Example}\label{sec: A Replica Symmetric Example}

This section is devoted to another alphabet relaxation scheme, also introduced in
\cite{Muller-Guo-Moustakas-JSAC-2008} for QPSK signaling. The key feature of this relaxation scheme is that the extended alphabet set is continuous and \emph{convex}, allowing for an efficient solution to the corresponding quadratic programming problem of minimizing the energy penalty.
Convex optimization problems are generally believed not to exhibit replica symmetry breaking \cite{mezard}. In certain special cases this has been shown explicitly \cite{moustakas-2007}.
Furthermore, as will be demonstrated in the sequel,
the \emph{replica symmetric} solution for this alphabet relaxation scheme agrees well with numerical simulations and thus
considerably simplifies the analysis of the limiting regime.

Denoting
\begin{equation}
\label{eq: CR-QPSK - Formal definition}
\B_{1+j} = \set{z \in \mathbb{C} : \Re\set{z} \ge 1, \Im\set{z} \ge 1} \quad ,
\end{equation}
the relaxed alphabet subsets are defined by
\begin{equation}
\B_u = \frac{u}{1+j} \B_{1+j} \quad , \quad u \in \set{1+j,-1+j,-1-j,1-j} \quad.
\end{equation}
The alphabet relaxation scheme is depicted in Figure \ref{fig: Illustration of CR-QPSK},
and it is referred to henceforth as \emph{convex relaxation for QPSK (CR-QPSK)}.

\begin{figure}[tb]
\begin{center}
\includegraphics[scale=0.3]{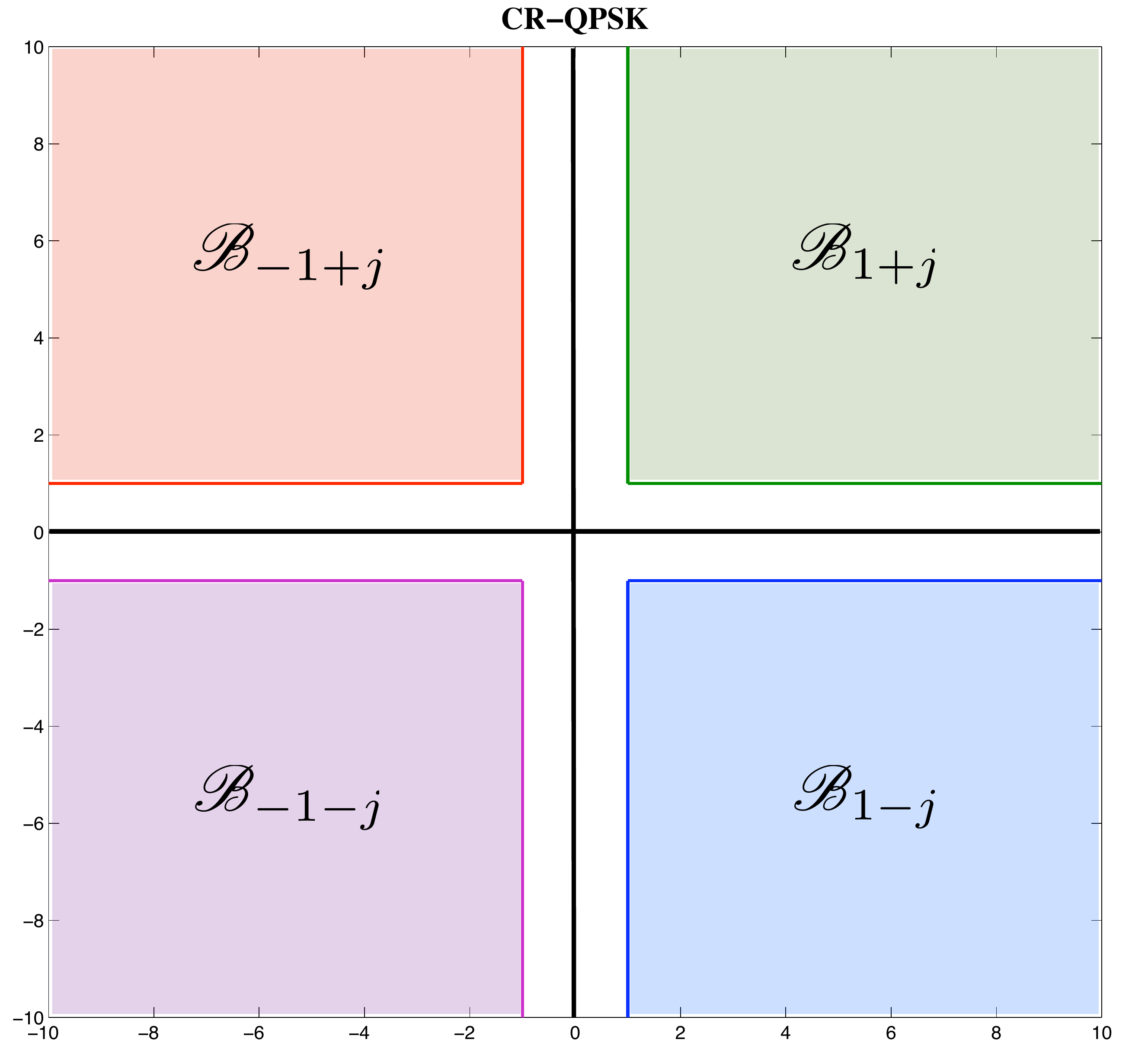}
\caption{Extended alphabet sets for the convex relaxation precoding scheme for QPSK signaling.}
\label{fig: Illustration of CR-QPSK}
\end{center}
\end{figure}

The RS approximation for the limiting energy penalty with the CR-QPSK relaxation scheme is obtained through Proposition \ref{prop: RS: MGM Limiting Energy Penalty}, and it is given by the solution to the following fixed point equation \cite[Eq.\ (30)]{Muller-Guo-Moustakas-JSAC-2008}
\begin{equation}
\label{eq: CR-QPSK energy penalty equation}
Q\left(\sqrt{\frac{2}{\alpha\bE}}\right) = \frac{2 +(\alpha-1)\bE +\sqrt{\frac{\alpha\bE}{\pi}} \e^{-\frac{1}{\alpha\bE}} }{2+\alpha\bE} \quad.
\end{equation}
Note that \eqref{eq: CR-QPSK energy penalty equation} yields finite energy penalties for all loads $0\le\alpha<2$. Although loads greater than unity imply that the matrix $\vct{HH}^\dagger$ in \eqref{eq: Definition of the zero-forcing precoder} is singular, this does not lead to interference at the receivers in the large system limit, as shown rigorously in \cite{miguel-TWC-10}. 

Numerical results for the limiting energy penalty of CR-QPSK are plotted in Figure \ref{fig: Energy penalty of CR-QPSK}. 
\begin{figure}[tb]
\begin{center}
\includegraphics[scale=0.45]{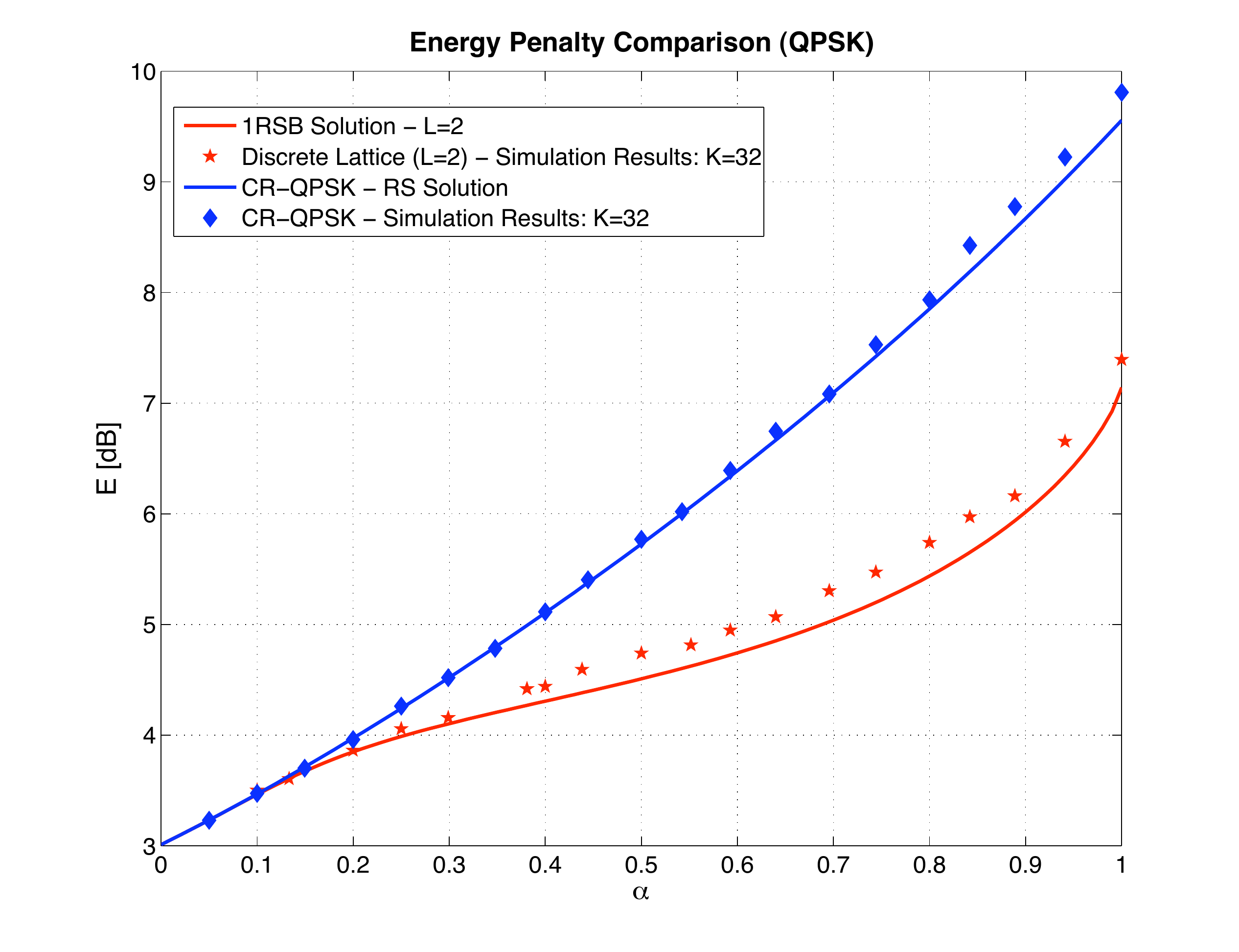}
\caption{The energy penalty per symbol as a function of the system load $\alpha$, for the CR-QPSK alphabet relaxation scheme. Corresponding results for the discrete alphabet relaxation scheme of Section \ref{sec: Replica Symmetry Breaking Example} are provided for comparison.}
\label{fig: Energy penalty of CR-QPSK}
\end{center}
\end{figure}
Empirical results based on Monte Carlo simulations are provided as well. These results were obtained by fixing the number of users to $K=32$, and averaging over 1000 channel realizations. The results exhibit an excellent match to the limiting RS analytical results, thus supporting the validity of the RS approximation. The corresponding results for the discrete lattice-based alphabet relaxation scheme of Section \ref{sec: Replica Symmetry Breaking Example} are also provided for the sake of comparison, and it is clearly observed that in terms of the limiting energy penalty, the discrete scheme is superior to the CR-QPSK scheme for all $\alpha\in (0,1]$. The limiting energy penalty difference approaches its maximum value of $2.41\,$dB at $\alpha=1$.  As will be shown in Section \ref{sec: Spectral Efficiency Comparison}, however, the comparison becomes more subtle when spectral efficiency is investigated.

\begin{figure}[tb]
\begin{center}
\includegraphics[scale=0.45]{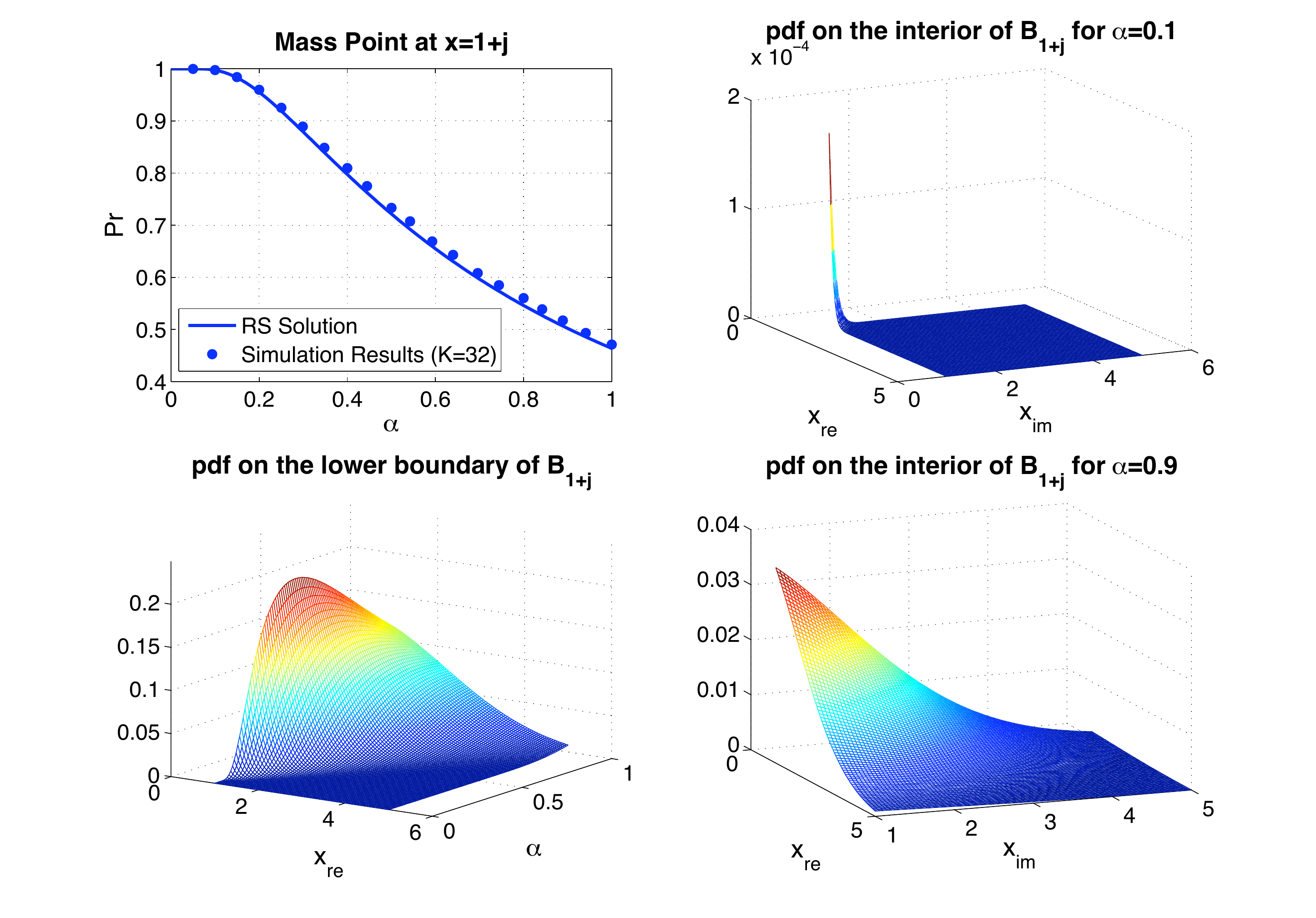}
\caption{The limiting conditional distribution of the precoder output for the CR-QPSK scheme, given that $u=1+j$.}
\label{fig: Conditional densities for CR-QPSK}
\end{center}
\end{figure}

The RS approximation of the limiting conditional distribution of the precoder outputs is obtained using Proposition \ref{prop: RS Marginal limiting conditional distribution of x given u}. The idea here is to start from a discretized version of the continuous CR-QPSK relaxed alphabet set, and obtain the limiting conditional distribution of each relaxed alphabet point using \eqref{eq: RS: Equation for the limiting conditional distribution - Infinite beta}. The final step is then to take the limit as the areas of the Voronoi cells corresponding to each such point vanish. Using this approach, while restricting the discussion to a Gaussian $\Mat{H}$ and focusing for convenience on the QPSK constellation point $u=1+j$, one gets the corresponding conditional probability \emph{density} function (pdf) 
\begin{equation}
\label{eq: CR-QPSK conditional distribution for 1+j}
\begin{aligned}
f^{\crqpsk}_{X|U}(x|u=1+j)
&=Q_1^2  \delta(x_\re-1)  \delta(x_\im-1)
\\ &\phantom{=} + Q_1 \frac{1}{\sqrt{\pi\alpha\bE^{\crqpsk}}} \e^{-\frac{x_\im^2}{\alpha\bE^{\crqpsk}}} \delta(x_\re-1) \mathcal{U}(x_\im-1) \\
&\phantom{=}+ Q_1 \frac{1}{\sqrt{\pi\alpha\bE^{\crqpsk}}} \e^{-\frac{x_\re^2}{\alpha\bE^{\crqpsk}}} \delta(x_\im-1) \mathcal{U}(x_\re-1)  \\
&\phantom{=}+ \frac{1}{\pi\alpha\bE^{\crqpsk}} \e^{-\frac{\abs{x}^2}{\alpha\bE^{\crqpsk}}} \mathcal{U}(x_\re-1) \mathcal{U}(x_\im-1) \ , \quad x_\re, x_\im \in \mathbb{R} \quad ,
\end{aligned}
\end{equation}
where we decompose the complex argument as $x\triangleq x_\re + j x_\im$, $\mathcal{U}(x)$ denotes the unit step function, $\bE^{\crqpsk}$ denotes the limiting energy penalty of the CR-QPSK scheme obtained from \eqref{eq: CR-QPSK energy penalty equation}, and the constant $Q_1$ is defined as
\begin{equation}
\label{eq: Definition of Q_1}
Q_1 \triangleq Q\left(-\sqrt{\frac{2}{\alpha\bE^{\crqpsk}}}\right) \quad .
\end{equation}
The conditional pdf given the rest of the QPSK constellation points (i.e.,  $u \in \{-1+j,$ $-1-j,1-j\}$) is obtained in an analogous manner, while considering the full symmetry of the extended constellation.

Returning to \eqref{eq: CR-QPSK conditional distribution for 1+j}, note that this pdf contains masses on the boundaries of $\B_{1+j}$, and in particular a mass point at the original QPSK constellation point (i.e., $x=1+j$). Plots that demonstrate this behavior of the pdf as a function of $\alpha$ are provided in Figure \ref{fig: Conditional densities for CR-QPSK}. The upper left plot shows the weight of the mass point at $x=1+j$, as a function of $\alpha$ (corresponding to $Q_1^2$). The lower left plot shows the pdf mass on the lower boundary of the extended alphabet subset $\B_{1+j}$ (i.e., when the imaginary part of the precoder's output is fixed to $x_\im=j$). The plots on the right show the pdf on the interior of $\B_{1+j}$, for $\alpha=0.1$ (upper right) and $\alpha=0.9$ (lower right). The increase in probability of using extended alphabet points as the system load increases, is clearly demonstrated in the figure. 

Additional numerical results comparing the analytical RS approximation for the pdf to empirical simulation results are shown in the upper left plot of Figure \ref{fig: Conditional densities for CR-QPSK} and in Figure \ref{fig: CR-QPSK: CDF comparison - empirical vs RS - Prob x_re lt x}. 
\begin{figure}[tb]
\begin{center}
\includegraphics[scale=0.45]{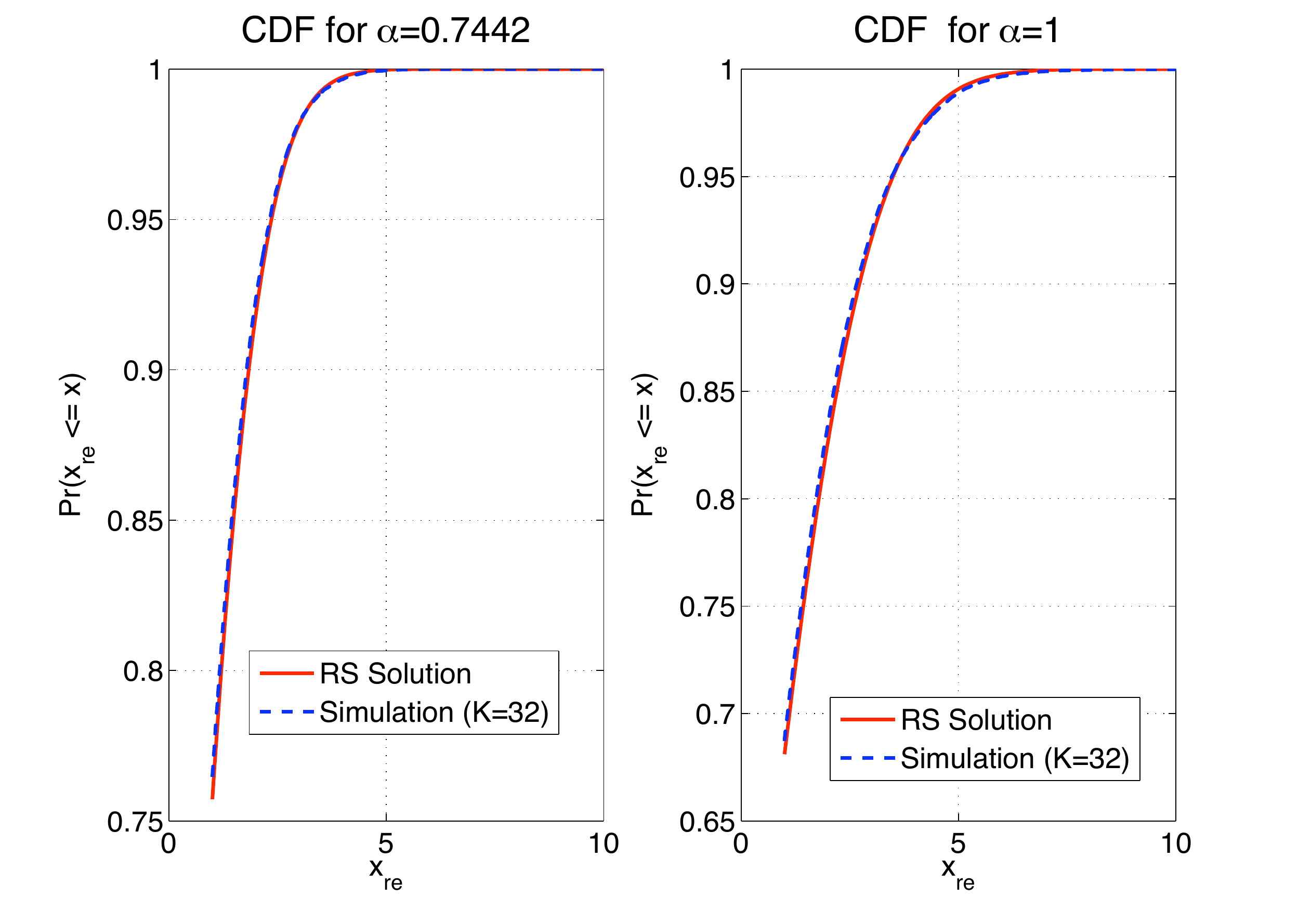}
\caption{CDF of the \emph{real} part of the precoder's output for the CR-QPSK scheme, given that $\Re\set{u}=1$.}
\label{fig: CR-QPSK: CDF comparison - empirical vs RS - Prob x_re lt x}
\end{center}
\end{figure}
The upper left plot of Figure \ref{fig: Conditional densities for CR-QPSK} compares the probability mass at $x=1+j$ to corresponding simulation results for $K=32$ (averaged over 1000 channel realizations). The corresponding comparison for the cumulative distribution function (CDF) of $x_\re$, given that $\Re\set{u}=1$, is shown in Figure \ref{fig: CR-QPSK: CDF comparison - empirical vs RS - Prob x_re lt x}. The left plot shows the CDF for $\alpha=0.7442$ (i.e., for $N=43$), while the right plot shows the results for unit load. As observed, all empirical results exhibit a very good match to the analytical RS approximation, further supporting the validity of the RS analysis for the CR-QPSK scheme.

\section{Spectral Efficiency Comparison}\label{sec: Spectral Efficiency Comparison}
\label{speceff}

The two previous sections focused on the \emph{transmitting} end of the system, and investigated the limiting behavior of the precoder output while employing two particular alphabet relaxation schemes. In the following we turn to investigate the limiting behavior of the system as a whole, by considering the normalized spectral efficiency in view of the analysis of Section \ref{sec: Zero-Forcing Front-End}. Accordingly, we restrict the discussion to a ZF front-end and a Gaussian $\Mat{H}$, and apply Proposition \ref{prop: Decoupling result for Zero-Forcing} to obtain the spectral efficiencies of the discrete lattice-based alphabet relaxation scheme and of CR-QPSK.

Starting with the discrete scheme, the spectral efficiency is obtained by incorporating \eqref{eq: 1RSB: Limit of the effective energy penalty} and \eqref{eq: 1RSB QPSK: Conditional distribution for c-m j c-n} into \eqref{eq: Joint probability in decoupling proposition}--\eqref{eq: Definition of the spectral efficiency}.
The observation made in Section \ref{sec: Replica Symmetry Breaking Example} regarding the independence of the real and imaginary parts of the precoder's output,  leads to the following conclusion.
The achievable rate in \eqref{eq: ZF Achievable rate in terms of mutual information} for QPSK input can  be obtained by treating QPSK signaling as two independent corresponding BPSK signaling settings.
Accordingly, the conditional precoder output probabilities, given a real BPSK input of $u=1$, are given by (cf.\ \eqref{eq: 1RSB QPSK: Conditional distribution for c-m j c-n})
\begin{eqnarray}
P^{\bpsk}_{X|u=1}(c_k) &\triangleq& \int_{-\infty}^\infty \frac{\Theta_k(\xi)}{ \sum_{m=1}^L \Theta_m(\xi) } \, \e^{-\xi^2} \, \frac{\rd \xi}{\sqrt{\pi}} \ ,  \label{eq: 1RSB: QPSK: Conditional probabilities for equivalent BPSK setting}
\end{eqnarray}
where we set $\B_1=\set{c_k}_{k=1}^L$, and the conditional probabilities given $u=-1$ can be immediately obtained from symmetry considerations. It is then straightforward to show that the corresponding spectral efficiency is given by
\begin{equation}
C^{\bpsk}(\rho) = \alpha \Biggl[1 - \int_{-\infty}^\infty  \sum_{k=1}^L P^{\bpsk}_{X|u=1}(c_k) \sqrt{\frac{\rho}{\pi}}  \e^{-\rho(\xi-c_k)^2}  \log_2\left(1 + \frac{\sum_{k=1}^L P^{\bpsk}_{X|u=1}(c_k)   \e^{-\rho(\xi+c_k)^2}}{\sum_{k=1}^L  P^{\bpsk}_{X|u=1}(c_k)  \e^{-\rho(\xi-c_k)^2}} \right) \, \rd \xi     \Biggr]  .
\label{eq: 1RSB: QPSK: Spectral efficiency for equivalent BPSK setting}
\end{equation}
The spectral efficiency with QPSK input is then obtained through the relation $C^\qpsk(\febno)=2C^{\bpsk}(\febno)$, while substituting
\begin{equation}
\rho  =  \frac{C \febno}{\alpha \bE_\rsb} \quad .
\end{equation}

Turning to the CR-QPSK scheme, and applying Proposition \ref{prop: Decoupling result for Zero-Forcing}, it can be shown that the conditional pdf of the equivalent single user channel output $y$, given an input $u$, is equal to\footnote{In the following notation the $\pm$ signs are designated with adherence to the corresponding signs of the real and imaginary parts of $u$. For example, for $u=1+j$ one should substitute $Q_2(y_\re)$, $\e^{-(y_\re - 1)^2\rho}$, $Q_2(y_\im)$, and $\e^{-(y_\im - 1)^2\rho}$ in the corresponding terms in \eqref{eq: CR-QPSK Conditional distribution of y given u}.
}
\begin{equation}
\label{eq: CR-QPSK Conditional distribution of y given u}
\begin{aligned}
f^\crqpsk_{Y|U}(y|u=\pm1 \pm j)
&= \sqrt{\frac{\rho}{\pi}} \left[  \frac{1}{\sqrt{1+\rho\alpha\bE^{\crqpsk}}}  Q_2(\pm y_\re) \e^{-\frac{y_\re^2 \rho}{1+\rho\alpha\bE^{\crqpsk}}} + Q_1 \e^{-(y_\re \mp 1)^2\rho}  \right] \\
&\phantom{=} \cdot \sqrt{\frac{\rho}{\pi}} \left[  \frac{1}{\sqrt{1+\rho\alpha\bE^{\crqpsk}}}  Q_2(\pm y_\im) \e^{-\frac{y_\im^2 \rho}{1+\rho\alpha\bE^{\crqpsk}}} + Q_1 \e^{-(y_\im \mp 1)^2\rho}  \right] \ ,
\end{aligned}
\end{equation}
where we decomposed the complex argument as $y\triangleq y_\re + j y_\im$, and the real argument function $Q_2(\xi)$ is defined as
\begin{equation}
\label{eq: Definition of Q_2}
Q_2(\xi) \triangleq Q\left(\sqrt{\frac{2}{\alpha\bE^{\crqpsk}}}\frac{\rho\alpha\bE^{\crqpsk}(1-\xi)+1}{\sqrt{1+\rho\alpha\bE^{\crqpsk}}}\right) \quad, \quad \xi \in \mathbb{R} \quad .
\end{equation}
The marginal distribution of the equivalent single user channel output $y$ is given by
\begin{equation}
\label{eq: CR-QPSK Marginal distribution of y}
\begin{aligned}
f^\crqpsk_Y(y)
&= \frac{1}{2}  \sqrt{\frac{\rho}{\pi}} \left[  \frac{Q_2(y_\re) +  Q_2(- y_\re)}{\sqrt{1+\rho\alpha\bE^{\crqpsk}}}  \e^{-\frac{y_\re^2 \rho}{1+\rho\alpha\bE^{\crqpsk}}} + Q_1 \left( \e^{-(y_\re - 1)^2\rho}  +  \e^{-(y_\re + 1)^2\rho}   \right) \right] \\
& \quad \cdot \frac{1}{2}  \sqrt{\frac{\rho}{\pi}} \left[  \frac{Q_2(y_\im) +  Q_2(- y_\im)}{\sqrt{1+\rho\alpha\bE^{\crqpsk}}} \e^{-\frac{y_\im^2 \rho}{1+\rho\alpha\bE^{\crqpsk}}} + Q_1 \left( \e^{-(y_\im - 1)^2\rho}  +  \e^{-(y_\im + 1)^2\rho}   \right) \right]  .
\end{aligned}
\end{equation}
Finally, following \eqref{eq: Definition of the spectral efficiency} and accounting for the inherent symmetry in \eqref{eq: CR-QPSK Conditional distribution of y given u} and \eqref{eq: CR-QPSK Marginal distribution of y}, the spectral efficiency of the CR-QPSK scheme is given  by
\begin{multline}
\label{eq: CR-QPSK Spectral Efficiency}
C^{\crqpsk}(\rho)
= 2\alpha \Biggl( 1 -   \int_{-\infty}^\infty
\sqrt{\frac{\rho}{\pi}} \left[  \frac{1}{\sqrt{1+\rho\alpha\bE^{\crqpsk}}}  Q_2(s) \e^{-\frac{s^2 \rho}{1+\rho\alpha\bE^{\crqpsk}}} + Q_1 \e^{-(s - 1)^2\rho}  \right] \\
 \cdot \log_2 \left(1 + \frac{    Q_2(- s)  \e^{-\frac{s^2 \rho}{1+\rho\alpha\bE^{\crqpsk}}} + \sqrt{1+\rho\alpha\bE^{\crqpsk}} Q_1 \e^{-(s + 1)^2\rho}  } { Q_2(s) \e^{-\frac{s^2 \rho}{1+\rho\alpha\bE^{\crqpsk}}} + \sqrt{1+\rho\alpha\bE^{\crqpsk}} Q_1 \e^{-(s - 1)^2\rho}}\right) \, \rd s \Biggr) \ .
\end{multline}

\begin{figure}[tb]
\begin{center}
\includegraphics[scale=0.4]{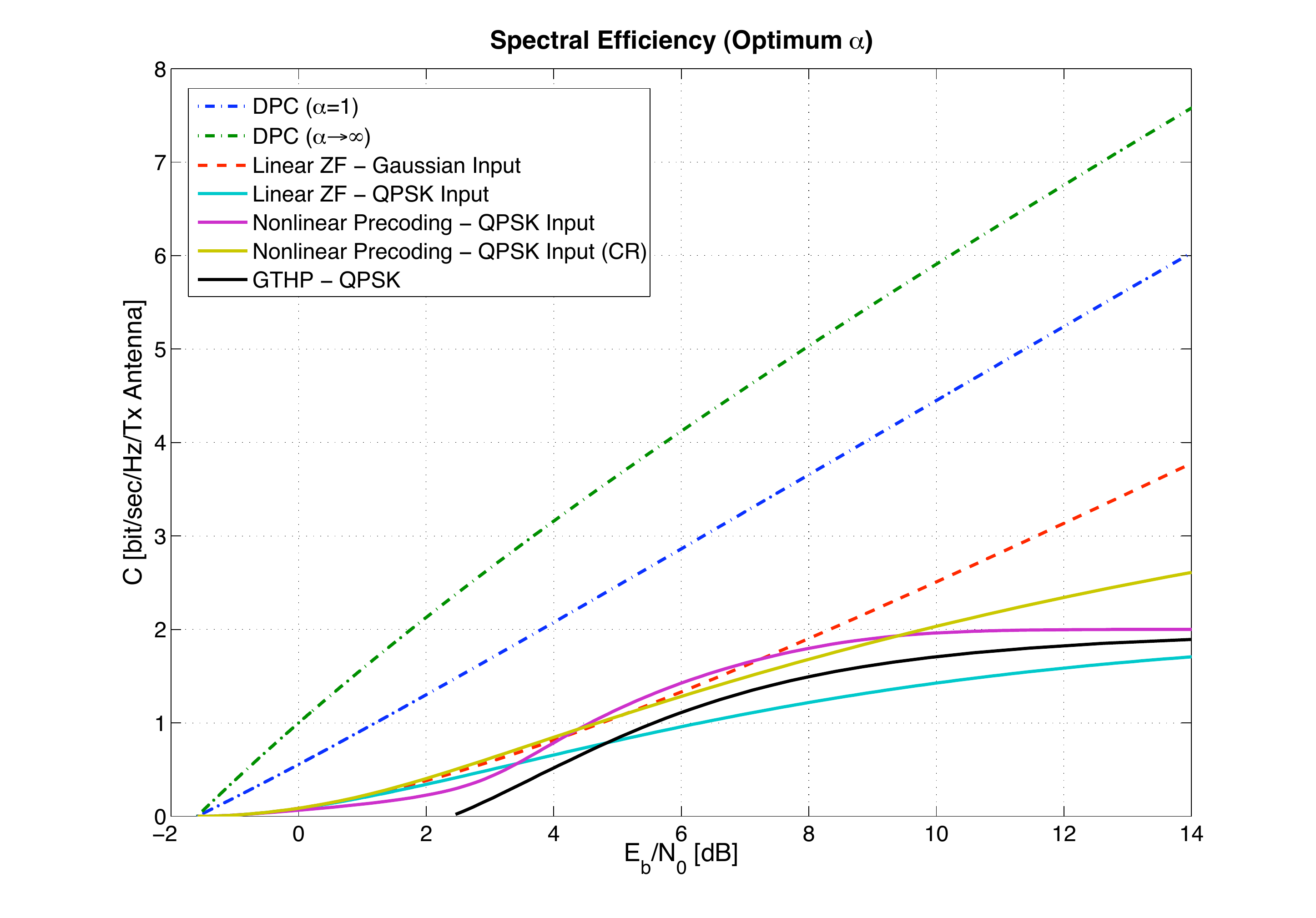}
\caption{\small Spectral efficiency results optimized with respect to the load $\alpha$.}
\label{fig: Spectral Efficiency Comparison - Optimum alpha}
\end{center}
\end{figure}

Comparative numerical spectral efficiency results are plotted in Figure \ref{fig: Spectral Efficiency Comparison - Optimum alpha}. The figure shows the spectral efficiencies of the discrete extended alphabet relaxation scheme (while taking $L=2$), and of the CR-QPSK scheme, as well as the spectral efficiency of linear ZF precoding for Gaussian and QPSK input (see \eqref{eq: ZF Spectral Efficiency with Gaussian input} and \eqref{eq: ZF QPSK Spectral Efficiency}, respectively), and the spectral efficiency of GTHP with QPSK input (following \eqref{App: eq: rate function for BPSK input and GTHP}--\eqref{App: Relation between spectral efficiency of QPSK and BSPK for GTHP}). The spectral efficiencies were evaluated for the \emph{optimum} choice of the system load $\alpha$. The optimum load is a function of $\febno$ and shown in Figure \ref{fig: Optimum Load}. 
\begin{figure}[tb]
\begin{center}
\includegraphics[scale=0.45]{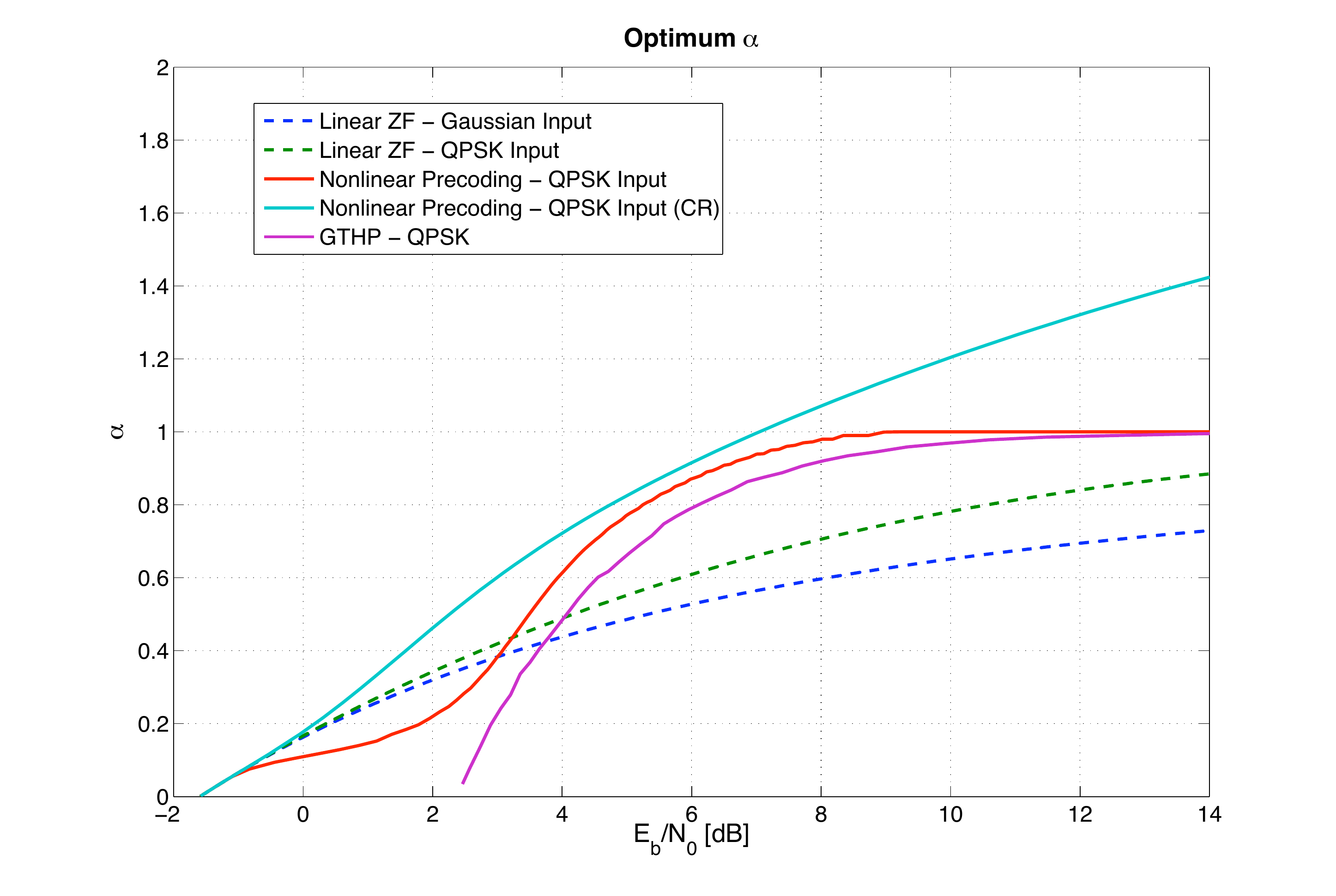}
\caption{\small System load that maximizes spectral efficiency as a function of $\febno$.}
\label{fig: Optimum Load}
\end{center}
\end{figure}
In Figure~\ref{fig: Spectral Efficiency Comparison - Optimum alpha}, the DPC spectral efficiency \eqref{eq: Spectral efficiency of DPC - Gaussian H} is also provided for comparison, evaluated both for $\alpha=1$, and for $\alpha \rightarrow \infty$ (specifying the ultimate performance). The optimization with respect to $\alpha$ emphasizes its role as a crucial system design parameter, facilitating the proper working point for each transmission scheme, per each $\febno$. It also naturally translates to a practical scheduling scheme, specifying the desired number of {\em simultaneously} active scheduled users per transmit antenna (see, e.g., \cite{Zaidel-Shamai-Verdu-JSAC-2001-short}).

The results indicate that nonlinear precoding can provide significant performance enhancement for medium to high $\febno$ values. The discrete lattice-based relaxation scheme is shown to outperform linear ZF with QPSK input for $\febno > 3.43\, \text{dB}$. The beneficial effect of the lattice relaxation scheme becomes more pronounced, the more the spectral efficiency approaches the upper limit of $2$ bits/sec/Hz per transmit antenna. For example, a spectral efficiency of $1.75$ bits/sec/Hz can be obtained with lattice relaxation already at $\febno \approx 7.66\,\text{dB}$, whereas linear ZF requires additional  $7.26\, \text{dB}$ for the same spectral efficiency. In fact, the QPSK-based lattice precoding scheme is shown to marginally outperform linear ZF with \emph{Gaussian} input for $4.19\,\text{dB} < \febno < 7.26\, \text{dB}$.
The lattice relaxation scheme also outperforms GTHP for all $\febno$ values, becoming more effective for medium to high $\febno$ (for example, GTHP needs $2.93\, \text{dB}$ more energy per bit to achieve $1.75$ bits/sec/Hz)\footnote{Note that in general the modulo-receiver employed by GTHP induces poor performance in the low spectral efficiency region (see Appendix \ref{App: Spectral Efficiency of Generalized Tomlinson-Harashima Precoding}).}. The gap from the DPC upper bound is, however, still essentially retained ($4.49\, \text{dB}$ at $1.75$ bits/sec/Hz, considering DPC with $\alpha=1$, to make a fairer comparison).

As for the CR-QPSK scheme, Figure \ref{fig: Spectral Efficiency Comparison - Optimum alpha} shows that it also provides a considerable performance enhancement over linear ZF with QPSK input.
It is  outperformed by the lattice relaxation scheme for 4.38 dB $< \febno <$ 9.40 dB.
It performs better at low values of $\febno$, and in fact it even negligibly outperforms linear ZF with \emph{Gaussian} input in the low $\febno$ region. Moreover, unlike the discrete scheme, CR-QPSK outperforms linear ZF precoding (with QPSK input) for \emph{all} $\febno$ values. 
Furthermore, it outperforms lattice relaxation in the high $\febno$ region, since it allows for loads up to $\alpha<2$ and therefore its spectral efficiency is no longer upper bounded by 2 bits/sec/Hz, but rather by 4 bits/sec/Hz per transmit antenna. Though, the convergence to the limiting spectral efficiency of 4 bits/s/Hz at high $\febno$ is rather slow. CR-QPSK also outperforms GTHP for all $\febno$ values, but the advantage is more significant for high $\febno$, where overloading is employed.
These results are of particular interest since the CR-QPSK scheme lends itself to efficient implementation, whereas the discrete relaxation scheme involves the solution of an NP-hard optimization problem. It is also important to note that, as shown in Figure \ref{fig: Energy penalty of CR-QPSK}, the CR-QPSK scheme is always inferior to the lattice relaxation scheme in terms of the limiting energy penalty. Hence, in view of the observations made here, one can conclude that restricting the analysis to the energy penalty alone provides only limited insight into the behavior of large \emph{coded} systems, as it essentially focuses only on the transmitter, while ignoring the impact of the nonlinear precoding scheme on the receiver.


\section{Concluding Remarks}\label{sec: Concluding Remarks}

The replica symmetry breaking ansatz of statistical physics was employed in this paper to investigate the large system limit behavior of nonlinear precoding for the MIMO Gaussian broadcast channel based on linear zero-forcing and alphabet relaxation.
For lattice relaxations, the replica symmetric ansatz was shown to yield misleading results for system loads greater than approximately 0.3 while the one-step replica symmetry breaking ansatz provides sensible results for any load.
For exact results, however, multiple-step replica symmetry breaking must be considered. 

Introducing a nonlinear superchannel comprising the actual channel and the precoder, allows for a Markov chain description of an individual user's channel. This enables the calculation of mutual information and spectral efficiency in the large system limit.
While convex QPSK relaxations are significantly outperformed by lattice relaxations in terms of transmitted energy per bit, they are very competitive when combined with strong error-correction coding as shown by the spectral efficiency analysis. Except for medium signal-to-noise ratios, they are superior to lattice relaxations. 
Both schemes were shown to outperform Tomlinson-Harashima precoding with QPSK input for all signal-to-noise ratios.

The combination of polynomial complexity and high spectral efficiency makes convex alphabet relaxation schemes, as introduced in \cite{Muller-Guo-Moustakas-JSAC-2008}, 
a promising alternative to the NP-hard lattice relaxations due to their polynomial complexity, and to Tomlinson-Harashima precoding due to their superior performance.
The results 
motivate the search for convex 
schemes amenable to efficient implementation. Additional examples for extended alphabets are currently investigated, see \cite{miguel-SP-09} for preliminary results. Note, however, that the problem of finding the optimum precoding scheme that maximizes the spectral efficiency in this framework is not at all trivial, as the corresponding equivalent channel statistics depend, in this setting, on the choice of input distribution and extended alphabet sets.

\appendix

\section{Higher RSB Orders}
\label{rstepRSB}

The $r$-step RSB ansatz reads
\begin{equation}
\label{eq: rRSB: Definition of the rRSB structure of Q}
\Mat{Q} = q_r \Mat{1}_{n \times n} + \sum\limits_{i=1}^{r} p_r^{(i)} \Mat{I}_{\frac{n\beta}{\mu_r^{(i)}} \times \frac{n\beta}{\mu_r^{(i)}}}\otimes  \Mat{1}_{\frac{\mu_r^{(i)}}{\beta} \times \frac{\mu_r^{(i)}}{\beta}} + \frac{\chi_r}{\beta} \Mat{I}_{n \times n} \quad ,
\end{equation}
using the constants $\set{q_r,p_r^{(1)},\dots,p_r^{(r)},\chi_r,\mu_r^{(1)},\dots,\mu_r^{(r)}}$. The limit as $r\to \infty$ is called \emph{full} RSB and gives the \emph{exact} solution to the problem \cite{Talagrand-2006}.
The particular temperature-dependent scaling of some parts of $\Mat Q$ is used to evaluate the free energy at zero temperature without getting divergent terms.
If a finite temperature is of interest a different scaling may be considered.
Plugging (\ref{eq: rRSB: Definition of the rRSB structure of Q}) into (\ref{eq: General expression for energy}), while exploiting the particular structure of $\Mat{Q}$, we find:

\begin{prop}\label{prop: r-RSB solution for energy entropy and free energy}
For any temperature, the energy for $r$-step RSB is
\begin{align}
\E(\beta) =& -q_r  \left[\chi_r+\sum_{i=1}^r\mu_r^{(i)}p_r^{(i)}\right]R^\prime \! \left(-\chi_r-\sum_{i=1}^r\mu_r^{(i)}p_r^{(i)}\right)+\nonumber\\
& + \left[q_r+\frac{\chi_r+\sum_{i=1}^r\mu_r^{(i)}p_r^{(i)}}{\mu_r^{(1)}}\right]R \! \left(-\chi_r-\sum_{i=1}^r\mu_r^{(i)}p_r^{(i)}\right)+\nonumber\\
& + \sum\limits_{j=2}^r\left(\frac1{\mu_r^{(j)}}-\frac1{\mu_r^{(j-1)}}\right)\left[{\chi_r+\sum_{i=j}^r\mu_r^{(i)}p_r^{(i)}}\right]R \! \left(-\chi_r-\sum_{i=j}^r\mu_r^{(i)}p_r^{(i)}\right)\nonumber+\\
& + \left(\frac{\chi_r}\beta -\frac{\chi_r}{\mu_r^{(r)}}\right) R (-\chi_r)
\label{eq:energy of r-RSB} \quad ,
\end{align}
where $R^\prime(\cdot)$ denotes the derivative of the function $R(\cdot)$.
\end{prop}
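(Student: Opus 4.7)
The plan is to specialize Proposition~\ref{prop:general_saddle_point_free_energy} to the $r$-step RSB ansatz \eqref{eq: rRSB: Definition of the rRSB structure of Q} by exploiting the observation that, under this ansatz, the $n\times n$ matrix $\Mat Q$ admits a nested block structure whose eigenspaces can be made completely explicit. Adopting the ordering convention $\mu_r^{(1)}>\mu_r^{(2)}>\cdots>\mu_r^{(r)}$, so that the block sizes $b_i \triangleq \mu_r^{(i)}/\beta$ of the Kronecker factors are decreasing, the subspaces $C_j$ of vectors constant on every level-$j$ block form an increasing chain $C_0\subset C_1\subset\cdots\subset C_r$ with $\dim C_j = a_j \triangleq n\beta/\mu_r^{(j)}$ (and $a_0\triangleq 1$).

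First I would diagonalize $\Mat Q$ on the orthogonal decomposition $W_0\oplus W_1\oplus\cdots\oplus W_r\oplus W_{r+1}$, with $W_0=C_0$, $W_j=C_j\ominus C_{j-1}$ for $j=1,\ldots,r$, and $W_{r+1}$ the orthogonal complement of $C_r$. Since $\Mat I_{a_i}\otimes\Mat 1_{b_i}$ acts on $C_i$ as $b_i$ times the identity and vanishes on its orthogonal complement, the eigenvalues, written in the shorthand $\nu_j\triangleq \chi_r+\sum_{i=j}^{r}\mu_r^{(i)}p_r^{(i)}$ (with $\nu_{r+1}\triangleq\chi_r$), are $\lambda_0 = nq_r+\nu_1/\beta$ with multiplicity $1$, $\lambda_j=\nu_j/\beta$ with multiplicity $a_j-a_{j-1}$ for $j=1,\ldots,r$, and $\lambda_{r+1}=\chi_r/\beta$ with multiplicity $n-a_r$.

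Next I would insert this spectral decomposition into $\E(\beta)=\lim_{n\to 0}\frac{1}{n}\Tr[\Mat Q\,R(-\beta\Mat Q)]$ from Proposition~\ref{prop:general_saddle_point_free_energy}, reducing the trace to a weighted sum $\frac{1}{n}\sum_k m_k\lambda_k R(-\beta\lambda_k)$. Two formally $1/n$-divergent pieces appear: one from $W_0$, where $\lambda_0$ carries the $n$-dependent shift $nq_r$, and one from the $j=1$ multiplicity $a_1-a_0$, which contributes $-\lambda_1 R(-\nu_1)/n$. The crucial step is to Taylor expand $R(-\beta\lambda_0)=R(-\nu_1-n\beta q_r)$ around $-\nu_1$ to first order in $n$: the zeroth-order term yields exactly $\lambda_1 R(-\nu_1)/n$ and cancels the $j=1$ singularity, while the first-order correction produces the surviving piece $q_r R(-\nu_1)-q_r\nu_1 R'(-\nu_1)$. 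The remaining multiplicity differences for $j=2,\ldots,r$ telescope into $\sum_{j=2}^{r}(1/\mu_r^{(j)}-1/\mu_r^{(j-1)})\nu_j R(-\nu_j)$, and the $W_{r+1}$ contribution supplies the boundary term $(\chi_r/\beta-\chi_r/\mu_r^{(r)})R(-\chi_r)$, jointly reproducing \eqref{eq:energy of r-RSB}.

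The main obstacle I anticipate is not computational but structural: the multiplicities $a_j-a_{j-1}$ must be treated as real-analytic functions of $n$ extended by analytic continuation, so that the cancellation of the $1/n$ singularities is an algebraic identity between formal expressions rather than a limit of genuine integer-dimensional traces. This is the familiar Parisi analytic-continuation subtlety of the replica method; once it is granted, the eigenvalue bookkeeping and the first-order Taylor expansion are routine and directly deliver the stated formula.
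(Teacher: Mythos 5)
Your proof is correct and carries out exactly the computation the paper leaves implicit: it diagonalizes $\Mat Q$ on the nested Parisi block decomposition, reads off the eigenvalues and multiplicities (which agree with the paper's $1$RSB footnote when $r=1$), substitutes into $\E(\beta)=\lim_{n\to0}\frac1n\Tr[\Mat Q\,R(-\beta\Mat Q)]$, and cancels the $1/n$ divergences by a first-order Taylor expansion of $R(-\nu_1-n\beta q_r)$. This is precisely what "plugging in and exploiting the particular structure of $\Mat Q$" means in the paper's one-line proof.
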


In order to proceed to full RSB, the limit $r\to\infty$ must be taken.
Naively, one might think this would make the sums in (\ref{eq:energy of r-RSB}) diverge.
However, the macroscopic parameters are determined by the saddle point equations, which guarantee that the sums stay finite through decreasing the macroscopic parameters.
Thus, we introduce a continuum of macroscopic parameters $\mu(x)$ and $p(x)$, taken over $0\le x\le 1$, such that
\begin{align}
q&=q_r \ ,\\
\chi&=\chi_r \ , \\
p(i/r)&=p_r^{(i)} \ , \\
\mu(i/r)&=\mu_r^{(i)} \ , \\
p(0)&=q_r \ , \\
\mu(0)&=1 \ ,
\end{align}
and the function
\begin{equation}
\sG(x) \triangleq -  \chi - \int\limits_x^1\mu(y)p(y){\rm d}y \quad .
\end{equation}
Accordingly, we find for the energy in the limit $r\to\infty$
\begin{multline}
\E(\beta) = q\,\sG(0) \,R^\prime [\sG(0)]
+ \left[q-\frac{\sG(0)}{\mu(0)}\right] R[\sG(0)] \\
 +\int_0^1 \sG(x) \, R[\sG(x)] \frac{{\rm d}\mu(x)}{\mu^2(x)}
 + \left(\frac{\sG(1)}{\mu(1)}+\frac\chi\beta\right) R[\sG(1)] \quad .
 \label{fullrsb}
 \end{multline}
Using integration by parts, (\ref{fullrsb}) simplifies to
 \begin{equation}
\E(\beta) = \left. \left[\sG(x)\,R[\sG(x)]\right]^\prime\right|_{x=0}
+ \frac\chi\beta\, R[\sG(1)] + \int\limits_0^1 \frac{{\rm d}\!\left[\sG(x)\,R[\sG(x)]\right]}{\mu(x)} \quad.
 \end{equation}
The functions $p(x)$ and $\mu(x)$
must be determined by the respective saddle point equations.

\section{Proofs for 1RSB}
\label{app: 1RSB: Proof of conditional probabilities proposition}

\subsection{Proposition \ref{prop: 1RSB: Limiting Energy Penalty}}
\label{proof41}

The joint distribution of the entries of the vector $\vct{x}$, conditioned on both the input vector $\vct{u}$ and the channel transfer matrix $\Mat{H}$, is given for a non-zero temperature by the Boltzmann distribution
\begin{equation}
\label{eq: 1RSB: Boltzmann Distribution of tilde-u}
P_{\mathcal{B}}(\vct{x}|\Mat{H},\vct{u})= \frac{1}{\Z} \e^{-\beta \vct{x}^\dag\Mat{J}\vct{x} } \quad ,
\end{equation}
where $\Z$ is the partition function defined in \eqref{eq: definition of Z}. Taking the limit $\beta \rightarrow \infty$ (zero temperature), the denominator in \eqref{eq: 1RSB: Boltzmann Distribution of tilde-u} is dominated by its maximum value term, and the limiting \emph{joint} distribution of the entries of $\vct{x}$, \emph{conditioned on all inputs}, converges to the Dirac measure at $\argmin_{\vct{x}\in \B_{\vct u}} \vct{x}^\dag\Mat{J}\vct{x}$, corresponding to the minimum normalized energy penalty, as given by Proposition \ref{prop: 1RSB: Limiting Energy Penalty}. 

To prove Proposition \ref{prop: 1RSB: Limiting Energy Penalty}, we will need to evaluate the free energy averaged over all realizations of 
 $\vct{u}$ and $\Mat{H}$. For future convenience, we also include the dummy variable $h$ and the function $V(\cdot)$ defined in \eqref{defsumV} and rewrite the free energy as   
\begin{equation}\label{eq: 1RSB: Definition of free energy in the replica analysis}
\begin{aligned}
-\beta\sF(\beta) &=  \lim_{K\rightarrow\infty} \frac{1}{ K} E_{\vct{u},\Mat{H}}\set{ \log  \Z({h};\vct{u},\Mat{H})}  \\
&= \lim_{K\rightarrow\infty} \sum_{\vct{u} \in \U^K} 
P_{U^K}(\vct{u}) 
\left(  \frac{1}{ K}  E_{\Mat{H}}\set{ \log  \Z({h};\vct{u},\Mat{H})}  \right)
\quad ,
\end{aligned}
\end{equation}
where $\Z({h};\vct{u},\Mat{H})$ is given by \eqref{eq: 1RSB: Definition of script-Z1}.
The second equality is a manifestation of the underlying assumption that the coded symbols of all users are drawn randomly
and independently of the channel transfer matrix $\Mat{H}$. 
In view of this formulation, we consider now the limit of the term in the parentheses above
\begin{equation}\label{eq: 1RSB: Limit to be considered - Expectation wrt H}
 \lim_{K\rightarrow\infty} \frac{1}{ K} E_{\Mat{H}}\set{ \log  \Z(h;\vct{u},\Mat{H})} \quad .
\end{equation}
As shown later on, this inner limit is a deterministic quantity, for almost every realization of the input vector $\vct{u}$. It will hence be concluded that in fact
\begin{equation}
\label{eq: 1RSB: Claim - Free energy is given by the inner limit alone}
-\beta\sF(\beta) =  \lim_{K\rightarrow\infty} \frac{1}{ K} E_{\Mat{H}}\set{ \log  \Z(h;\vct{u},\Mat{H})} \quad .
\end{equation}
As indicated earlier, the key tool in the derivation of the above quantity is the replica method of statistical physics, using the identity
\begin{equation}\label{eq: 1RSB: Identity used for replica analysis}
E_{\Mat{H}}\set{ \log  \Z(h;\vct{u},\Mat{H})} = \lim_{n\rightarrow 0} \frac{1}{n} \log  E_{\Mat{H}} \set{[\Z(h;\vct{u},\Mat{H})]^n} \quad ,
\end{equation}
and following the outline in Section \ref{sec: Replica Analysis}. With that in mind, the quantity $[\Z(h;\vct{u},\Mat{H})]^n$ is regarded as consisting of $n$ identical replicas of the original (unnormalized) probability model in the following way \cite{Tanaka-2002}
\begin{equation}
\label{eq: 1RSB: Definition of script-Z-n for replica calculations}
\begin{aligned}
{[\Z(h;\vct{u},\Mat{H})]}^{n} &= \left(\sum_{\vct{x}\in \B_{\vct{u}}} \e^{-\beta V(h,\xi,\upsilon,\vct x,\vct u))}\e^{-\beta \vct{x}^\dag\Mat{J}\vct{x} }\right)^n \\
&=\sum_{\set{\vct{x}_a}}  \e^{-\beta \sum_{a=1}^n V(h,\xi,\upsilon,\vct x_a,\vct u)}
 \e^{\sum_{a=1}^n -\beta \vct{x}_a^\dag\Mat{J}\vct{x}_a } \quad .
\end{aligned}
\end{equation}
Interchanging the limits of $K \rightarrow \infty$ and $n\rightarrow 0$, the focus is first on the derivation of the limit
\begin{equation}
\label{eq: 1RSB: Definition of Xi-n}
\begin{aligned}
\Xi_n & \triangleq \lim_{K\rightarrow\infty} \frac{1}{K} \log E_{\Mat{H}} \set{{[\Z(h;\vct{u},\Mat{H})]}^{n}} \\
&= \lim_{K\rightarrow\infty} \frac{1}{K} \log E_{\Mat{H}} \set{     \sum_{\set{\vct{x}_a}}  \e^{-\beta  \sum_{a=1}^n V(h,\xi,\upsilon,\vct x_a,\vct u)}  \e^{- \Tr (\beta \Mat{J} \sum_{a=1}^n \vct{x}_a \vct{x}_a^\dag )}  } \quad .
\end{aligned}
\end{equation}
Since the first exponential term within the expectation is independent of the channel transfer matrix $\Mat{H}$, $\Xi_n$ can be rewritten as
\begin{equation}
\label{eq: 1RSB: Representation of Xi-n with expectation inside the sum}
\Xi_n = \lim_{K\rightarrow\infty} \frac{1}{K} \log \left(    \sum_{\set{\vct{x}_a}}  \e^{-\beta \sum_{a=1}^n V(h,\xi,\upsilon,\vct x_a,\vct u)} E_{\Mat{H}}\set{ \e^{- \Tr (\beta \Mat{J} \sum_{a=1}^n \vct{x}_a \vct{x}_a^\dag )} } \right) \quad.
\end{equation}
The inner expectation in \eqref{eq: 1RSB: Representation of Xi-n with expectation inside the sum} is the Harish-Chandra-Itzykson-Zuber integral (see \cite{Muller-Guo-Moustakas-JSAC-2008} and references therein), and the objective here is its evaluation for fixed-rank matrices $\sum_{a=1}^n \vct{x}_a \vct{x}_a^\dag$, in the large $K$ limit. This problem was recently considered in \cite{Guionnet-Maida-2005}, and invoking Theorem 1.7 therein, \eqref{eq: 1RSB: Representation of Xi-n with expectation inside the sum} can be represented for large $K$ as\footnote{$o(K)$ is used here to denote quantities that satisfy $\lim_{K \to \infty} o(K)/K = 0$.}
\begin{equation}
\label{eq: 1RSB: Representation of Xi-n with the Guionnet Maida result}
\Xi_n = \lim_{K\rightarrow\infty} \frac{1}{K} \log  \left(\sum_{\set{\vct{x}_a}}  \e^{-\beta  \sum_{a=1}^n V(h,\xi,\upsilon,\vct x_a,\vct u)}  \e^{- K \sum_{a=1}^n \int_0^{\lambda_a} R(-w)\, \rd w  + o(K)}  \right) \quad,
\end{equation}
where $R(w)$ is the $R$-transform of the limiting eigenvalue distribution of the matrix $\Mat{J}$, and $\set{\lambda_a}$ denote the eigenvalues of the $n \times n$ matrix $\beta \Mat{Q}$ with $\Mat{Q}$ defined through\footnote{
Here \cite[Theorem 1.7]{Guionnet-Maida-2005} is applied individually for all \emph{given}
vectors $\set{\vct{x}_a}$.
}
\begin{equation}
\label{eq: 1RSB: Definition of the entries of the matrix Q}
Q_{ab} = \frac{1}{K} \vct{x}^\dag_a \vct{x}_b \triangleq  \frac{1}{K} \sum_{k=1}^K x_{ak}^* x_{bk} \quad .
\end{equation}
Since additive exponential terms of order $o(K)$ have no effect on the results in the limiting regime as $K\rightarrow \infty$, due to the $\frac{1}{K}$ factor outside the logarithm in \eqref{eq: 1RSB: Representation of Xi-n with the Guionnet Maida result} (this shall become clear in the derivation to follow), any such terms are dropped henceforth for notational simplicity.

In order to calculate the summation in \eqref{eq: 1RSB: Representation of Xi-n with the Guionnet Maida result}, the procedure employed in \cite{Muller-Guo-Moustakas-JSAC-2008} is repeated here, and the $Kn$-dimensional space spanned by the replicas is split into subshells by means of \eqref{eq: Definition of the subshell S-Q}.
Assuming $\Mat{Q}^\dag=\Mat{Q}$, $\Xi_n$ can be represented as
\begin{equation}
\label{eq: 1RSB: Representation of Xi-n as integral with I and G}
\Xi_n = \lim_{K\rightarrow\infty} \frac{1}{K} \log \left(  \int \e^{K\L} \e^{K \I(\Mat{Q})} \e^{-K \G(\Mat{Q})}\, \D \Mat{Q} \right) \quad ,
\end{equation}
where
\begin{equation}
\label{eq: 1RSB: Definition of DQ}
\D\Mat{Q} = \prod_{a=1}^n \rd Q_{aa} \prod_{b=a+1}^n \rd \Re(Q_{ab}) \, \rd \Im (Q_{ab})
\end{equation}
is the integration measure,
\begin{eqnarray}
\label{eq: 1RSB: Definition of G-Q}
\G(\Mat{Q}) &=&\sum_{a=1}^n \int\limits_0^{\beta \lambda_a(\Mat{Q})} R(-w)\, \rd w\\
&=&\sum_{a=1}^n \int\limits_0^{\beta} \lambda_a(\Mat{Q}) R(-w\lambda_a(\Mat{Q}))\, \rd w\\
 &=& \int\limits_0^{\beta}\, \Tr\left[\Mat{Q} R(-w\Mat{Q})\right] \, \rd w \quad , \label{GQint}
 \end{eqnarray}
since the trace is the sum of the eigenvalues,
\begin{equation}
\label{eq: 1RSB: Definition of L-Q}
\L = -\frac{\beta}{K}  \sum_{a=1}^n V(h,\xi,\upsilon,\vct x_a,\vct u) \quad ,
\end{equation}
and
\begin{equation}
\label{eq: 1RSB: Definition of int I-Q}
\e^{K\I(\Mat{Q})} = \sum_{\set{\vct{x}_a}} \prod_{a=1}^n \delta(\vct{x}_a^\dag \vct{x}_a - K Q_{aa}) \prod_{b=a+1}^n \delta(\Re[\vct{x}_a^\dag \vct{x}_b - KQ_{ab}]) \, \delta(\Im[\vct{x}_a^\dag \vct{x}_b - KQ_{ab}])
\end{equation}
is the probability weight of the subshell.

Starting with $\e^{K\I(\Mat{Q})}\e^{K\L}$ we follow \cite{Muller-Guo-Moustakas-JSAC-2008} and represent the Dirac measures using the inverse Laplace transform. This is performed by introducing the \emph{complex} variables $\set{\tilde{Q}_{ab}^{(I)}}$, $1\le a \le b \le n$, and $\set{\tilde{Q}_{ab}^{(Q)}}$, $1\le a \le b \le n$, and defining the matrix $\Mat{\tilde{Q}}$ with elements (taking $a<b$)
\begin{eqnarray}
\tilde{Q}_{aa} &=& \tilde{Q}_{aa}^{(I)} \quad , \label{eq: 1RSB: Definition of tQ-aa}\\
\tilde{Q}_{ab} &=& \frac{1}{2} \left( \tilde{Q}_{ab}^{(I)} - j\tilde{Q}_{ab}^{(Q)} \right)  \quad , \\
\tilde{Q}_{ba} &=& \frac{1}{2} \left( \tilde{Q}_{ab}^{(I)} + j\tilde{Q}_{ab}^{(Q)} \right)   \quad .
\end{eqnarray}
Denoting by $\Mat{P}$ the Hermitian matrix with elements $P_{ab} = \vct{x}_a^\dag \vct{x}_b - KQ_{ab}$, this yields
\begin{eqnarray}
\delta(P_{aa}) &=& \int_{\mathcal{J}} \e^{\tilde{Q}_{aa}P_{aa}} \, \frac{d\tilde{Q}_{aa}^{(I)}}{2\pi j} \quad , \\
\delta(\Re\set{P_{ab}})\, \delta(\Im\set{P_{ab}})
 &=& \int_{\mathcal{J}^{2}} \e^{\tilde{Q}_{ab}^{(I)}\Re\set{P_{ab}} - \tilde{Q}_{ab}^{(Q)}\Im\set{P_{ab}}} \, \frac{d \tilde{Q}_{ab}^{(I)} \, d\tilde{Q}_{ab}^{(Q)}}{(2\pi j)^2}  \\
&=& \int_{\mathcal{J}^{2}} \e^{\tilde{Q}_{ab}P_{ba} + \tilde{Q}_{ba}P_{ab}} \, \frac{d \tilde{Q}_{ab}^{(I)} \, d\tilde{Q}_{ab}^{(Q)}}{(2\pi j)^2}   \quad , \label{eq: 1RSB: Definition of dRPab dIPab}
\end{eqnarray}
where the integration is over $\mathcal{J}=(t-j\infty , t+j\infty)$, for some $t\in \mathbb{R}$ (note that $P_{ab} = P_{ba}^*$).  Substituting in \eqref{eq: 1RSB: Definition of int I-Q} and combining with \eqref{eq: 1RSB: Definition of L-Q}, it follows that
\begin{equation}
\label{eq: 1RSB: First expression for int I-Q L-Q}
\begin{aligned}
\e^{K\I(\Mat{Q}) } \e^{K\L} &= \sum\limits_{\set{\vct{x}_a}} \int_{{\mathcal{J}}^{n^2}} \e^{\sum\limits_{a,b} \tilde{Q}_{ab} (\vct{x}_b^\dag \vct{x}_a - K Q_{ba})} \e^{-\beta \sum\limits_{a=1}^n V(h,\xi,\upsilon,\vct{x}_a,\vct{u}) } \tilde{\D}\Mat{\tilde{Q}} \\
&= \int_{{\mathcal{J}}^{n^2}} \e^{-K\Tr( \Mat{\tilde{Q}} \Mat{Q}) } \left( \sum\limits_{\set{\vct{x}_a}}
\e^{\sum\limits_{a,b} \tilde{Q}_{ab} \vct{x}_b^\dag \vct{x}_a} \e^{-\beta \sum\limits_{a=1}^n V(h,\xi,\upsilon,\vct{x}_a,\vct{u}) }
\right)\, \tilde{\D}\Mat{\tilde{Q}} \quad ,
\end{aligned}
\end{equation}
where
\begin{equation}
\tilde{\D}\Mat{\tilde{Q}} = \prod_{a=1}^n \left( \frac{d\tilde{Q}_{aa}^{(I)}}{2\pi j} \prod_{b=a+1}^n \frac{d\tilde{Q}_{ab}^{(I)} \, d\tilde{Q}_{ab}^{(Q)}}{(2\pi j)^2} \right).
\end{equation}
Considering the inner summation in \eqref{eq: 1RSB: First expression for int I-Q L-Q}, then rearranging terms and using \eqref{defsumV} the expression can be rewritten as
\begin{equation}
\label{eq: 1RSB: Inner summation representation in terms of k}
 \sum_{\set{\vct{x}_a}}
\e^{\sum\limits_{a,b} \tilde{Q}_{ab} \vct{x}_b^\dag \vct{x}_a} \e^{-\beta \sum\limits_{a=1}^n V(h,\xi,\upsilon,\vct{x}_a,\vct{u})} =
\prod_{k=1}^K \sum_{\set{x_a \in \B_{u_k}}} \e^{\left(\sum\limits_{a,b} \tilde{Q}_{ab} x_b^*x_a\right) +h \beta \sum\limits_{a=1}^n   1\left\{(x_a,u_k)=(\xi,\upsilon)\right\} } \quad .
\end{equation}
Defining
\begin{equation}
\label{eq: 1RSB: Definition of M-k-tilde-Q}
M_k(\Mat{\tilde{Q}}) = \sum_{\set{x_a \in \B_{u_k}}} \e^{\left(\sum\limits_{a,b} x_b^*x_a \tilde{Q}_{ab}\right) +h \beta \sum\limits_{a=1}^n  1\left\{(x_a,u_k)=(\xi,\upsilon)\right\}}  \quad,
\end{equation}
one finally gets
\begin{equation}
\label{eq: 1RSB: Expression of int I-Q L-Q with log M-k}
\e^{K\I(\Mat{Q}) } \e^{K\L} = \int_{{\mathcal{J}}^{n^2}} \e^{-K\Tr( \Mat{\tilde{Q}} \Mat{Q}) + \sum\limits_{k=1}^K \log M_k(\Mat{\tilde{Q}})} \, \tilde{\D}\Mat{\tilde{Q}} \quad.
\end{equation}
Now, using the underlying assumption that the coded symbols transmitted by different users are i.i.d., one can apply   the strong law of large numbers for $K\rightarrow\infty$ to get
\begin{align}
\log M(\Mat{\tilde Q}) &\triangleq \frac{1}{K} \sum\limits_{k=1}^K \log M_k(\Mat{\tilde Q}) \\
&\rightarrow \int \log  \sum_{\left\{x_a \in \B_{u}\right\}}   \e^{\left(\sum\limits_{a,b} x_b^*x_a \tilde{Q}_{ab}\right) +h\beta \sum\limits_{a=1}^n 1\left\{(x_a,u)=(\xi,\upsilon)\right\}}  \rd F_U(u) \label{generalM}\\
&=\int \log  \sum_{{\bf x} \in \B_{u}^n}   \e^{ {\bf x}^\dagger\Mat{\tilde{Q}}{\bf x}+h\beta \sum\limits_{a=1}^n 1\left\{(x_a,u)=(\xi,\upsilon)\right\}}  \rd F_U(u) \ , \label{generalM2}
\end{align}
where the convergence is in the \emph{almost sure} sense, for any extended alphabets such that the expectation in \eqref{generalM} exists.
Note that this observation implies that any randomness due to $\vct{u}$ in the RHS of \eqref{eq: 1RSB: Representation of Xi-n as integral with I and G} effectively vanishes at the large system limit, due to the normalization with respect to $K$ outside the logarithm.

The next step in the evaluation of \eqref{eq: 1RSB: Representation of Xi-n as integral with I and G} is the observation that in the limit as $K \rightarrow \infty$, the integrand therein is dominated by the exponential term with maximal exponent. Therefore, only the subshell that corresponds to this extremal value of the correlation between the vectors $\set{\vct{x}_a}$ is relevant for the calculation of the integral. Thus, we have at the saddle point
\begin{equation}
\label{speq1}
\frac\partial{\partial \Mat Q} \left[ \G(\Mat Q) + \Tr(\Mat{\tilde{Q} Q})\right] = \Mat 0\quad .
\end{equation}
Since the trace is the sum of the eigenvalues, we can write \eqref{eq: 1RSB: Definition of G-Q}
as 
\begin{equation}
\G(\Mat Q)= {\Tr} \int\limits_0^{\beta \Mat{Q}} R(-w)\, \rd w \label{GQintnew}
\end{equation}
and \eqref{speq1} gives
\begin{equation}
\label{generalQtilde}
\Mat{\tilde Q} = -\beta R(-\beta \Mat Q)\quad.
\end{equation}
Furthermore, we observe that also the integrand in \eqref{eq: 1RSB: Expression of int I-Q L-Q with log M-k} is dominated by the exponential term with maximal exponent in the limit $K\to\infty$.
Thus, at the saddle point we have
\begin{equation}
\label{speq2}
\frac\partial{\partial \Mat{\tilde Q}} \left[ \log M(\Mat{\tilde Q}) - \Tr(\Mat{\tilde{Q} Q})\right] = \Mat 0 \quad.
\end{equation}
With \eqref{generalM2}, this gives
\begin{equation}
\label{generalQ}
\Mat Q = \int
\frac{ \sum\limits_{{\bf x} \in \B_{u}^n}  {\bf xx}^\dagger  \e^{{\bf x}^\dagger \Mat{\tilde{Q}}{\bf x} +h\beta \sum\limits_{a=1}^n 1\left\{(x_a,u)=(\xi,\upsilon)\right\}} }
{ \sum\limits_{{\bf x} \in \B_{u}^n}   \e^{{\bf x}^\dagger\Mat{\tilde{Q}}{\bf x} +h\beta \sum\limits_{a=1}^n 1\left\{(x_a,u)=(\xi,\upsilon)\right\}} }
\, \rd F_U(u)\quad.
\end{equation}

We now invoke the \emph{1RSB} assumption \eqref{eq: 1RSB: Definition of the 1RSB structure of Q} regarding the structure of the matrices $\Mat{Q}$ 
at the saddle-point that dominate the integral.
In a similar manner we set
\begin{equation}
\label{eq: 1RSB: Definition of the 1RSB structure of tilde-Q}
\Mat{\tilde{Q}} = \beta^2f_1^2 \Mat{1}_{n \times n} + \beta^2g_1^2 \Mat{I}_{\frac{n\beta}{\mu_1} \times \frac{n\beta}{\mu_1}}\otimes  \Mat{1}_{\frac{\mu_1}{\beta} \times \frac{\mu_1}{\beta}} - \beta\varepsilon_1 \Mat{I}_{n \times n} \quad ,
\end{equation}
introducing the macroscopic parameters $f_1$, $g_1$, and $\varepsilon_1$.

With these assumptions one can explicitly obtain the eigenvalues of the matrix $\beta \Mat{Q} \,$\footnote{The eigenvalue $(\beta n q_1 + \mu_1 p + \chi_1) $ occurs with multiplicity $1$, the eigenvalue $(\mu_1 p_1 + \chi_1)$ occurs with multiplicity $(\frac{n\beta}{\mu_1}-1)$, and the eigenvalue $\chi_1$ occurs with multiplicity $(n-\frac{n\beta}{\mu_1})$.}, and $\G(\Mat{Q})$ can be rewritten as
\begin{multline}
\label{eq: 1RSB: Representation of G-q-chi}
\G(q_1,p_1,\chi_1,\mu_1) = \left(n-\frac{n\beta}{\mu_1}\right) \int\limits_0^{\chi_1} R(-w) \, \rd w \\ + \left(\frac{n\beta}{\mu_1}-1\right) \int\limits_0^{\chi_1 + \mu_1 p_1} R(-w) \, \rd w +  \int\limits_0^{\chi_1 + \mu_1 p_1 + \beta n q_1} R(-w) \, \rd w \ .
\end{multline}
It also follows from the 1RSB assumption that
\begin{equation}
\label{eq: 1RSB: Expression for the Tr Q-Q}
\Tr(\Mat{\tilde{Q}} \Mat{Q}) =
\begin{bmatrix}
      \beta^2 f_1^2 & \beta^2g_1^2 & -\beta \varepsilon_1
\end{bmatrix}
\begin{bmatrix}
      n^2 & \frac{n\mu_1}{\beta} & n \\
      \frac{n\mu_1}{\beta} & \frac{n\mu_1}{\beta} & n \\
      n & n & n \\
\end{bmatrix}
\begin{bmatrix}
      q_1 \\ p_1 \\ \frac{\chi_1}{\beta}
\end{bmatrix}
\quad ,
\end{equation}
and
\begin{multline}
\label{eq: 1RSB: Expression of M-k-f-g-epsilon-mu}
\log M(f_1,g_1,\varepsilon_1,\mu_1) =\\
\int \log  \sum_{\set{x_a \in \B_{u}}} \e^{\beta^2f_1^2\abs{\sum\limits_{a=1}^n x_a}^2
+ \beta^2 g_1^2 \sum\limits_{l=0}^{\frac{n\beta}{\mu_1}-1}  \abs{\sum\limits_{a=1}^{\frac{\mu}{\beta}} x_{a+\frac{l \mu_1}{\beta}}}^2
- \beta\varepsilon_1  \sum\limits_{a=1}^n  \abs{x_a}^2
+h\beta \sum\limits_{a=1}^n 1\left\{(x_a,u)=(\xi,\upsilon)\right\}} \rd F_U(u) \ .
\end{multline}

Due to \eqref{speq1}, the partial derivatives of
\begin{equation}
\G(q_1,p_1,\chi_1,\mu_1) + \Tr(\Mat{\tilde{Q}} \Mat{Q})
\end{equation}
with respect to $q_1$, $p_1$, and $\chi_1$ must vanish as $K\rightarrow\infty$ by definition of the saddle point.
\if 0
\uwave{It is worth pointing out here that the fixed point equations for the $\Mat{\tilde{Q}}$ can be generated directly from the above equation written as follows:}
\begin{equation}
\label{eq:G+TrQQ}
\frac{\partial}{\partial \Mat{Q}}\left(\G({\bf Q}) + \Tr(\Mat{\tilde{Q}} \Mat{Q})\right) = {\bf 0}
\end{equation}
\uwave{which means that for every $a,b=1,\cdots,n$}
\begin{eqnarray}
\label{eq:G+TrQQ1}
\frac{\partial}{\partial Q_{ab}}\left(\G({\bf Q}) + \Tr(\Mat{\tilde{Q}} \Mat{Q})\right) &=& 0 \\ \nonumber
\frac{\partial \G({\bf Q})}{\partial Q_{ab}} + \tilde{Q}_{ba} &=& 0
\end{eqnarray}
\uwave{However, from the definition of $\G$ this becomes simply }
\begin{equation}
\label{eq:Qtilde=R(Q)}
\Mat{\tilde{Q}}_{ba} = - R(\beta \Mat{Q})_{ba}
\end{equation}
\uwave{This formulation is general and holds for both RS and RSB. In fact, the proof in Appendix A relies on this result. In addition, there is a similar result about the derivative with respect to $\tilde{Q}$,
as follows:}
\begin{equation}
\label{eq: General_saddle_point_eqs2}
\Mat{Q} = \sum_{\set{\vct{x}_a}} \vct{x}\vct{x}^\dag
\e^{\vct{x}^\dag\Mat{\tilde{Q}}\vct{x}} \e^{-\beta \sum_{a=1}^n V(h,\upsilon,\xi,\vct u,\vct x_a)} \quad .
\end{equation}
\fi
Using \eqref{eq: 1RSB: Representation of G-q-chi} and \eqref{eq: 1RSB: Expression for the Tr Q-Q} this yields the following set of equations
\begin{eqnarray}
0 & = & n^2\beta^2 f_1^2 + n \beta  \mu_1 g_1^2 - n\beta \varepsilon_1 + n\beta R(-\chi_1 - \mu_1 p_1 -\beta n q_1) \ , \label{eq: 1RSB: Equation after taking derivative wrt q} \\
0 & = & n \beta \mu_1 f_1^2 + n \beta \mu_1 g_1^2 - n \beta \varepsilon_1 + (n\beta -\mu_1) R(-\chi_1 -\mu_1 p_1) \nonumber \\ & & + \mu_1 R(-\chi_1 -\mu_1 p_1 -n \beta q_1) \ , \label{eq: 1RSB: Equation after taking derivative wrt p} \\
0 & = & n \beta f_1^2 + n \beta g_1^2  -n \varepsilon_1 + \left(n-\frac{n \beta}{\mu_1}\right) R(-\chi_1) + \left(\frac{n \beta}{\mu_1} - 1\right) R(-\chi_1 -\mu_1 p_1) \nonumber \\
& & + R(-\chi_1 -\mu_1 p_1 - n \beta q_1) \ . \label{eq: 1RSB: Equation after taking derivative wrt chi}
\end{eqnarray}
Solving for $\varepsilon_1$, $g_1$, and $f_1$, while focusing on the limit as $n\rightarrow 0$, one gets
\begin{eqnarray}
\varepsilon_1 & = &   R(-\chi_1) \label{eq: app: 1RSB: Expression for epsilion after taking derivative wrt q p and chi} \ ,\\
g_1 & = & \sqrt{\frac{R(-\chi_1) - R(-\chi_1 - \mu_1 p_1)}{\mu_1}} \label{eq: app: 1RSB: Expression for g after taking derivative wrt q p and chi} \ ,\\
f_1 & = & \sqrt{{\frac{R(-\chi_1 - \mu_1 p_1) - R(-\chi_1 - \mu_1 p_1 - n \beta q_1)}{n \beta}}} \xrightarrow{n \rightarrow 0} \sqrt{q_1 R'(-\chi_1-\mu_1 p_1)} \ .
\label{eq: app: 1RSB: Expression for f after taking derivative wrt q p and chi}
\end{eqnarray}

\newcommand{\kernel}{{\cal K}(u,x,y,z)}
\newcommand{\kernelp}{{\cal K}(u,x,\tilde{y},z)}
We now rewrite the expression for $M_k(f_1,g_1,\varepsilon_1,\mu_1)$ in \eqref{eq: 1RSB: Expression of M-k-f-g-epsilon-mu} using the Hubbard-Stratonovich transform and the shortened notation of \eqref{eq: Shortened notation for the Dz integral}
\begin{equation}
\label{eq: 1RSB: Definition of the Hubbard-Stratonovich transform}
\e^{\abs{x}^2} = \int\limits_\mathbb{C} \e^{2 \Re\set{x z^*}} \, \rD z  \quad ,
\end{equation}
yielding (c.f.\ \cite[(66)-(70)]{Muller-Guo-Moustakas-JSAC-2008})
\begin{align}
\label{eq: 1RSB: Expression of M-k-f-g-epsilon-mu with Hubbard-Stratonovich}
&\log M(f_1,g_1,\varepsilon_1,\mu_1) = \nonumber\\
&\quad = \int \log \sum_{\set{x_a \in \B_{u}}} \int\limits_{\mathbb{C}} \e^{\sum\limits_{a=1}^n \left[2 \beta f_1 \Re\set{x_a z^*} - \beta\varepsilon_1  \abs{x_a}^2  +h\beta  \,1\left\{(x_a,u)=(\xi,\upsilon)\right\}\right]
+ \beta^2 g_1^2 \sum\limits_{l=0}^{\frac{n\beta}{\mu_1}-1}  \abs{\sum\limits_{a=1}^{\frac{\mu_1}{\beta}} x_{a+\frac{l \mu_1}{\beta}}}^2
} \, \rD z \, \rd F_U(u) \\
\label{eq: 1RSB: Definition of M-f-g-epsilon-mu}
&\quad =  \int \log \int\limits_{\mathbb{C}} \left[ \int\limits_{\mathbb{C}} \left(  \sum_{x \in \B_{u}} \kernel
\right)^{\frac{\mu_1}{\beta}} \rD y \right]^{\frac{n \beta}{\mu_1}} \rD z \,  \rd F_U(u) \ ,
\end{align}
with
\begin{equation}
\label{defkernel}
\kernel \triangleq \e^{ 2 \beta \Re\set{x (f_1 z^* + g_1y^*)} - \beta\varepsilon_1  \abs{x}^2  +h\beta \,1\left\{(x,u)=(\xi,\upsilon)\right\} }\quad.
\end{equation}

Due to \eqref{speq2}, the partial derivatives of
\begin{equation}
\log M(f_1,g_1,\varepsilon_1,\mu_1) - \Tr(\Mat{\tilde{Q}}\Mat{Q})
\end{equation}
with respect to $f_1$, $g_1$ and $\varepsilon_1$, must also vanish as $K \rightarrow \infty$.
This produces the following set of equations (while taking the limit as $n\rightarrow 0$)
\begin{multline}\label{eq: app: 1RSB: Equation after partial derivative wrt f}
\chi_1 + p_1\mu_1 =  \frac{1}{f_1} \int \int\limits_{\mathbb{C}^2}  \frac{ \left(  \sum\limits_{x \in \B_{u}}  \kernel  \right)^{\frac{\mu_1}{\beta}-1} }{\int\limits_{\mathbb{C}} \left(  \sum\limits_{x \in \B_{u}}   \kernelp
\right)^{\frac{\mu_1}{\beta}} \, \rD \tilde{y}} 
\sum\limits_{x \in \B_{u}}  \Re\set{xz^*} \kernel \, \rD y \, \rD z \, \rd F_U(u) \ ,
\end{multline}
\begin{multline}
\label{eq: app: 1RSB: Equation after partial derivative wrt g}
\chi_1 + (q_1+p_1)\mu_1  =  \frac{1}{g_1} \int \int\limits_{\mathbb{C}^2}  \frac{ \left(  \sum\limits_{x \in \B_{u}}   \kernel  \right)^{\frac{\mu_1}{\beta}-1} }{\int\limits_{\mathbb{C}} \left(  \sum\limits_{x \in \B_{u}}   \kernelp
\right)^{\frac{\mu_1}{\beta}} \, \rD \tilde{y}} \\ \cdot 
\sum\limits_{x \in \B_{u}}  \Re\set{xy^*} \kernel \, \rD y \, \rD z \, \rd F_U(u) \ ,
\end{multline}
\begin{multline}\label{eq: app: 1RSB: Equation after partial derivative wrt epsilon}
q_1+p_1  =   \int \int\limits_{\mathbb{C}^2}  \frac{ \left(  \sum\limits_{x \in \B_{u}}   \kernel  \right)^{\frac{\mu_1}{\beta}-1} }{\int\limits_{\mathbb{C}} \left(  \sum\limits_{x \in \B_{u}}   \kernelp  \right)^{\frac{\mu_1}{\beta}} \, \rD \tilde{y}} 
\sum\limits_{x \in \B_{u}}  \abs{x}^2 \kernel \, \rD y \, \rD z \, \rd F_U(u) - \frac{\chi_1}{\beta} \quad .
\end{multline}
The parameter $\mu_1$ should also be chosen such that the partial derivative of
\begin{equation}
\G(q_1,p_1,\chi_1,\mu_1) + \Tr(\Mat{\tilde{Q}} \Mat{Q}) - \log M(f_1,g_1,\varepsilon_1,\mu_1)
\end{equation}
with respect to $\mu_1$ vanishes. This yields at the limit as $n \rightarrow 0$
\begin{multline}\label{eq: app: 1RSB: Equation after partial derivative wrt mu}
0 = - \frac{1}{\mu_1^2} \int\limits_{\chi_1}^{\chi_1+\mu_1 p_1}
 R(-w)\, dw  + \frac{p_1}{\mu_1} R(-\chi_1) + q_1 g_1^2 
+ \int  \int\limits_{\mathbb{C}} \Biggl[ \frac{1}{\mu_1^2} \log\left( \int\limits_{\mathbb{C}} \left(  \sum_{x \in \B_{u}}  \kernel  \right)^{\frac{\mu_1}{\beta}} \, \rD y \right) \\
- \int\limits_{\mathbb{C}} \frac{\left(  \sum\limits_{x \in \B_{u}}   \kernel  \right)^{\frac{\mu_1}{\beta}} }{\beta \mu_1 \int\limits_{\mathbb{C}} \left(  \sum\limits_{x \in \B_{u}}   \kernelp  \right)^{\frac{\mu_1}{\beta}} \, \rD \tilde{y} } 
\cdot \log\left(\sum\limits_{x \in \B_{u}}   \kernel
\right) \, \rD y \Biggr] \, \rD z \, \rd F_U(u) \quad .
\end{multline}

Incorporating all previous results, we get that the quantity $\Xi_n$ of \eqref{eq: 1RSB: Definition of Xi-n} is equal to
\begin{equation}
\label{eq: 1RSB: Equation for Xi-n with saddle point integration results}
\begin{aligned}
\Xi_n &= \I(\Mat{Q}) + \L - \G(\Mat{Q}) \\
&= \log M(f_1,g_1,\varepsilon_1,\mu_1) -   \beta^2 f_1^2 q_1 n^2 \\
&\phantom{=} - \left(\beta f_1^2 (p_1\mu_1 +\chi_1)+\beta g_1^2 (q_1\mu_1 +p_1\mu_1+ \chi_1 )-\beta\varepsilon_1 (q_1+p_1+\frac{\chi_1}{\beta})\right) n \\
&\phantom{=}
-  \left(n-\frac{n\beta}{\mu_1}\right) \int\limits_0^{\chi_1} R(-w) \, \rd w - \left(\frac{n\beta}{\mu_1}-1\right) \int\limits_0^{\chi_1 + \mu_1 p_1} R(-w) \, \rd w \\
&\phantom{=} -  \int\limits_0^{\chi_1 + \mu_1 p_1 + \beta n q_1} R(-w) \, \rd w  \quad ,
\end{aligned}
\end{equation}
where the macroscopic parameters $\set{f_1,g_1,\varepsilon_1,q_1,p_1,\chi_1,\mu_1}$ are obtained from the saddle point fixed-point equations
\eqref{eq: app: 1RSB: Expression for epsilion after taking derivative wrt q p and chi}--\eqref{eq: app: 1RSB: Expression for f after taking derivative wrt q p and chi}, \eqref{eq: app: 1RSB: Equation after partial derivative wrt f}--\eqref{eq: app: 1RSB: Equation after partial derivative wrt epsilon}, and \eqref{eq: app: 1RSB: Equation after partial derivative wrt mu}. Now in view of \eqref{eq: 1RSB: Identity used for replica analysis}, the next step in the derivation is to take the limit
\begin{multline}
\lim_{n\rightarrow 0} \frac{1}{n}\Xi_n =  \frac{\beta}{\mu_1}\int \int\limits_{\mathbb{C}} \log \left(\int\limits_{\mathbb{C}} \left(  \sum\limits_{x \in \B_{u}}   \kernel\right)^{\frac{\mu_1}{\beta}} \, \rD y \right) \, \rD z \,  \rd F_U(u) \\
-\beta f_1^2(\chi_1 + p_1\mu_1) - \beta g_1^2 (\chi_1 + (p_1+q_1)\mu_1) + \beta \varepsilon_1(q_1+p_1+\frac{\chi_1}{\beta}) \\
 - \left(1-\frac{\beta}{\mu_1}\right)\int\limits_0^{\chi_1} R(-w) \, \rd w - \frac{\beta}{\mu_1}\int\limits_0^{\chi_1+\mu_1 p_1} R(-w) \, \rd w - \beta q_1 R(-\chi_1-\mu_1 p_1) \quad . \label{eq: 1RSB: Expression for the derivative of Xi-n at n zero with f g and varepsilon}
\end{multline}
But in fact
\begin{equation}
\lim_{n\rightarrow 0} \frac{1}{n}\Xi_n =  - \beta \sF(\beta,h) \quad,
\label{eq: app: Relation between Xi-n and free energy}
\end{equation}
which is justified by the observation that $\Xi_n$ converges to the same limit for almost every realization of $\vct{u}$, applying the law of large numbers in \eqref{generalM} (see also \eqref{eq: 1RSB: Limit to be considered - Expectation wrt H} and the discussion that follows).

We note at this point that the energy penalty $\bE_\rsb$ satisfies
\begin{equation}
\label{eq: 1RSB: Definition of the energy penalty as a function of F}
\bE_\rsb = \lim_{\beta \rightarrow \infty} \sF(\beta,h) |_{h={0}} \quad ,
\end{equation}
and \eqref{eq: 1RSB: Limit of the effective energy penalty} can be readily expressed from Proposition \ref{prop: r-RSB solution for energy entropy and free energy} as a function of the macroscopic parameters $\set{q_1,p_1,\chi_1,\mu_1}$.
In order to evaluate the energy penalty, it is thus left to derive the fixed point equations that determine these parameters, as given by \eqref{eq: 1RSB: Equation after partial derivative wrt f - beta inf h zero}--\eqref{eq: 1RSB: Equation after partial derivative wrt mu - beta inf and h zero}, which is obtained by substituting $h=0$ and taking the limit as $\beta\rightarrow \infty$ in \eqref{eq: app: 1RSB: Equation after partial derivative wrt f}--\eqref{eq: app: 1RSB: Equation after partial derivative wrt epsilon}, and \eqref{eq: app: 1RSB: Equation after partial derivative wrt mu} after back-substitution of \eqref{defkernel}. This completes the proof of Proposition \ref{prop: 1RSB: Limiting Energy Penalty}.

\subsection{Proposition \ref{prop: 1RSB: Marginal limiting conditional distribution of x given u}}
\label{proof42}

We derive the limiting conditional distribution of the precoder output $x$ given the input $u$  
starting from (\ref{eq317}). We therefore need to evaluate the derivative of the free energy with respect to $h$. 
This can be done directly given
(\ref{eq: app: Relation between Xi-n and free energy}). Taking the partial derivative in \eqref{eq: 1RSB: Expression for the derivative of Xi-n at n zero with f g and varepsilon} and using \eqref{defkernel}, we get
\begin{align}
\label{eq: 1RSB: Equation for the limiting conditional distribution - finite beta}
P_{X|U}(\xi|\upsilon) &= \frac1{P_U(\upsilon)} \left.\frac{\partial}{\partial h} \sF(\beta,h)\right|_{h={0}} \\
 &= \int \int_{\mathbb{C}^2}  \frac{\left(\sum_{x \in \B_{u}}   \e^{ 2 \beta \Re\set{x (f_1 z^* + g_1y^*)} - \beta\varepsilon_1  \abs{x}^2  }\right)^{\frac{\mu_1}{\beta}-1}  }{\int_{\mathbb{C}} \left(  \sum_{x \in \B_{u}}   \e^{ 2 \beta \Re\set{x (f_1 z^* + g_1\tilde y^*)} - \beta\varepsilon_1  \abs{x}^2  }  \right)^{\frac{\mu_1}{\beta}} \, \rD \tilde y}  \nonumber\\
 & \qquad \cdot  \sum_{x \in \B_{u}}  \e^{ 2 \beta \Re\set{x (f_1 z^* + g_1y^*)} - \beta\varepsilon_1  \abs{{x}}^2   }  \, \rD y  \, \rD z \, \frac{1\{(x,u)=(\xi,\upsilon)\}}{{P_U(\upsilon)}  } \,{\rm d}F_U(u) \\
&=\int_{\mathbb{C}^2}  \frac{\left(\sum_{x \in \B_{\upsilon}}   \e^{ 2 \beta \Re\set{x (f_1 z^* + g_1y^*)} - \beta\varepsilon_1  \abs{x}^2  }\right)^{\frac{\mu_1}{\beta}-1} }{\int_{\mathbb{C}} \left(  \sum_{x \in \B_{\upsilon}}   \e^{ 2 \beta \Re\set{x (f_1 z^* + g_1\tilde y^*)} - \beta\varepsilon_1  \abs{x}^2  }  \right)^{\frac{\mu_1}{\beta}} \, \rD \tilde y} 
 \,\e^{ 2 \beta \Re\set{\xi (f_1 z^* + g_1y^*)} - \beta\varepsilon_1  \abs{{\xi}}^2   } \, \rD y  \, \rD z \ .
\end{align}
After taking the limit $\beta \rightarrow \infty$, while applying the saddle point integration rule, we finally get \eqref{eq: 1RSB: Equation for the limiting conditional distribution - Infinite beta}.  This completes the proof of Proposition \ref{prop: 1RSB: Marginal limiting conditional distribution of x given u}.

\subsection{Proposition \ref{prop: 1RSB entropy result}}
\label{proof43}

We start from  \eqref{eq: Thermodynamics definition of entropy}, \eqref{eq: Definition of normalized free energy}, and \eqref{eq: Definition of normalized entropy} which yield
\begin{eqnarray} \label{eq:entropy_definition1}
{\sS(\beta)} &=& \beta \frac{\rm d \left(\beta \sF(\beta)\right)}{\rm d \beta} - \beta \sF(\beta) \\ \nonumber
&=& \beta \frac{\partial \left(\beta \sF(\beta)\right)}{\partial \beta} - \beta \sF(\beta) \quad .
\end{eqnarray}
The partial derivative with respect to $\beta$ above reflects the fact that all implicit dependencies of $\sF(\beta)$ on $\beta$ through its dependence on other parameters, e.g., $f_1, g_1, \chi_1, \mu_1$, have vanishing derivatives since $\sF(\beta)$ is evaluated at a saddle point. Making use of the saddle point equations, we find that
\begin{equation} \label{eq:entropy_definition2}
\beta\frac{\partial \left(\beta \sF(\beta)\right)}{\partial \beta} =
\chi_1\left(1-\frac{\beta}{\mu_1}\right) R(-\chi_1) + \beta\left(\frac{\chi_1}{\mu_1}+p_1+q_1\right)R(-\chi_1-\mu_1p_1) - \beta q_1R'(-\chi_1-\mu_1p_1).
\end{equation}
Next, we need to analyze the behavior of  $\beta\sF(\beta)$ for large $\beta$. This can be seen directly through \eqref{eq: 1RSB: Expression for the derivative of Xi-n at n zero with f g and varepsilon}, \eqref{eq: app: Relation between Xi-n and free energy}. The first term in
\eqref{eq: 1RSB: Expression for the derivative of Xi-n at n zero with f g and varepsilon} {can be shown to be of the form $\beta A + O(\beta^{-1})$, where $A\in\mathbb R$ is some constant. 
The reason for this behavior stems from the fact that for a discrete alphabet the corrections to the leading order term are exponential in $\beta$, except for a small $O(\beta^{-1})$ region close to the nearest neighbor points in the lattice. Therefore, to order $\beta^{-1}$, the first term in \eqref{eq: 1RSB: Expression for the derivative of Xi-n at n zero with f g and varepsilon} is simply $\beta$ times its partial derivative with respect to $\beta$. This enables us to evaluate the value of $\beta\sF(\beta)$ for large $\beta$ to the order $\beta^{-1}$ as follows: }
\begin{multline}
\beta\sF(\beta) =  \beta\frac{\partial}{\partial \beta}\left[\frac{\beta}{\mu_1}\int \int_{\mathbb{C}} \log \left(\int_{\mathbb{C}} \left(  \sum_{x \in \B_{u}}   \e^{ 2 \beta \Re\set{x (f_1 z^* + g_1y^*)} - \beta\varepsilon_1  \abs{x}^2 }  \right)^{\frac{\mu_1}{\beta}} \, \rD y \right) \, \rD z \,  \rd F_U(u)\right] \\
-\beta f_1^2(\chi_1 + p_1\mu_1) - \beta g_1^2 (\chi_1 + (p_1+q_1)\mu_1) + \beta \varepsilon_1(q_1+p_1+\frac{\chi_1}{\beta}) \\
 - \left(1-\frac{\beta}{\mu_1}\right)\int_0^{\chi_1} R(-w) \, \rd w - \frac{\beta}{\mu_1}\int_0^{\chi_1+\mu_1 p_1} R(-w) \, \rd w - \beta q_1 R(-\chi_1-\mu_1 p_1) \quad . \label{eq: 1RSB: Expression for the free_energy1}
\end{multline}
{Using the fixed point equations we may re-express the first line as follows:}
\begin{multline}
\frac{\beta}{\mu_1}\int_{\chi_1}^{\chi_1+\mu_1 p_1}  R(-w) {\rm d}w + \beta q_1 R(-\chi_1-\mu_1p_1) +2\beta q_1 R'(-\chi_1-\mu_1p_1) \\
-\chi_1 R(-\chi_1) + \frac{2\beta\chi_1R(-\chi_1)}{\mu_1}-2\beta R(-\chi_1-\mu_1p_1)\left(\frac{\chi_1}{\mu_1}-q_1-p_1\right)
\label{eq: 1RSB: Expression for the free_energy2}\quad.
\end{multline}
{Plugging this into the above equation and using \eqref{eq: app: 1RSB: Expression for epsilion after taking derivative wrt q p and chi}--\eqref{eq: app: 1RSB: Expression for f after taking derivative wrt q p and chi}, we eventually get the following equation for the zero-temperature entropy}
\begin{eqnarray}\label{eq:zero_temp_entropy}
{\bS} &=& \chi_1R(-\chi_1)-\int_0^{\chi_1} R(-w){\rm d}w\quad.
\end{eqnarray}

Remarkably the above equation for the entropy holds also for the RS case. To recover the RS structure of the equations above we start with $\mu_1/\beta=1$ and $\chi_{0}=\chi_1+\mu_1p_1$. Then, we find that $q_{0} = q_1$, $\epsilon_{0} = \epsilon_1-\beta g_1^2$, and $f_{0} = f_1$. After that we find the equations to reduce to the RS case analyzed in \cite{Muller-Guo-Moustakas-JSAC-2008}. 


\section{Proof of Proposition~\ref{prop:general_saddle_point_free_energy}}\label{app: Proof of Proposition on Free Energy}
We will now apply (\ref{eq: Thermodynamics definition of energy}) to express the energy in a compact fashion. We start by considering  the representation of the normalized average free energy in terms of $\Mat{Q}$ (see \eqref{eq: Definition of normalized free energy} and \eqref{eq: Definition of the subshell S-Q}), and let us denote this representation, for the sake of clarity, as $\sF(\Mat Q,\beta)$. In general, the replica crosscorrelation matrix $\Mat Q$ depends on $\beta$.
However, at the saddle point we have (by definition)
\begin{equation}
\frac{\partial \sF(\Mat{Q},\beta)}{\partial \Mat Q} = \Mat 0.
\label{eq:saddle_point_condition_for_Q}
\end{equation}
Thus, the total derivative in \eqref{eq: Thermodynamics definition of energy} becomes a partial derivative at the saddle point, i.e.\
\begin{equation}
\E(\beta)=\frac{\partial}{\partial \beta} \left(\beta \sF(\Mat{Q},\beta)\right) \quad .
\end{equation}
Referring to the proof in Appendix \ref{app: 1RSB: Proof of conditional probabilities proposition}, then with \eqref{eq: 1RSB: Definition of free energy in the replica analysis}, \eqref{eq: 1RSB: Identity used for replica analysis}, \eqref{eq: 1RSB: Definition of Xi-n}, \eqref{eq: 1RSB: Representation of Xi-n as integral with I and G}, and \eqref{GQint}, while substituting $h=0$, this gives
\begin{equation}
\label{eq:general_energy_diff}
\E(\beta)= \lim\limits_{n\to 0}  \frac{1}{n} \frac{\partial}{\partial \beta}  \int\limits_0^\beta \Tr[\Mat QR(-w\Mat Q)]\, {\rm d}w \quad ,
\end{equation}
which is easily shown to be equivalent to \eqref{eq: General expression for energy}.
Furthermore, we get \eqref{prop31b} by plugging \eqref{generalQtilde} into \eqref{generalQ} while substituting $h=0$.

\section{Discrete Lattice Relaxation: Small $\chi_1$ Approximation Near Unit Load (1RSB)}
\label{app: chi-zero approximation of the 1RSB equations in the vicinity of unit load}

This appendix provides an approximate derivation of the 1RSB equations for the discrete lattice-based alphabet relaxation scheme of Section \ref{sec: Replica Symmetry Breaking Example}, while assuming a Gaussian $\Mat{H}$, and a ZF front-end.
The approximation is based on the numerical observation that the macroscopic parameter $\chi_1$, employed in the 1RSB ansatz for this setting, approaches zero as the system load gets close to unity. This approximation considerably simplifies the numerical solution of the 1RSB equations in this region of the system load.


\subsection{Case I: $\alpha=1$}
\label{app: subsec: Case I  - alpha unity}

For $\alpha=1$, the $R$-transform of $\Mat{J}=\Mat{T}^\dag\Mat{T}$ (see Proposition  \ref{prop: 1RSB: Limiting Energy Penalty}) satisfies
\begin{eqnarray}
R(-w) &=& \frac{\alpha - 1 + \sqrt{(1-\alpha)^2+4\alpha w}}{2\alpha w} \underset{\alpha=1}{=}\quad \frac{1}{\sqrt{w}}, \quad \forall w\in \mathbb{R}\label{app: eq: R-transform for Gaussian H - alpha 1} \\
R'(-w) &=& \frac{\left( 1 - \alpha -  \sqrt{(1-\alpha)^2+4\alpha w}    \right)^2}{4\alpha w^2 \sqrt{(1-\alpha)^2+4\alpha w}} \underset{\alpha=1}{=} \frac{1}{2w^{\frac{3}{2}}}, \quad \forall w\in \mathbb{R} \quad . \label{app: eq: Derivative of R-transform for Gaussian H - alpha 1}
\end{eqnarray}
Considering the small $\chi_1$ regime, we get from \eqref{eq: 1RSB: Expression for epsilion after taking derivative wrt q p and chi}--\eqref{eq: 1RSB: Expression for f after taking derivative wrt q p and chi}
\begin{eqnarray}
\varepsilon_1 & = & \frac{1}{\sqrt{\chi_1}} \quad ,\\
g_1 & = & \sqrt{\frac{\frac{1}{\sqrt{\chi_1}} - \frac{1}{\sqrt{\chi_1 + \mu_1 p_1}}}{\mu_1}} \underset{\chi_1 \ll 1}{\approx} \sqrt{\frac{1}{\mu_1\sqrt{\chi_1}}} \quad ,\\
f_1 & \underset{\chi_1 \ll 1}{\approx} & \sqrt{q_1 R'(-\mu_1 p_1)}  = \sqrt{q_1 \frac{1}{2 (\mu_1 p_1)^{\frac{3}{2}}}} \quad . \label{app: eq f for alpha 1 and chi small}
\end{eqnarray}
Particularizing to the two-dimensional discrete lattice-based alphabet relaxation scheme in concern, one gets from \eqref{eq: 1RSB QPSK: Definition of psi-xi}
\begin{equation}\label{app: eq: 1RSB QPSK: Definition of psi-xi  - alpha 1}
\psi_k(\xi) = \frac{\varepsilon_1 v_k - f_1 \xi}{g_1} \underset{\chi_1 \ll 1}{\approx} \frac{v_k}{\sqrt{\chi_1}} \sqrt{\mu_1 \sqrt{\chi_1}}  = \sqrt{\mu_1} \frac{1}{\chi_1^{\frac{1}{4}}} \frac{c_k + c_{k-1}}{2} \quad \forall \abs{\xi} < \infty .
\end{equation}
We now rewrite the function $\Theta_k(\xi)$ of \eqref{eq: 1RSB QPSK: Definition of Theta-xi} as
\begin{equation}
\begin{aligned}
\Theta_k(\xi)
&\triangleq \e^{\mu_1 c_k \left[ (\mu_1 g_1^2 - \varepsilon) c_k+ 2  f_1 \xi\right]} \left[ Q\left( \sqrt{2} (\psi_k(\xi) - \mu_1  g_1 c_k) \right) - Q\left( \sqrt{2} (\psi_{k+1}(\xi) - \mu_1  g_1 c_k) \right)\right]
\end{aligned}
\end{equation}
and observe the following. Starting with exponential argument, we get
\begin{equation}
 (\mu_1 g_1^2 - \varepsilon_1) c_k+ 2  f_1 \xi \underset{\chi_1 \ll 1}{\approx} -R(-\mu_1 p_1)c_k + 2 f_1 \xi = -\frac{c_k}{\sqrt{\mu_1 p_1}} + 2 f_1 \xi \quad ,
\end{equation}
while the arguments of the $Q(\cdot)$ functions satisfiy
\begin{equation}
\psi_k(\xi) - \mu_1 g_1 c_k \underset{\chi_1 \ll 1}{\approx}  \frac{\sqrt{\mu_1}}{\chi_1^{\frac{1}{4}}} \frac{c_k + c_{k-1}}{2} -  \frac{\sqrt{\mu_1}}{\chi_1^{\frac{1}{4}}} c_k = - \frac{\sqrt{\mu_1}}{\chi_1^{\frac{1}{4}}} \frac{c_k - c_{k-1}}{2} \quad ,
\end{equation}
and
\begin{equation}
\psi_{k+1}(\xi) - \mu_1 g_1 c_k \underset{\chi_1 \ll 1}{\approx}  \frac{\sqrt{\mu_1}}{\chi_1^{\frac{1}{4}}} \frac{c_{k+1} + c_{k}}{2} -  \frac{\sqrt{\mu_1}}{\chi_1^{\frac{1}{4}}} c_k =  \frac{\sqrt{\mu_1}}{\chi_1^{\frac{1}{4}}} \frac{c_{k+1} - c_{k}}{2} \quad .
\end{equation}
Now recall that from the underlying definition of the extended alphabet set, it follows that $c_k \ge c_{k-1} \ \forall k$, and it can hence be concluded that
\begin{equation}
\psi_k(\xi) - \mu_1 g_1 c_k \underset{\chi_1 \ll 1}{\approx} \begin{cases}
      -\infty & c_{k-1}=-\infty, \abs{c_k}<\infty \ , \\
      - \frac{\sqrt{\mu_1}}{\chi_1^{\frac{1}{4}}} \frac{c_k - c_{k-1}}{2} \rightarrow -\infty &  \abs{c_{k-1}}, \ \abs{c_k}<\infty \ ,
\end{cases}
\end{equation}
and
\begin{equation}
\psi_{k+1}(\xi) - \mu_1 g_1 c_k \underset{\chi_1 \ll 1}{\approx} \begin{cases}
       \frac{\sqrt{\mu_1}}{\chi_1^{\frac{1}{4}}} \frac{c_{k+1} - c_{k}}{2} \rightarrow \infty &  \abs{c_{k}}, \ \abs{c_{k+1}}<\infty \ , \\
      \infty & c_{k}<\infty, \abs{c_{k+1}}=\infty \ .
\end{cases}
\end{equation}
This implies that
\begin{eqnarray}
Q(\sqrt{2}(\psi_k(\xi) - \mu_1 g_1 c_k)) & \xrightarrow{\chi_1 \rightarrow 0} & 1 \ \forall k \\
Q(\sqrt{2}(\psi_{k+1}(\xi) - \mu_1 g_1 c_k)) & \xrightarrow{\chi_1 \rightarrow 0} & 0 \ \forall k \ .
\end{eqnarray}
We therefore conclude that
\begin{equation}
\begin{aligned}
\Theta_k(\xi)
&\underset{\chi_1 \ll 1}{\approx} \e^{\mu_1 c_k \left[ (\mu_1 g_1^2 - \varepsilon_1) c_k+ 2  f_1 \xi\right]} \underset{\chi_1 \ll 1}{\approx} \e^{\mu_1 c_k \left[ 2  f_1 \xi-R(-\mu_1 p_1) c_k\right]} =  \e^{\mu_1 c_k \left( 2  f \xi- \frac{c_k}{\sqrt{\mu_1 p_1}}\right)} \ .
\end{aligned}
\end{equation}
In a similar manner one can observe that the exponential terms in the RHS of \eqref{eq: 1RSB QPSK: Definition of Psi-xi} vanish as $\chi_1 \rightarrow 0$, and conclude that
\begin{equation}
\Psi_k(\xi) \xrightarrow{\chi_1 \rightarrow 0}  0 \quad \forall k \ .
\end{equation}

In view of the above we can now restate the coupled equations that determine the macroscopic parameters $q_1$, $p_1$, and $\mu_1$ in the following way (cf.\ \eqref{eq: 1RSB QPSK: Fixed point equation for q}--\eqref{eq: 1RSB QPSK: Fixed point equation for mu}, and note that the equation for determining $\chi_1$ can be ignored):
\begin{eqnarray}
q_1 &\underset{\chi_1 \ll 1}{\approx}& 2   \int \frac{ \sum_{m=1}^L c_m^2  \Theta_m(\xi) }{\sum_{m=1}^L \Theta_m(\xi) }  \e^{-\xi^2} \, \frac{\rd \xi}{\sqrt{\pi}} - p_1 \quad , 
\\
 p_1 &\underset{\chi_1 \ll 1}{\approx}& \frac{2}{f_1\mu_1}    \int \frac{\sum_{m=1}^L c_m \Theta_m(\xi)}{\sum_{m=1}^L \Theta_m(\xi)}
\xi \e^{-\xi^2} \, \frac{\rd \xi}{\sqrt{\pi}}  \ , 
\\
0   &\underset{\chi_1 \ll 1}{\approx}& 2 \int \log \left(\sum_{m=1}^L \Theta_m(\xi) \right) \e^{-\xi^2} \, \frac{\rd \xi}{\sqrt{\pi}}  \ , 
\end{eqnarray}
where we used \eqref{app: eq: R-transform for Gaussian H - alpha 1}--\eqref{app: eq f for alpha 1 and chi small} to obtain
\begin{equation}
\int_0^{\mu_1 p_1} R(-w) \, \rd w = 2 \sqrt{\mu_1 p_1} \quad , \quad  R'(-\mu_1 p_1) = \frac{1}{2(\mu_1 p_1)^{\frac{3}{2}}} \quad .
\end{equation}
The energy penalty in this case is given by (cf.\ \eqref{eq: 1RSB: Limit of the effective energy penalty})
\begin{equation}
\bE_\rsb \underset{\chi_1 \ll 1}{\approx} \frac{q_1+p_1}{\sqrt{\mu_1 p_1}} - \frac{q_1 \mu_1 p_1}{2 (\mu_1 p_1)^{\frac{3}{2}}} = \frac{q_1+2p_1}{2\sqrt{\mu_1 p_1}} \quad .
\end{equation}

\subsection{Case II: $\alpha < 1$, $\alpha \rightarrow 1$}
\label{app: subsec: Case II - alpha lt 1}

In a similar manner to the previous section, we start with the $R$-transform of
$\Mat{J}=\Mat{T}^\dag\Mat{T}$, and rewrite it for small $w$, using the Taylor expansion around $w=0$, as
\begin{equation} 
\begin{aligned}
R(-w) &= \frac{\alpha - 1 + \sqrt{(1-\alpha)^2+4\alpha w}}{2\alpha w} \underset{w \ll 1}{\approx} , \frac{-(1-\alpha) + (1-\alpha)\left(1+\frac{2\alpha}{(1-\alpha)^2}w - \frac{2\alpha^2}{(1-\alpha)^4}w^2 + o(w^2) \right)}{2\alpha w}
\\
& \underset{w \ll 1}{\approx} \frac{1}{1-\alpha} - \frac{\alpha}{(1-\alpha)^3}w + O(w^2) \ .
\end{aligned}
\end{equation}
We focus in the following on the regime in which $\alpha \rightarrow 1$, so that $\frac{1}{1-\alpha} \gg 1$, but still $\frac{1}{1-\alpha} \ll \frac{1}{\chi_1}$. It hence follows that
\begin{eqnarray}
\varepsilon_1 & \underset{\chi_1 \ll 1}{\approx} & \frac{1}{1-\alpha} + O(\chi_1) \quad ,\\
g_1 & \underset{\chi_1 \ll 1}{\approx}  & \sqrt{\frac{\frac{1}{1-\alpha} - \frac{1}{\sqrt{\mu_1 p_1}}}{\mu_1}} \underset{\chi_1 \ll 1, \alpha\rightarrow 1}{\approx} \sqrt{\frac{1}{\mu_1(1-\alpha)}} \quad ,\\
f_1 & \underset{\chi_1 \ll 1}{\approx} & \sqrt{q_1 R'(-\mu_1 p_1)}
\quad . 
\end{eqnarray}
Particularizing again to the two-dimensional discrete lattice alphabet relaxation scheme for QPSK signaling, it follows from \eqref{eq: 1RSB QPSK: Definition of psi-xi} that
\begin{equation}\label{app: eq: 1RSB QPSK: Definition of psi-xi  - alpha lt 1}
\psi_k(\xi) = \frac{\varepsilon_1 v_k - f_1 \xi}{g_1} \underset{\chi_1 \ll 1, \alpha\rightarrow 1}{\approx} \frac{v_k}{1-\alpha} \sqrt{\mu_1 (1-\alpha)}  = \sqrt{\frac{\mu_1}{1-\alpha}} \frac{c_k + c_{k-1}}{2} \quad \forall \abs{\xi} < \infty \quad .
\end{equation}
Considering \eqref{eq: 1RSB QPSK: Definition of Theta-xi} we write
\begin{equation}
 (\mu_1 g_1^2 - \varepsilon_1) c_k+ 2  f_1 \xi \underset{\chi_1 \ll 1}{\approx} -R(-\mu_1 p_1)c_k + 2 f_1 \xi 
 \quad .
\end{equation}
Next, the arguments of the $Q(\cdot)$ functions in \eqref{eq: 1RSB QPSK: Definition of Theta-xi} satisfy
\begin{equation}
\psi_k(\xi) - \mu_1 g_1 c_k \underset{\chi_1 \ll 1, \alpha\rightarrow 1}{\approx}  \sqrt{\frac{\mu_1}{1-\alpha}} \frac{c_k + c_{k-1}}{2} -  \sqrt{\frac{\mu_1}{1-\alpha}} c_k = - \sqrt{\frac{\mu_1}{1-\alpha}} \frac{c_k - c_{k-1}}{2} \quad ,
\end{equation}
and
\begin{equation}
\psi_{k+1}(\xi) - \mu_1 g_1 c_k \underset{\chi_1 \ll 1, \alpha\rightarrow 1}{\approx}  \sqrt{\frac{\mu_1}{1-\alpha}} \frac{c_{k+1} + c_{k}}{2} -  \sqrt{\frac{\mu_1}{1-\alpha}} c_k =  \sqrt{\frac{\mu_1}{1-\alpha}} \frac{c_{k+1} - c_{k}}{2} \quad .
\end{equation}
This enables us to conclude that
\begin{eqnarray}
\psi_k(\xi) - \mu_1 g_1 c_k  & \xrightarrow{\chi_1 \ll 1, \alpha\rightarrow 1} & -\infty \\
\psi_{k+1}(\xi) - \mu_1 g_1 c_k & \xrightarrow{\chi_1 \ll 1, \alpha\rightarrow 1} & \infty  \  ,
\end{eqnarray}
and hence
\begin{equation}
\begin{aligned}
\Theta_k(\xi)
&\underset{\chi_1 \ll 1, \alpha \rightarrow 1}{\approx} \e^{\mu_1 c_k \left[ (\mu_1 g_1^2 - \varepsilon_1) c_k+ 2  f_1 \xi\right]} \underset{\chi_1 \ll 1, \alpha \rightarrow 1}{\approx} \e^{\mu_1 c_k \left[ 2  f_1 \xi-R(-\mu_1 p_1) c_k\right]} 
\ , \quad \forall k \quad ,
\end{aligned}
\end{equation}
and
\begin{equation}
\Psi_k(\xi) \xrightarrow{\chi_1 \ll 0, \alpha\rightarrow 1}  0 \quad , \quad \forall k \ .
\end{equation}
Finally, note that $\int_{0}^{\mu_1 p_1} R(-w) \, dw$ exists for $\alpha<1$, and the approximation
\begin{equation}
\mu_1 \chi_1 g_1^2 \xrightarrow{\chi_1 \ll 0, \alpha\rightarrow 1} 0
\end{equation}
is employed to derive the three coupled equation that determine the macroscopic parameters $q_1$, $p_1$, and $\mu_1$. The three equations are thus
\begin{eqnarray}
q_1 &\underset{\chi_1 \ll 1, \alpha \rightarrow 1}{\approx}& 2   \int \frac{ \sum_{m=1}^L c_m^2  \Theta_m(\xi) }{\sum_{m=1}^L \Theta_m(\xi) }  \e^{-\xi^2} \, \frac{\rd \xi}{\sqrt{\pi}} - p_1 \quad , 
\\
 p_1 &\underset{\chi_1 \ll 1, \alpha \rightarrow 1}{\approx}& \frac{2}{f_1\mu_1}    \int \frac{\sum_{m=1}^L c_m \Theta_m(\xi)}{\sum_{m=1}^L \Theta_m(\xi)}
\xi \e^{-\xi^2} \, \frac{\rd \xi}{\sqrt{\pi}}  \ , 
\\
\mu_1   &\underset{\chi_1 \ll 1, \alpha \rightarrow 1}{\approx}&    \frac{2 \int \log \left(\sum_{m=1}^L \Theta_m(\xi) \right) \e^{-\xi^2} \, \frac{\rd \xi}{\sqrt{\pi}} -\int_{0}^{\mu_1 p_1}
R(-w)\, \rd w + \mu_1 ( q_1  +2 p_1) R(-\mu_1 p_1) }{2 q_1\mu_1 p_1 R'( - \mu_1 p_1)}  \ . \nonumber \\ 
\end{eqnarray}
The expression for the energy penalty is given by
\begin{equation}
\bE_\rsb \underset{\chi_1 \ll 1, \alpha \rightarrow 1}{\approx}  \left(q_1+p_1\right)R(-\mu_1 p_1)
 - q_1 \mu_1 p_1 R'( - \mu_1 p_1)\ Ê.
\end{equation}
We also note that the exact expressions for the $R$-transform and its derivative were employed for the purpose of producing more accurate numerical results, while using this small $\chi_1$ approximation for $\alpha<1$.

\section{Proof of Lemma \ref{incRtrafo}}
\label{prooflemma}

The Stieltjes transform of the probability distribution $F(x)$ is defined by
\begin{equation}
m(s) = \int \frac{{\rm d}F(x)}{x-s}\quad .
\end{equation}
In terms of the Stieltjes transform, the $R$-transform is defined as
\begin{equation}
R(w) = m^{-1}(-w)-\frac1w \quad ,
\end{equation}
where $m^{-1}(s)$ denotes the inverse function of $m(s)$ with respect to composition, i.e., $m(m^{-1}(s))=s$.

We start with the observation that the derivative of the Stieltjes transform is lower bounded by its square 
\begin{equation}
m^\prime(s) = \int \frac{{\rm d}F(x)}{(x-s)^2} \ge [m(s)]^2 \quad ,
\end{equation}
by means of Jensen's inequality, with equality if and only if the distribution $F(x)$ is a single mass point.
Next, we consider the derivative of the $R$-transform. Letting $w=m(s)$, it follows that
\begin{align}
R^\prime (w) &= \frac{{\rm d} m^{-1}(-w)}{{\rm d}w} + \frac1{w^2}\\
& = \frac{-1}{m^\prime(s)} + \frac1{[m(s)]^2} \ge0 \quad ,
\end{align}
with equality if and only if the distribution $F(x)$ is a single mass point. Lemma \ref{incRtrafo} then follows immediately.

\if 0
\section{Convex Relaxation}

We have from Prop.~\ref{prop:general_saddle_point_free_energy}
\begin{equation}
\Mat Q = \int \frac{\sum\limits_{{\bf x}\in \B_{u}^n} {\bf xx^\dagger}{\rm e}^{\,\beta{\bf x}^\dagger R(-\beta\Mat Q)\bf x}}
{\sum\limits_{{\bf x}\in \B_{u}^n} {\rm e}^{\,\beta {\bf x}^\dagger R(-\beta\Mat Q)\bf x}}
\,{\rm d} F_u(u) \quad. 
\end{equation}
For convex relaxation, we get due to the symmetry $\B_{1}=-\B_{-1}=[1;\infty)$
\begin{equation}
\Mat Q =  \frac{\sum\limits_{{\bf x}\in \B_{1}^n} {\bf xx^\dagger}{\rm e}^{\,\beta{\bf x}^\dagger R(-\beta\Mat Q)\bf x}}
{\sum\limits_{{\bf x}\in \B_{1}^n} {\rm e}^{\,\beta {\bf x}^\dagger R(-\beta\Mat Q)\bf x}}
 \quad. 
\end{equation}
\fi

\section{Spectral Efficiency of Generalized Tomlinson-Harashima Precoding}
\label{App: Spectral Efficiency of Generalized Tomlinson-Harashima Precoding}

For the sake of comparison, we review here the derivation of the spectral efficiency of \emph{generalized Tomlinson-Harashima precoding (GTHP)}, which is another practical alternative to the capacity achieving DPC.
The approach is based on inflated lattice strategies, and borrows ideas from the recent analysis of pulse amplitude modulation (PAM) in \cite{Gariby-Erez-Shamai-ISIT-2007}. The spectral efficiency is derived following \cite{Zamir-Shamai-Erez-2002,Erez-Shamai-Zamir-2005} (see also \cite{Ginis-Cioffi-JSAC-2002, Boccardi-Tosato-Caire-IZS-2006}), while employing successive encoding using the inflated lattice strategy 
at each stage, where the signals of previously encoded users are treated as \emph{causally} known interference. 
We consider here the ``canonical" channel model as in \eqref{eq: Basic System Model}, and note that a comparative analysis of other variants of GTHP can be found, e.g., in \cite{Boccardi-Tosato-Caire-IZS-2006}.

The underlying idea of the scheme considered here is first to induce a ``triangular" channel structure using the $LQ$-factorization of the channel transfer matrix. Assuming $\Mat{H}$ is full rank, we denote
\begin{equation}\label{App eq: LQ-factorization of H}
\Mat{H} = \Mat{L}\bar{\textbf{Q}} \quad ,
\end{equation}
where $\Mat{L}_{[K\times K]}$ is lower triangular with positive diagonal entries and $\bar{\textbf{Q}}_{K\times N}$ has orthonormal rows. The transmitted signal is then given by
\begin{equation}
\vct{t}=\bar{\textbf{Q}}^\dag \vct{x} \quad ,
\end{equation}
where $\vct{x}$ is the nonlinear precoder's output (cf.\ \eqref{eq: Relation of transmitted vector to the vector of coded symbols}). The signal received by the $k$th user is thus given by
\begin{equation}
\label{App eq: Signal receiver by the kth user with LQ factorization}
r_k = L_{kk} x_k + \sum_{j=1}^{k-1} L_{kj} x_j + n_k \quad ,
\end{equation}
where $\set{L_{ij}}$ denote the entries of $\Mat{L}$ and $x_i$ is the nonlinear precoder's output that corresponds to user $i$. Normalizing both sides of the equation by $L_{kk}$, we get the following equivalent channel
\begin{equation}
\label{App eq: Equivalent LQ channel after normalization}
\begin{aligned}
\breve{r}_k &=  x_k + \sum_{j=1}^{k-1} \frac{L_{kj}}{L_{kk}} x_j + \breve{n}_k \\
&= x_k + s_k + \breve{n}_k
\quad ,
\end{aligned}
\end{equation}
where we denote the multiuser interference experienced by user $k$ as $s_k \triangleq \sum_{j=1}^{k-1} \frac{L_{kj}}{L_{kk}} x_j$, and $\breve{n}_k$ is a zero-mean circularly symmetric complex Gaussian noise with variance $\frac{\sigma^2}{L_{kk}^2}$.

In the GTHP setting, instead of DPC (as employed at this point, e.g., by the ``zero-forcing dirty-paper" scheme of \cite{Caire-Shamai-2003}), we follow for each user the THP-type strategy described in \cite{Erez-Shamai-Zamir-2005} for canceling the interference due to previously encoded users. This strategy, which applies for canceling \emph{causally} known interference, leads in the broadcast setting to a considerably reduced complexity as it involves only scalar quantizations (as opposed to vector quantizations in the noncausal case, see therein). To make a fair comparison to the precoding schemes discussed in Sections \ref{sec: Replica Symmetry Breaking Example}-\ref{sec: A Replica Symmetric Example}, we particularize here to the case in which the information bearing signal takes on binary values per each dimension (so that the total spectral efficiency for quadrature modulation, as a function of $\febno$, is twice as much as the one obtained for binary input). The spectral efficiency for continuous input is derived as well for completeness. The basic transmission scheme is reviewed first, while considering real channels. 

The underlying \emph{real} channel model is given by
\begin{equation}
y=x + s + n \quad ,
\end{equation}
where $x$ is subject to an average power constraint $P_x$, $n$ is a zero-mean AWGN with variance $P_n$, and $s$ is an interference signal which is known causally at the transmitter (i.e., at the current time instance), but not at the receiver. This channel model is also referred to in the literature as the ``dirty-tape" model \cite{Erez-Shamai-Zamir-2005}. Consider the one-dimensional lattice
\begin{equation}
\Lambda = \Delta \set { \cdots -3,-1,1,3, \cdots} \ .
\end{equation}
Let $\mathcal{V}=[-\Delta, \Delta)$ denote the basic Voronoi region of $\Lambda$. Let $d$ be a dither signal uniformly distributed over $\mathcal{V}$. Under a common randomness assumption, this dither signal is assumed to be available at the receiver as well. 

Starting with \emph{continuous} information bearing signals, then by the GTHP scheme the transmitter sends the signal
\begin{equation}\label{App: eq: Definition of x for the continuous case}
x = [v - \tilde{\alpha} s - d] \mod \Lambda \quad ,
\end{equation}
where $\tilde{\alpha} \in (0,1]$ is referred to as the ``inflation factor". The receiver scales the received signal by $\tilde{\alpha}$, adds the dither signal, and then performs a modulo-lattice operation, yielding
\begin{equation}
y' = [\tilde{\alpha} y + d] \mod \Lambda \quad .
\end{equation}
Effectively, the induced channel is equivalent to (see \cite{Erez-Shamai-Zamir-2005}, Lemma 6)
\begin{equation}
\label{eq: Effective received signal for continuous input}
y' = [v+n_{\text{eff}}] \mod \Lambda \quad ,
\end{equation}
where the effective noise is given by
\begin{equation}
\label{eq: Effective noise for continuos input}
n_{\text{eff}} \triangleq [(\tilde{\alpha}-1)x  + \tilde{\alpha} n] \mod \Lambda \quad ,
\end{equation}
and we note that the dither signal ensures that $x$ is uniformly distributed over the Voronoi region, and is independent of either the information bearing signal $v$, or the noise $n$.

The capacity of this channel is achieved by a uniform input distribution over the Voronoi region, $v \sim \Unif    \set{\mathcal{V}}$, for which the relation between the lattice constant $\Delta$ and the transmit power $P_x$ is given by $\Delta=\sqrt{3 P_x}$. The corresponding achievable rate is equal to the input-output mutual information of the equivalent channel \eqref{eq: Effective received signal for continuous input}
\begin{equation}
\label{App eq: Achievable rate for SU channel with GTHP and continuous input}
\begin{aligned}
R(P_x) \triangleq \mathrm{I}(v,y') &= \log (2\Delta) - \mathrm{h}(n_{\text{eff}}) \\
&= \frac{1}{2} \log (12 P_x)  - \int_{-\Delta}^\Delta f_{n_{\text{eff}}}(\zeta) \log_2 f_{n_{\text{eff}}}(\zeta) \, \rd \zeta \, \Biggr|_{\Delta= \sqrt{ 3 P_{x}}} \quad .
\end{aligned}
\end{equation}
The entropy of the effective noise is derived via the following observation.
Denoting the ``self-noise" term by
\begin{equation}
\label{eq: Definition of the self noise for continuous input}
Z= (\tilde{\alpha} - 1)x \quad,
\end{equation}
its pdf is given by 
\begin{equation}
\label{eq: PDF of self noise for quantized inflated lattice - Continuous Input}
f_{Z}(\zeta) = \begin{cases}
   \frac{1}{2(1-\tilde{\alpha})\Delta}   & \abs{\zeta} \le (1-\tilde{\alpha}) \Delta \ , \\
   0 & \text{otherwise} \ .
\end{cases}
\end{equation}
The pdf of the effective noise \eqref{eq: Effective noise for continuos input} is thus given by
\begin{equation}
\label{eq: pdf of the effective noise for continuous input}
f_{n_{\text{eff}}}(\zeta) = 
\begin{cases}
     \sum_{i=-\infty}^{\infty} f_{\tilde{Z}}(\zeta-2 i \Delta) & -\Delta \le \zeta < \Delta \ , \\
     0 & \text{otherwise} \ , 
\end{cases} 
\end{equation}
where $f_{\tilde{Z}}(\zeta)$ denotes the pdf of the pre-modulo noise term, which is given by the convolution of the pdf of the self-noise \eqref{eq: PDF of self noise for quantized inflated lattice - Continuous Input} and the pdf of the scaled AWGN
\begin{equation}
\begin{aligned}
f_{\tilde{Z}}(\zeta) &= f_{Z}(\zeta) \ast f_{\tilde{\alpha}n}(\zeta) \\
&= \frac{1}{2(1-\tilde{\alpha})\Delta}\left[ Q\left( \frac{\sqrt{2}}{\tilde{\alpha}}\left(\zeta-(1-\tilde{\alpha})\Delta\right)\right) -
Q\left( \frac{\sqrt{2}}{\tilde{\alpha}}\left(\zeta+(1-\tilde{\alpha})\Delta\right)\right)\right]
\quad .
\end{aligned}
\end{equation}
We normalized here without loss of generality the spectral level of the AWGN to $\frac{1}{2}$ per dimension (inducing a unit noise spectral level in complex channels \cite{Verdu-paper-low-snr-regime-02}), so that effectively $P_{x}$ specifies the SNR of the original underlying \emph{complex} channel model, corresponding to \eqref{App eq: Equivalent LQ channel after normalization}. 
The rate in \eqref{App eq: Achievable rate for SU channel with GTHP and continuous input} can be optimized with respect to the inflation factor $\tilde{\alpha}$ (which is performed to obtain the numerical results shown in Section \ref{sec: Spectral Efficiency Comparison}). We also note that choosing $\tilde{\alpha}=1$ corresponds to standard THP, while another popular choice is the minimum mean-squared error (MMSE) factor (also referred to as the ``Costa factor'' \cite{Costa-83})
\begin{equation}
\alpha_{\textrm{MMSE}} = \frac{P_x}{P_x + P_n} \quad .
\end{equation}

Turning to discrete input with \emph{M-pulse amplitude modulation (M-PAM)} (representing the information bearing signals), the setting is equivalent to the case in which the continuous information bearing signal considered above is \emph{quantized} (cf.\ \cite{Erez-Zamir-2004}).  Instead of \eqref{App: eq: Definition of x for the continuous case}, the channel input is now given by
\begin{equation}
x  = [\Q(v) - \tilde{\alpha} s - d] \mod \Lambda \triangleq [v_\Q - \tilde{\alpha} s - d] \mod \Lambda \quad ,
\end{equation}
where $\Q(\cdot)$ denotes the nearest-neighbor uniform quantizer with step size $\Delta$ \cite{Gariby-Erez-Shamai-ISIT-2007}, and $v$ is assumed to be uniformly distributed over the Voronoi region. We note here that this transmission scheme differs from the one considered in \cite{Gariby-Erez-Shamai-ISIT-2007}, where the \emph{channel input} is quantized to comply with an M-PAM constellation (see therein). Note also that as in the continuous setting, due to the dither signal, the channel input $x$ is still uniformly distributed over the Voronoi region.
The effective channel can now be represented in the form (cf.\  \eqref{eq: Effective received signal for continuous input})
\begin{equation}
\label{eq: Effective received signal for discrete input}
y'_\Q = [v_\Q+n_{\text{eff}}] \mod \Lambda \quad ,
\end{equation}
where the effective noise is still given by  \eqref{eq: Effective noise for continuos input}.
Restricting this review to the case of binary information bearing signals per dimension, the channel input signal is limited to the interval $\V=[-\Delta,\Delta)$, while the quantized information bearing signal is obtained using
\begin{equation}
\Q(v) = \begin{cases}
      -\frac{\Delta}{2} & -\Delta \le v < 0 \ , \\
     + \frac{\Delta}{2} & 0 \le v < \Delta \ .
\end{cases}
\end{equation}
For consistency we retain the relation $\Delta = \sqrt{3 P_{x_{\Q}}}$ .

The achievable rate for binary input is given again by the mutual information
\begin{equation}
\label{eq: Achievable rate for binary quantized input}
\begin{aligned}
R(P_{x_{\Q}}) \triangleq \textrm{I}(v;y'_\Q) &= \textrm{h}(y'_\Q) - \textrm{h}(n_{\text{eff}}) \quad .
\end{aligned}
\end{equation}
Note that the pdf of the random quantity inside the modulo function in \eqref{eq: Effective received signal for discrete input} is given by
\begin{equation}\label{App: eq: pdf of y for discrete input - before modulo}
\begin{aligned}
f_{\tilde{Y}}(\zeta) &= f_{v_\Q}(\zeta) \ast f_{\tilde{Z}}(\zeta) \\
&= \left( \frac{1}{2} \delta\left(\zeta-\frac{\Delta}{2}\right) +   \frac{1}{2} \delta\left(\zeta+\frac{\Delta}{2}\right)   \right) \ast f_{\tilde{Z}}(\zeta) \\
&=\frac{1}{2} f_{\tilde{Z}}\left(\zeta-\frac{\Delta}{2}\right) + \frac{1}{2} f_{\tilde{Z}}\left(\zeta+\frac{\Delta}{2}\right) \quad .
\end{aligned}
\end{equation}
Hence, the pdf of the equivalent channel output $y'_\Q$ is equal to 
\begin{equation}
\label{eq: PDF the effective channel output - discrete input}
f_{y'_\Q}(\zeta) = \begin{cases}
   \sum_{i=-\infty}^\infty f_{\tilde{Y}}(\zeta-2 i \Delta)   & \Delta \le \zeta < \Delta \ , \\
   0 & \text{otherwise} \ ,
\end{cases}
\end{equation}
and the achievable rate of \eqref{eq: Achievable rate for binary quantized input} can be rewritten as
\begin{equation}
\label{eq: Achievable rate for binary quantized input - explicit form}
\begin{aligned}
R(P_{x_{\Q}}) &= -\int_{-\Delta}^\Delta f_{y'_\Q}(\zeta) \log_2 f_{y'_\Q}(\zeta) \, \rd \zeta + \int_{-\Delta}^\Delta f_{n_{\text{eff}}}(\zeta) \log_2 f_{n_{\text{eff}}}(\zeta) \, \rd \zeta \\
&= \int_{-\Delta}^\Delta \left[ f_{n_{\text{eff}}}(\zeta) \log_2 f_{n_{\text{eff}}}(\zeta) - f_{y'_\Q}(\zeta) \log_2 f_{y'_\Q}(\zeta)\right] \, \rd \zeta \, \Biggr|_{\Delta= \sqrt{ 3 P_{x_\Q}}} \quad .
\end{aligned}
\end{equation}

The above principles can now be applied to the channel in \eqref{App eq: Signal receiver by the kth user with LQ factorization}, where the transmitter pre-cancells using the GTHP scheme, per each transmitted symbol, the interference due to the \emph{corresponding symbols} of previously encoded users. 
Using \eqref{App eq: Achievable rate for SU channel with GTHP and continuous input} and \eqref{eq: Achievable rate for binary quantized input}, the achievable rate of the $k$th user can be obtained by substituting $P_x=L_{kk}^2 \, \snr$ for continuous input, and $P_{x_{\Q}} = L_{kk}^2 \, \snr$ for the binary setting, yielding, respectively, for \emph{real} channels
\begin{equation}
\label{App eq: Achievable rate by user k - GTHP - Continuous}
R^{\gthp}_{C,k}(\snr) =  \frac{1}{2} \log (12 L_{kk}^2 \snr)   + \int_{-\Delta}^\Delta f_{n^C_{\text{eff}}}(\zeta) \log_2 f_{n^C_{\text{eff}}}(\zeta) \, \rd \zeta \Biggr|_{\Delta=\sqrt{3 L_{kk}^2 \snr}} \quad ,
\end{equation}
and
\begin{equation}
\label{App eq: Achievable rate by user k - GTHP - BPSK}
R^{\gthp}_{\Q,k}(\snr) =  
 \int_{-\Delta}^\Delta \left[ f_{n_{\text{eff}}}(\zeta) \log_2 f_{n_{\text{eff}}}(\zeta) - f_{y'_\Q}(\zeta) \log_2 f_{y'_\Q}(\zeta)\right] \, \rd \zeta \, \Biggr|_{\Delta= \sqrt{ 3 L_{kk}^2 \, \snr}} \quad .
\end{equation}

To complete the analysis, it is left to derive the normalized spectral efficiency of GTHP in the large system limit. This is obtained using the following observation (see \cite[Lemma 3]{Caire-Shamai-2003}).
\begin{lem}\label{App: Lem: Lemma on the limiting values of the Cholesky diagonals}
Let $\Mat{H}$ be a $K \times N$ random matrix, having i.i.d.\ circularly symmetric zero-mean entries with variance $\frac{1}{N}$ and finite fourth moment, and let $\Mat{H}^{(k)}$, $k<K$, denote the matrix constructed by striking out the \emph{last} $K-k$ rows of $\Mat{H}$. Then
\begin{equation}\label{App: eq: Relation between the Cholesky diagonal and entires of invHH}
L_{kk}^2 = \frac{1}{\left[ (\Mat{H}^{(k)} {\Mat{H}^{(k)}}^\dag)^{-1} \right]_{kk}} \quad ,
\end{equation}
and for $k,K,N\to\infty$, s.t.\ $\frac{K}{N} \to \alpha <\infty$ and $\frac{k}{K}\to \nu\in[0,1)$, it follows that
\begin{equation}\label{App: eq: Limiting Cholesky Square Diagonals}
L_{kk}^2 \xrightarrow[k,K,N\to\infty]{} L^2(\nu)  \triangleq {1-\nu\alpha} \ , \quad \alpha\in(0,1] \quad .
\end{equation}
\end{lem}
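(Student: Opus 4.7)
My plan is to prove the two claims of the lemma in sequence. For the identity \eqref{App: eq: Relation between the Cholesky diagonal and entires of invHH}, I would exploit the structural consequences of the LQ-factorization $\Mat{H} = \Mat{L}\bar{\textbf{Q}}$. Since $\bar{\textbf{Q}}\bar{\textbf{Q}}^\dag = \Mat{I}$, one has $\Mat{H}\Mat{H}^\dag = \Mat{L}\Mat{L}^\dag$, which is the Cholesky factorization of the Gram matrix. Because $\Mat{L}$ is lower triangular, the top-left $k\times k$ block of $\Mat{L}\Mat{L}^\dag$ depends only on the first $k$ rows of $\Mat{L}$, giving $\Mat{H}^{(k)}{\Mat{H}^{(k)}}^\dag = \Mat{L}^{(k)}{\Mat{L}^{(k)}}^\dag$, where $\Mat{L}^{(k)}$ is the top-left $k\times k$ block of $\Mat{L}$, itself lower triangular with diagonal $(L_{11},\dots,L_{kk})$. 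Inverting yields $(\Mat{H}^{(k)}{\Mat{H}^{(k)}}^\dag)^{-1} = {\Mat{L}^{(k)}}^{-\dag}{\Mat{L}^{(k)}}^{-1}$, and since $[{\Mat{L}^{(k)}}^{-1}]_{jk}=0$ for $j<k$ while $[{\Mat{L}^{(k)}}^{-1}]_{kk}=1/L_{kk}$, only a single term in $\sum_j |[{\Mat{L}^{(k)}}^{-1}]_{jk}|^2$ survives, proving $[(\Mat{H}^{(k)}{\Mat{H}^{(k)}}^\dag)^{-1}]_{kk}=1/L_{kk}^2$.

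For the almost-sure limit \eqref{App: eq: Limiting Cholesky Square Diagonals}, the plan is to recast $L_{kk}^2$ as a quadratic form and then invoke a trace-lemma type concentration result. A Schur-complement computation on $\Mat{G}_k = \Mat{H}^{(k)}{\Mat{H}^{(k)}}^\dag$, partitioned so that the last row/column corresponds to the $k$-th row $\vct{h}_k^T$ of $\Mat{H}$, gives
\begin{equation}
L_{kk}^2 \;=\; \|\vct{h}_k\|^2 \,-\, \vct{h}_k^T{\Mat{H}^{(k-1)}}^\dag\bigl(\Mat{H}^{(k-1)}{\Mat{H}^{(k-1)}}^\dag\bigr)^{-1}\Mat{H}^{(k-1)}\vct{h}_k^{*}
\;=\;\vct{h}_k^T\,\Mat{P}_{k-1}^{\perp}\,\vct{h}_k^{*},
\end{equation}
where $\Mat{P}_{k-1}^{\perp}$ is the orthogonal projection onto the null space of $\Mat{H}^{(k-1)}$, a.s.\ of dimension $N-(k-1)$ under the stated i.i.d.\ assumption. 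Conditioning on $\Mat{H}^{(k-1)}$, the vector $\vct{h}_k$ is independent of $\Mat{P}_{k-1}^{\perp}$ with i.i.d.\ zero-mean components of variance $1/N$ and finite fourth moment, so a standard quadratic-form concentration lemma (of Bai--Silverstein type) yields
\begin{equation}
\vct{h}_k^T\Mat{P}_{k-1}^{\perp}\vct{h}_k^{*} \,-\, \frac{1}{N}\,\mathrm{tr}\bigl(\Mat{P}_{k-1}^{\perp}\bigr) \;\xrightarrow{\mathrm{a.s.}}\;0.
\end{equation}
Since $\mathrm{tr}(\Mat{P}_{k-1}^{\perp})/N = (N-k+1)/N \to 1-\nu\alpha$ under the scaling $k/K\to\nu$, $K/N\to\alpha$, the claim $L_{kk}^2\to 1-\nu\alpha$ follows.

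The main obstacle I anticipate is that the trace-lemma is naturally stated for a fixed index, whereas the lemma requires the conclusion to hold for any index $k$ with $k/K\to\nu$. To handle this I would prove a quantitative version of the concentration inequality with a fourth-moment bound of order $O(N^{-2})$, then combine Borel--Cantelli with a union bound over the at most $K$ indices at each system size to upgrade convergence in probability to almost-sure convergence along the requisite subsequence. A secondary technicality is verifying that $\Mat{H}^{(k-1)}{\Mat{H}^{(k-1)}}^\dag$ is a.s.\ invertible for $k-1<N$; this holds whenever the entries of $\Mat{H}$ admit an absolutely continuous distribution (as in the Rayleigh fading model we ultimately specialize to) and is otherwise a standard rank statement for i.i.d.\ random matrices. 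Both points are treated in detail in the Caire--Shamai reference we cite, and the limit $L^2(\nu)=1-\nu\alpha$ admits the natural geometric reading that $L_{kk}^2$ measures the residual energy of the $k$-th row after projecting away the subspace spanned by the previous $k-1$ rows.
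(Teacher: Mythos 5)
The paper does not prove this lemma itself: it attributes the statement directly to \cite[Lemma 3]{Caire-Shamai-2003}. Your proposal therefore supplies an independent argument, and it follows the standard route that one would expect the cited reference to use. The first part (the algebraic identity) is correct and clean: $\Mat{H}\Mat{H}^\dag=\Mat{L}\Mat{L}^\dag$ is indeed the Cholesky decomposition, the leading $k\times k$ principal submatrix of $\Mat{L}\Mat{L}^\dag$ is $\Mat{L}^{(k)}{\Mat{L}^{(k)}}^\dag=\Mat{H}^{(k)}{\Mat{H}^{(k)}}^\dag$, and the $(k,k)$ entry of ${\Mat{L}^{(k)}}^{-\dagger}{\Mat{L}^{(k)}}^{-1}$ collapses to $|[{\Mat{L}^{(k)}}^{-1}]_{kk}|^2=1/L_{kk}^2$ by lower-triangularity. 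The Schur-complement rewriting $L_{kk}^2=\vct{h}_k^T\Mat{P}_{k-1}^\perp\vct{h}_k^*$ and the appeal to quadratic-form concentration on $\mathrm{tr}(\Mat{P}_{k-1}^\perp)/N=(N-k+1)/N\to 1-\nu\alpha$ is exactly the right geometric picture, and it recovers $L^2(\nu)=1-\nu\alpha$.

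There is, however, a genuine gap in the concentration/union-bound step as you describe it. Under the hypothesis of finite fourth moment, the Bai--Silverstein trace lemma yields a \emph{second}-moment bound $E\bigl|\vct{h}_k^T\Mat{P}_{k-1}^\perp\vct{h}_k^*-\tfrac{1}{N}\mathrm{tr}\,\Mat{P}_{k-1}^\perp\bigr|^2=O(N^{-1})$; the \emph{fourth}-moment bound $O(N^{-2})$ you invoke requires finite \emph{eighth} moment of the entries, which is not assumed. Moreover, even granting $E|\cdot|^4=O(N^{-2})$, a Markov/union bound over the $K=O(N)$ indices gives a failure probability $O(N^{-1})$, which is not summable, so Borel--Cantelli does not close as stated. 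To obtain an almost-sure statement under only a fourth-moment assumption one must either truncate the entries (the standard device in Marchenko--Pastur proofs), or restrict the claim to a single sequence $k_N$ with $k_N/K_N\to\nu$ and still pay a moment price, or else settle for convergence in probability. For the way the lemma is actually used in the paper --- to justify the Riemann-sum limit $\tfrac1K\sum_k R_k\to\int_0^1 R(\nu)\,\rd\nu$ in \eqref{App: eq: GTHP spectral efficiency for with integral over L-nu} --- weak convergence of the empirical distribution of $\{L_{kk}^2\}_k$ suffices, and this follows from the second-moment bound without any union bound. So the gap does not affect the downstream conclusion, but the almost-sure, uniform-in-$k$ claim as you argue it is not established by the stated moment hypothesis and Borel--Cantelli bookkeeping.
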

Omitting subscripts, the limiting spectral efficiency is thus given for either continuous or binary quantized input by (cf.\ \cite[Eq.\ (41)]{Caire-Shamai-2003})
\begin{align}
C^{\gthp}(\snr) &= \lim_{K,N\to\infty} \frac{1}{N} \sum_{k=1}^K R^{\gthp}_k(\snr) \\
&= \lim_{K,N\to\infty} \frac{K}{N} \frac{1}{K} \sum_{k=1}^K R^{\gthp}_k(\snr) \\
&= \alpha \int_0^1 R^{\gthp}(\nu,\snr) \, \rd \nu \quad , \label{App: eq: GTHP spectral efficiency for with integral over L-nu}
\end{align}
where for the continuous case we substitute
\begin{equation}\label{App: eq: rate function for continuous input and GTHP}
R^{\gthp}(\nu,\snr) = R^{\gthp}_C(\nu,\snr) \triangleq \frac{1}{2} \log_2 (12 L^2(\nu) \snr) + \int_{-\Delta}^\Delta f_{n_{\text{eff}}}(\zeta) \log_2 f_{n_{\text{eff}}}(\zeta) \, \rd \zeta \Biggr|_{\Delta= \sqrt{3 L^2(\nu) \snr}} \quad ,
\end{equation}
and for the case of discrete binary information bearing signals we substitue
\begin{equation}\label{App: eq: rate function for BPSK input and GTHP}
R^{\gthp}(\nu,\snr)=R^{\gthp}_\Q(\nu,\snr) \triangleq 
 \int_{-\Delta}^\Delta \left[ f_{n_{\text{eff}}}(\zeta) \log_2 f_{n_{\text{eff}}}(\zeta) - f_{y'_\Q}(\zeta) \log_2 f_{y'_\Q}(\zeta)\right] \, \rd \zeta \, \Biggr|_{\Delta= \sqrt{ 3 L^2(\nu) \, \snr}} \quad .
\end{equation}
The spectral efficiency for QPSK modulation satisfies (following the convention in \cite{Verdu-paper-low-snr-regime-02}): 
\begin{equation}\label{App: Relation between spectral efficiency of QPSK and BSPK for GTHP}
C^{\gthp}_\qpsk(\snr)=2C^{\gthp}_\bpsk(\snr/2) \quad ,
\end{equation}
where $C^{\gthp}_\bpsk(\snr)$ is given by \eqref{App: eq: GTHP spectral efficiency for with integral over L-nu} and \eqref{App: eq: rate function for BPSK input and GTHP}, and it can be expressed as a function of $\febno$ through \eqref{eq: Relation between SNR and EbNo}.
An analogous result for the case of continuous input can be readily obtained using \eqref{App: eq: rate function for continuous input and GTHP}. Both spectral efficiencies can be optimized with respect to the choice of the 
system load $\alpha$.

\if 0
\section{Proof of Entropy Formula}
We have from \eqref{eq:entropy_definition1}
\begin{eqnarray} 
{\sS(\beta)} &=& \beta \frac{\partial \left(\beta \sF(\beta)\right)}{\partial \beta} - \beta \sF(\beta) \\
&=& \beta \E(\beta) - \beta \sF(\beta) \quad .
\end{eqnarray}
Furthermore, we have from \eqref{eq: 1RSB: Claim - Free energy is given by the inner limit alone}, \eqref{eq: 1RSB: Identity used for replica analysis}, \eqref{eq: 1RSB: Definition of Xi-n}, and \eqref{eq: 1RSB: Representation of Xi-n as integral with I and G}
\begin{equation}
\beta\sF(\beta) =  - \lim_{n\rightarrow 0} \frac{1}{n} \lim_{K\rightarrow\infty} \frac{1}{K} \underbrace{\log \left(  \int  \e^{K \I(\Mat{Q})} \e^{-K \G(\Mat{Q})}\, \D \Mat{Q} \right)}_{=T(\beta)} \quad .
\end{equation}
Thus,
\begin{eqnarray}
{\sS(\beta)} &=& \beta \E(\beta) + \lim_{n\rightarrow 0} \frac{1}{n} \lim_{K\rightarrow\infty} \frac{1}{K} T(\beta) \\
&=& \lim_{n\rightarrow 0} \frac{1}{n} \left[  \beta {\rm Tr}\left(\Mat{Q}R(-\beta \Mat Q) \right)  + \lim_{K\rightarrow\infty} \frac{1}{K} T(\beta)  \right] \\
&=& \lim_{n\rightarrow 0} \frac{1}{n} \left[   \beta{\rm Tr}\left(\Mat{Q}R(-\beta \Mat Q) \right)    - \G(\Mat Q) + \I(\Mat Q) \right]\\
&=&  \lim_{n\rightarrow 0} \frac{1}{n} \left[   \beta{\rm Tr}\left(\Mat{Q}R(-\beta \Mat Q) \right)   - \int\limits_0^\beta {\rm Tr}\left(\Mat{Q}R(-\beta \Mat Q) \right) {\rm d}w  + \I(\Mat Q) \right]\\
&=&  \beta\E(\beta) - \int\limits_0^\beta \E(\beta) + \lim_{n\rightarrow 0} \frac{1}{n} \I(\Mat Q)\qquad,
\end{eqnarray}
with \eqref{eq: General expression for energy} and \eqref{GQint}.
Now, we find with \eqref{eq:energy of r-RSB} that all terms in $\E(\beta)$ that do not depend on $\beta$ disappear and we are left with
\begin{equation}
\sS(\beta)  =  \chi_r R(-\chi_r) - \int\limits_0^{\chi_r} R(-w) {\rm d} w + \lim_{n\rightarrow 0} \frac{1}{n} \I(\Mat Q).
\end{equation}
Note that $\e^{K\I(\Mat Q)}$ is the probability of the subshell $S(\Mat Q)$.
Thus, we find
\begin{equation}
\sS(\beta)  =  \chi_r R(-\chi_r) - \int\limits_0^{\chi_r} R(-w) {\rm d} w + \lim_{n\rightarrow 0} \frac{1}{n} \lim\limits_{K\to\infty} \frac 1K \log\Pr\left[S(\Mat Q)\right].
\end{equation}
The last term disappears unless there are exponentially many subshells that contain the saddle point solution. 
\fi


\bibliographystyle{IEEEtran}
\bibliography{IEEEabrv,Main_Reference_List}




\end{document}